\newcommand{\confversion}[1]{}
\newcommand{\fullversion}[1]{#1}
\newcommand{\ignore}[1]{}
\newenvironment{itemized}{
\begin{itemize} %[leftmargin=*]
  \setlength{\itemsep}{0pt}
  \setlength{\parskip}{0pt}
  \setlength{\parsep}{0pt}
}
{\end{itemize}}
\begin{document}

\title{Proof Complexity Modulo the Polynomial Hierarchy:\\
Understanding Alternation as a Source of Hardness}

\author{Hubie Chen\\
Universidad del Pa\'{i}s Vasco,
E-20018 San Sebasti\'{a}n,
Spain\\
\emph{and}
IKERBASQUE, Basque Foundation for Science,
E-48011 Bilbao,
Spain
}

\date{ } 

\maketitle

\begin{abstract}
\begin{quote}
We present and study a framework in which
one can present alternation-based lower bounds
on proof length in proof systems for quantified Boolean formulas.
A key notion in this framework is that of
\emph{proof system ensemble}, which is 
(essentially) a sequence of proof systems
where, for each, proof checking can be performed in 
the polynomial hierarchy.
We introduce a proof system ensemble
called \emph{relaxing QU-res} which is based on the
established proof system \emph{QU-resolution}.
Our main results include an exponential separation of 
the tree-like and general versions of relaxing QU-res,
and an exponential lower bound for relaxing QU-res;
these are analogs of 
classical results in propositional proof complexity.
\end{quote}
\end{abstract}
\fullversion{ \setcounter{page}{1}}
\confversion{
\setcounter{page}{0}
\newpage
}

%\newpage

\newtheorem{theorem}{Theorem}[section]
\newtheorem{conjecture}[theorem]{Conjecture}
\newtheorem{corollary}[theorem]{Corollary}
\newtheorem{proposition}[theorem]{Proposition}
\newtheorem{prop}[theorem]{Proposition}
\newtheorem{lemma}[theorem]{Lemma}
\newtheorem{remark}[theorem]{Remark}
\newtheorem{exercisecore}[theorem]{Exercise}
\newtheorem{examplecore}[theorem]{Example}

\newenvironment{example}
  {\begin{examplecore}\rm}
  {\hfill $\Box$\end{examplecore}}

\newenvironment{exercise}
  {\begin{exercisecore}\rm}
  {\hfill $\Box$\end{exercisecore}}

\newenvironment{proof}{\noindent\textbf{Proof\/}.}{$\Box$ \vspace{1mm}}

\newtheorem{researchq}{Research Question}

\newtheorem{newremarkcore}[theorem]{Remark}

\newenvironment{newremark}
  {\begin{newremarkcore}\rm}
  {\end{newremarkcore}}

\newtheorem{definitioncore}[theorem]{Definition}

\newenvironment{definition}
  {\begin{definitioncore}\rm}
  {\end{definitioncore}}

\newcommand{\hubie}[1]{{\bf Hubie:} #1 {\bf End.}}

\newcommand{\ppequiv}{\mathsf{PPEQ}}
\newcommand{\eq}{\mathsf{EQ}}
\newcommand{\iso}{\mathsf{ISO}}
\newcommand{\ppeq}{\ppequiv}
\newcommand{\ppiso}{\mathsf{PPISO}}
\newcommand{\boolppiso}{\mathsf{BOOL}\mbox{-}\mathsf{PPISO}}
\newcommand{\csp}{\mathsf{CSP}}
\newcommand{\gi}{\mathsf{GI}}
\newcommand{\ci}{\mathsf{CI}}
\newcommand{\relb}{\mathbf{B}}
\newcommand{\alga}{\mathbb{A}}
\newcommand{\algb}{\mathbb{B}}
\newcommand{\algab}{\mathbb{A}_{\relb}}

\newcommand{\idemp}{I}

\newcommand{\varv}{\mathcal{V}}
\newcommand{\variety}{\mathcal{V}}
\newcommand{\false}{\mathsf{false}}
\newcommand{\true}{\mathsf{true}}
\newcommand{\pol}{\mathsf{Pol}}
\newcommand{\inv}{\mathsf{Inv}}
\newcommand{\cc}{\mathcal{C}}
\newcommand{\alg}{\mathsf{Alg}}
\newcommand{\pitwo}{\Pi_2^p}
\newcommand{\sigmatwo}{\Sigma_2^p}
\newcommand{\pithree}{\Pi_3^p}
\newcommand{\sigmathree}{\Sigma_3^p}
\newcommand{\Sigmap}{\Sigma^p}
\newcommand{\Pip}{\Pi^p}

\newcommand{\fancyg}{\mathcal{G}}
\newcommand{\tw}{\mathsf{tw}}

\newcommand{\qc}{\mathsf{QC\mbox{-}MC}}
\newcommand{\rqc}{\mathsf{RQC\mbox{-}MC}}

\newcommand{\qcfo}{\mathrm{QCFO}}
\newcommand{\qcfofk}{\qcfo_{\forall}^k}
\newcommand{\qcfoek}{\qcfo_{\exists}^k}

\newcommand{\fo}{\mathrm{FO}}

\newcommand{\tup}[1]{\overline{#1}}

\newcommand{\nn}{\mathsf{nn}}
\newcommand{\bush}{\mathsf{bush}}
\newcommand{\width}{\mathsf{width}}

\newcommand{\un}{N^{\forall}}
\newcommand{\en}{N^{\exists}}

\newcommand{\ord}{\tup{u}}
\newcommand{\ordp}[1]{\tup{#1}}

\newcommand{\gc}{G^{-C}}

\newcommand{\prp}{\vec{P}}
\newcommand{\prpp}{\vec{P'}}

\newcommand{\qed}{}

\newcommand{\assign}{\mathrm{assign}}
\newcommand{\clause}{\mathrm{clause}}
\newcommand{\dom}{\mathrm{dom}}
\newcommand{\vars}{\mathrm{vars}}
\newcommand{\last}{\mathrm{last}}

\newcommand{\nats}{\mathbb{N}}

\newcommand{\res}{\upharpoonright}

\newcommand{\overnot}[1]{\overline{#1}}

\newcommand{\fancyf}{\mathcal{F}}

%\newpage

\section{Introduction} 
\label{sect:introduction}

%\paragraph{Background.}
{\bf Background.}
Traditionally, the area of \emph{propositional proof complexity}
studies proof length in propositional proof systems
for certifying the unsatisfiability of 
instances of the \emph{SAT problem}, which instances are
quantifier-free propositional 
formulas~\cite{CookReckhow74-lengthsofproofs,BeamePitassi98-survey,Segerlind07-survey}.
This line of study is supported by multiple motivations;
let us highlight a few.
First, while satisfiable formulas can be easily certified
by a satisfying assignment,
it is also natural to desire efficiently verifiable proofs
for unsatisfiable formulas (for instance, to
check that a SAT algorithm judged unsatisfiability correctly);
%in order to readily check unsatisfiability;
understanding whether and when proof systems have
succinct proofs is a prime concern of this area.
Relatedly, \emph{SAT algorithms} for deciding the SAT problem
can be typically shown to implicitly generate proofs 
in a proof system, and thus insight into proof length
in the resulting proof system can be used to gain insight
into the running-time behavior of SAT algorithms
(see for example the discussions in~\cite{BeameKautzSabharwal04-clauselearning,AtseriasFichteThurley11-learning}).
In addition, the question of whether or not there are
proof systems admitting \emph{polynomially bounded proofs}
is (when formalized) equivalent to the question of whether or not
NP is equal to coNP~\cite{CookReckhow74-lengthsofproofs}, 
and one can thus suggest that
studying proof length in propositional proof systems
sheds light on the relationship between these two complexity classes.

Over recent years, researchers have devoted increasing
attention to methods for solving the \emph{QBF problem},
a generalization of the SAT problem 
and a canonical PSPACE-complete problem; an instance of this problem
is a propositional formula where
each variable is either existentially or universally quantified.
(\emph{QBF} is short for \emph{quantified Boolean formula}.)
It is often suggested that the move to studying this
more general problem is based on advances
in the efficacy of SAT algorithms
(see for example~\cite{YuMalik05-validating}).
As reinforces this suggestion, 
let us point out that one can find 
QBF solution techniques which use SAT algorithms as
black-box, primitive components, and hence which
arguably conceive of and treat the SAT problem as feasibly solvable.
For instance, sKizzo, a QBF solver dating back to 2005, 
would convert the QBF being processed to a SAT instance
and then call a SAT solver, whenever 
this was \emph{affordable}~\cite{Benedetti05-skizzo}.
As another example, a different QBF solver
which extensively calls a SAT solver during a backtrack-style
search
was developed and studied~\cite{SamulowitzBacchus05-usingsat}.

The rise in the study of the QBF problem 
has resulted in the identification of a number of
core algorithmic techniques and corresponding proof systems
that aim to capture these 
 (see for example~\cite{BuningKarpinskiFlogel95-resolution,VanGelder12-contributions,EglyLonsingWidl13-ldresolution,JanotaMarquesSilva13-expansions,BalabanovWidlJiang14-qbfcomplexities,HeuleSeidlBiere14-unified,BeyersdorffChewJanota14-unification,BeyersdorffChewJanota15-qbf-calculi} 
 and the references therein).
We refer to these proof systems
as \emph{QBF proof systems};
they can be used as a basis for certifying a decision 
for a QBF instance.
One can motivate the study of QBF proof systems
in much the same way that the study of propositional proof systems
has been motivated; 
hence, these QBF proof systems
would seem to suggest
a new chapter in the study of proof complexity,
and a new domain for the existing lines of inquiry thereof.

However, one is immediately confronted with a dilemma
upon inspecting
the very basic question of whether or not a typical QBF
proof system requires long (exponentially sized) proofs---again,
a primary type of question
in traditional proof complexity.
As an example, let us discuss
\emph{Q-resolution}~\cite{BuningKarpinskiFlogel95-resolution},
a QBF proof system which is heavily studied and used,
in both theory and practice
(see for example~\cite{BalabanovJiang11-resolutionskolem,GoultiaevaVanGelderBacchus11-uniform,VanGelder12-contributions,JanotaGrigoreMarques-Silva13-preprocessing,JanotaMarquesSilva13-expansions,BalabanovWidlJiang14-qbfcomplexities,BeyersdorffChewJanota14-unification} and the references therein).
When applied to SAT instances
(viewed as instances of QBF where all variables are existentially quantified),
Q-resolution behaves identically to resolution
(a heavily studied propositional proof system), and hence
the known exponential lower bounds on 
resolution proof length~\cite{Haken85-resolution,Ben-SassonWigderson01-narrow}
transfer immediately to Q-resolution.
This observation leaves one with a lingering 
sentiment---which is often expressed by members of the community---that
 there is something left to be said.
After all, Q-resolution is defined on QBF instances,
which are substantially more general than SAT instances;
the observation does not yield any information
about how Q-resolution handles this extra generality,
that is, how it copes with alternation of quantifiers.
Indeed, there is a sharp disconnect
between observing a lower bound for a QBF proof system
via a set of SAT instances, and 
the mentioned treatment of the SAT problem, 
by QBF algorithms, as
a feasibly solvable primitive.
These considerations naturally lead to the question
of whether or not one can formulate and prove a lower bound
which arises from alternation.

%\paragraph{Contributions.}
{\bf Contributions.}
In this article, we present and study a framework
in which it is possible to present
such alternation-based lower bounds on proof length 
in QBF proof systems.

We define a \emph{proof system ensemble} 
to be an infinite collection of proof systems,
where in each proof system, whether or not a given 
string $\pi$ constitutes a proof of a given formula $\Phi$
can be checked in the polynomial hierarchy
(Definition~\ref{def:proof-system-ensemble}).
A proof system ensemble is considered to have 
\emph{polynomially bounded proofs} (for a language) 
if it contains a proof system
which has polynomially bounded proofs in the usual sense
(Definition~\ref{def:polybounded}).
As a result, it is straightforward to define
proof system ensembles that have succinct proofs
for any set of QBFs with bounded alternation,
such as a set of SAT instances
(and the proof system ensembles studied herein all have this property);
this in turn forces proof length lower bounds,
by nature, 
to arise from a
proof system's inability to cope with quantifier alternation.\footnote{
  Note that there is, a priori, a difference between allowing proof systems
  oracle access to the SAT problem---which 
  would be natural for modelling QBF solvers that 
  treat the SAT problem as feasibly solvable---and allowing oracle access to 
  arbitrary levels of the PH.  
  We focus on the latter for various reasons:
  the proof length lower bounds will arise from alternation;
  we believe that this results in a more robust model; 
  and, this focus causes the proof length lower bounds,
  which are here of primary interest, to be stronger.
}
In terms of complexity classes, 
the question of whether or not there 
exists a polynomially bounded proof system ensemble
for the QBF problem (or any other PSPACE-complete problem)
is equivalent to the question of whether or not PSPACE
is contained in PH, the polynomial hierarchy
(Proposition~\ref{prop:polybounded-iff-in-ph}).
Indeed, the relationship that traditional proof complexity
bears to the NP equals coNP question 
is analogous to the relationship between the present framework
and the PSPACE equals PH question.
(Let us point out that no direct implication is known between
these two open questions, and so, in a certain sense,
progress in one framework may proceed orthogonally to
progress in the other!)

One of our main motivations in pursuing this work
was to gain further insight
into Q-resolution; here, we focus on a slight extension,
\emph{QU-resolution}~\cite{VanGelder12-contributions}, where from existing clauses
one can derive new clauses in two ways:
by a
rule for eliminating literals on universally quantified variables
and
by resolving two clauses on any variable
(in Q-resolution, one can only resolve on existentially quantified variables).
Q-resolution, QU-resolution, and their relatives are typically
defined only for clausal QBFs---QBFs that consist of a quantifier prefix followed by a conjunction of clauses.
We show how to parameterize and lift QU-resolution
to obtain a proof system ensemble
which we call \emph{relaxing QU-res} which is 
in fact defined on arbitrary QBFs (indeed, it is defined on 
what we call \emph{quantified Boolean circuits}),
and not just those in clausal form; 
relaxing QU-res is the main proof system ensemble that we study.
Let us overview how we define it.
\begin{itemize}
\item We define an \emph{axiom} 
of a QBF to be a clause which is, in a certain precise sense,
entailed by the QBF (see Section~\ref{subsect:qu-res}).

\item We then show that, given a QBF $\Phi$ and a partial assignment $a$ to
some of its variables, one can define a QBF $\Phi[a]$
derived naturally from $\Phi$, where the variables on which $a$ is defined
have been instantiated 
(in a certain precise sense; see Section~\ref{subsect:relaxing}).
This QBF $\Phi[a]$ has the key property that
if it is false, then the clause corresponding to $a$ is an axiom
of the QBF $\Phi$ (see Proposition~\ref{prop:derived-clauses} for a precise statement).  
We view the notion of 
inferring clauses from the falsity of QBFs whose variables are
partially instantiated as highly natural; indeed, in the case of SAT,
performing such inferences
is a basis of modern backtracking SAT solvers that perform \emph{clause learning}.

\item 
%In general, when $\Phi$ is a QBF and $a$ is a partial
%assignment, the QBF $\Phi[a]$ may have a high number of alternations
%and may be hard to decide.   
Recall that each proof system in our proof system ensemble
may use, as an oracle, a level of the PH;
in particular, the QBF problem restricted to a constant
number of alternations may be used as an oracle.
%or equivalently, the QBF problem restricted to
%$\Pi_k$ prefixes for a constant $k$
%(these versions of the QBF problem are known to be complete
%for classes of the PH).
In order to infer clauses from a QBF $\Phi$ 
using the method just described,
we need a way of detecting falsity of QBFs having the form $\Phi[a]$.
But in general, this is difficult; such a QBF $\Phi[a]$ may 
have a high number of alternations, and thus
might not be immediately decidable using an oracle of the described form.
To the end of permitting the falsity detection of QBFs $\Phi[a]$
using such oracles, we define the notion of a
\emph{relaxation} of a QBF.
A relaxation of a QBF $\Phi$ is obtained from $\Phi$ by changing 
the order of the quantifier/variable pairs in the quantifier prefix;
roughly speaking, such a pair $Q v$ may be moved to the
left if $Q$ is the universal quantifier ($\forall$),
and may be moved to the right if $Q$ is the existential quantifier
($\exists$).  (See Section~\ref{subsect:relaxing} for the precise definition.)
A key property of this notion is that if a relaxation of a QBF $\Phi$
is false, then the QBF $\Phi$ is false (Proposition~\ref{prop:relaxation}).

With this notion of relaxation in hand,
we define, for each $k \geq 2$,
the set $H(\Phi, \Pi_k)$ to contain the axioms
that arise from QBFs $\Phi[a]$ having $\Pi_k$-relaxations
(relaxations with a $\Pi_k$ prefix)
that are false.  That is, in this set we collect the axioms 
obtainable
by detecting falsity of QBFs $\Phi[a]$
via the consideration of $\Pi_k$-relaxations.
(Hence, the detection is sound in that it is always correct, 
but it is not complete).
Note that it holds that
$$H(\Phi, \Pi_2) \subseteq H(\Phi, \Pi_3) \subseteq H(\Phi, \Pi_4) \subseteq \cdots.$$

%We then define, for each QBF $\Phi$ and for each $k \geq 2$, 
%a set of axioms, denoted by $H(\Phi, \Pi_k)$, which 
%contain increasingly more axioms as $k$ increases.

\item This gives us a sequence of versions of QU-resolution:
for each $k$, we obtain a version by defining a proof
to be a sequence of clauses derived from
the axioms $H(\Phi, \Pi_k)$ and the two aforementioned rules
of QU-resolution.
This sequence is the proof system ensemble
\emph{relaxing QU-res}.
Let us remark that each of these versions is 
sound and complete, in a precise sense 
(see Definition~\ref{def:proof-system-ensemble}
and Proposition~\ref{prop:relaxing-qu-res-an-ensemble}).
\end{itemize}

%This sequence forms
%our main proof system ensemble, which we call
%\emph{relaxing QU-res}.

%Let us remark that the definition
%of the axiom sets $H(\Phi, \Pi_k)$ is based on
%what we call \emph{relaxations} of QBFs;
%the way these are defined, it automatically holds
A couple of remarks are in order.
%The way that relaxations are defined, it automatically holds
First, note that the empty clause is an axiom in $H(\Phi, \Pi_k)$
whenever $\Phi$ is a false QBF whose quantifier prefix is $\Pi_k$.
Consequently, relaxing QU-res is polynomially bounded
on any set of false QBFs 
\confversion{having bounded alternation.}\fullversion{having bounded alternation (this is discussed in Section~\ref{subsect:relaxing}).}
Let us also note that although here we explicitly lift QU-resolution
to a proof system ensemble, the approach that we take here
can be applied to analogously lift 
any proof system which is based on deriving clauses
from a set of axiom clauses.

Apart from the formulation of the framework,
our main results are as follows.
We prove an exponential separation between the
tree-like and general versions of relaxing QU-res
(Section~\ref{sect:separation-tree-like}),
by exhibiting a set of formulas which have polynomial size
QU-resolution proofs, but which require exponential size proofs
in tree-like relaxing QU-res; this gives
an alternation-based analog of the
known separation between tree-like and general 
resolution~\cite{BEGJ00-relative,Ben-SassonImpagliazzoWigderson04-separation}.
Tree-like QU-resolution proofs can be viewed
as the traces of a natural backtrack-style QBF decision procedure
(this is evident from the viewpoint in Section~\ref{subsect:graph-based-view},
and is also developed explicitly in~\cite[Section 4.3]{Chen14-beyond-qresolution}),
and so this separation formally differentiates the
power of such backtracking and general QU-resolution.
The lower bound of this separation is based on a 
prover-delayer game for tree-like QU-resolution proofs
(Section~\ref{sect:prover-delayer-game}), 
which can be viewed as a generalization of
a known prover-delayer game
for tree-like resolution~\cite{PudlakImpagliazzo00-lowerbound};
note that recently and independently of our work~\cite{Chen14-pc-corr}, 
a game similar to ours was presented
for tree-like Q-resolution~\cite{BeyersdorffChewSreenivasaiah14-game}.
We also prove an exponential lower bound for
relaxing QU-res (Section~\ref{sect:lower-bound-relaxing-qu-res}).

All in all, the ideas and techniques
developed in this work draw upon and interface concepts
from two-player game interaction, proof complexity,
and quantified propositional logic.
We believe that further progress could benefit
from creative input from each of these areas, and
certainly look forward to future research on 
the presented framework.

\fullversion{{\bf Note that some proofs have been deferred to the appendix.}}

\section{Preliminaries}
\label{sect:preliminaries}

For each integer $k$, we 
use $[k]$ to denote the set
that is equal to $\{ 1, \ldots, k \}$
when $k \geq 1$, and that is equal to the empty set $\emptyset$
when $k < 1$.
We use $\nats$ to denote the natural numbers $\{ 0, 1, 2, \ldots \}$.
%and $\nats^+$ to denote the positive natural numbers
%$\{ 1, 2, 3, \ldots \}$.

We use $\dom(f)$ to indicate the domain of a function.
A function $f$ is a 
\emph{restriction} of a function $g$
if $\dom(f) \subseteq \dom(g)$ and,
for each $a \in \dom(f)$, it holds that $g(a) = f(a)$;
when this holds, we also say that $g$ is an 
\emph{extension} of $f$.
When $f$ is a function, we use $f[a \to b]$
to denote the function on domain $\dom(f) \cup \{ a \}$
that maps $a$ to $b$, and otherwise behaves like $f$.
We write $f \res S$ to denote the restriction
of a function $f$ to the set $S$.
We say that two functions $f$ and $g$ \emph{agree}
if for each element $a \in \dom(f) \cap \dom(g)$,
it holds that $f(a) = g(a)$.

When $A$ and $B$ are sets, we use $[A \to B]$
to denote the set of functions from $A$ to $B$.

%\paragraph{Clauses.}
{\bf Clauses.}
In this article, we employ the following terminology to discuss
clauses.
A \emph{literal} is a propositional variable $v$ or the negation
$\overnot{v}$ thereof.
Two literals are \emph{complementary}
if one is a variable $v$ and the other is $\overnot{v}$;
each is said to be the \emph{complement} of the other.
A \emph{clause} is a disjunction of literals that 
contains, for each variable, at most one literal on the variable.
A clause is sometimes viewed as the set of the literals that
it contains; two clauses are considered equal
if they are equal as sets.
A clause is \emph{empty} if it does not contain any literals.
The variables of a clause are simply the variables that
underlie
the clause's literals, and the set of variables of a clause $\alpha$
is denoted by $\vars(\alpha)$.
When $\alpha$ is a clause, we use $\assign(\alpha)$
to denote the unique propositional assignment $f$
with $\dom(f) = \vars(\alpha)$ such that $\alpha$ 
evaluates to false under $f$.
In the other direction, when $f$
is a propositional assignment, we use
$\clause(f)$ to denote the unique clause $\alpha$
with $\vars(\alpha) = \dom(f)$ that evaluates to false under $f$.
\emph{We will freely and tacitly interchange between
a clause $\alpha$ and its corresponding assignment $\assign(\alpha)$.}
A clause $\gamma$ is 
a \emph{resolvent} of two propositional clauses $\alpha$ and $\beta$
on variable $v$
if there exists a literal $L \in \alpha$ 
such that its complement $M$
is in $\beta$, 
$\gamma = (\alpha \setminus \{ L \}) \cup (\beta \setminus \{ M \})$,
and $v$ is the variable underlying $L$ and $M$.
%A clause $\gamma$ is \emph{falsified} by a propositional assignment
%$a$
%if $a$ is defined on $\vars(\gamma)$ and each literal in $\gamma$
%evalutes to false under $a$.

%\paragraph{Quantified Boolean circuits and formulas.}
{\bf Quantified Boolean circuits and formulas.}
We assume basic familiarity with
quantified propositional logic.
A \emph{QBC} (short for \emph{quantified Boolean circuit})
consists of a quantifier prefix
$\prp = Q_1 v_1 \ldots Q_n v_n$,
where each $Q_i$ is a quantifier in $\{ \forall, \exists \}$
and each $v_i$ is a propositional variable;
and, a Boolean circuit $\phi$
built from the constants $0$ and $1$, propositional variables
among $\{ v_1, \ldots, v_n \}$, and
the gates AND ($\wedge$), OR ($\vee$), and NOT ($\neg$).
We refer to the computational problem of deciding whether or not
a QBC is false as the \emph{QBC problem}.
For brevity, we sometimes refer to 
existentially quantified variables as $\exists$-variables,
and
universally quantified variables as $\forall$-variables.
While it is typical to notate a QBC by simply specifying
the prefix $\prp$ immediately followed by the circuit $\phi$,
we will typically separate these two parts by a colon
for the sake of readability,
using for example $\prp:\phi$.
We assume that each 
quantifier prefix does not contain repeated variables.
When $\Phi = \prp:\phi$ is a QBC, 
by a \emph{partial assignment of $\Phi$},
we refer to a propositional assignment
$f: S \to \{ 0, 1 \}$
defined on a subset $S$ of the variables appearing in $\prp$.
A \emph{QBF} is a QBC $\prp:\phi$ where $\phi$ is 
a Boolean formula.  A \emph{clausal QBF}
is a QBF $\prp:\phi$ where
$\phi$ is the conjunction of clauses.

%\paragraph{Quantifier prefixes.}
{\bf Quantifier prefixes.}
Let $i \geq 1$.
A quantifier prefix $\prp = Q_1 v_1 \ldots Q_n v_n$
is $\Pi_i$ if $Q_1 \ldots Q_n$, viewed as a string over the alphabet
$\{ \forall, \exists \}$, is contained in the language denoted
by the regular expression 
$\forall^* \exists^* \forall^* \exists^* \ldots$,
which contains $i$ starred quantifiers,  
beginning with $\forall^*$ and alternating;
$\Sigma_i$ is defined similarly, but with respect to
the regular expression
$\exists^* \forall^* \exists^* \forall^* \ldots$.

The following notation is relative to a quantifier prefix
$\prp = Q_1 v_1 \ldots Q_n v_n$; when we use it,
the prefix will be clear from context.
We write $v_i \preceq v_j$ if $i \leq j$
or if $j < i$ and $Q_j = Q_{j+1} = \cdots = Q_i$.
We extend this binary relation (and others) to sets in 
the following natural way:
when $U$ and $V$ are sets of variables,
we write $U \preceq V$ if for each $u \in U$ and each $v \in V$,
it holds that $u \preceq v$.  We also write,
for example, that $U \preceq v$ for a single variable $v$
when $U \preceq \{ v \}$.
We write $v_i \equiv v_j$ if $v_i \preceq v_j$ and 
$v_j \preceq v_i$.  
It is straightforward to verify that $\equiv$ is
an equivalence relation; we refer to each equivalence class
of $\equiv$ as a \emph{quantifier block}. 
We write $v_i \precneq v_j$ if $v_i \preceq v_j$ and
$v_i \not\equiv v_j$.
When $S$ is a set of variables, we use
$\last(S)$ to denote the variable of $S$ appearing last
in the quantifier prefix, that is, the variable $v_m$,
where $m = \max \{ i ~|~ v_i \in S \}$.
Typically, when we use the function $\last(S)$,
it is in conjunction with the just-defined binary relations,
and hence what is most relevant will be the 
relative location of the quantifier block 
of $\last(S)$.

%\paragraph{Strategies.}
{\bf Strategies.}
Let $\Phi = \prp:\phi$ be a QBC;
let $X$ denote the $\exists$-variables of $\Phi$,
and let $Y$ denote the $\forall$-variables of $\Phi$.
When $x \in X$, define $Y_{<x}$ to be the set
of variables $\{ y \in Y ~|~ y \precneq x \}$;
dually, when $y \in Y$, define $X_{<y}$
to be the set
of variables $\{ x \in X ~|~ x \precneq y \}$.

An \emph{$\exists$-strategy} is a sequence of mappings
$ \sigma = (\sigma_x)_{x \in X}$ where each $\sigma_x$
is a mapping from $[Y_{<x} \to \{ 0, 1 \}]$ to $\{ 0, 1 \}$.
When $\tau: Y \to \{ 0, 1 \}$ is an assignment to the
universally quantified variables,
we use $\langle \sigma, \tau \rangle$ to denote the
assignment $f$ defined by $f(y) = \tau(y)$ for each $y \in Y$
and $f(x) = \sigma_x(\tau \res Y_{<x})$ for each $x \in X$.
We say that $(\sigma_x)_{x \in X}$ 
is a \emph{winning $\exists$-strategy} if for every assignment
$\tau: Y \to \{ 0, 1 \}$, it holds that 
the assignment $\langle \sigma, \tau \rangle$
satisfies $\phi$.
A \emph{model} of $\Phi$ is defined to be a
winning $\exists$-strategy of $\Phi$.

Dually, we define
a \emph{$\forall$-strategy} to be a sequence of mappings
$ \tau = (\tau_y)_{y \in Y}$ where each $\tau_y$
is a mapping from $[X_{<y} \to \{ 0, 1 \}]$ to $\{ 0, 1 \}$.
When $\sigma: X \to \{ 0, 1 \}$ is an assignment to the
existentially quantified variables,
we use $\langle \tau, \sigma \rangle$ to denote the
assignment $f$ defined by $f(x) = \sigma(x)$ for each $x \in X$
and $f(y) = \tau_y(\sigma \res X_{<y})$ for each $y \in Y$.
We say that $(\sigma_y)_{y \in Y}$ 
is a \emph{winning $\forall$-strategy} if for every assignment
$\sigma: X \to \{ 0, 1 \}$, it holds that 
the assignment $\langle \tau, \sigma \rangle$
falsifies $\phi$.

The following are well-known facts that we will treat as basic.

\begin{prop}
Let $\Phi$ be a QBC.
\begin{itemized}

\item There exists a winning $\exists$-strategy for $\Phi$
(that is, a model of $\Phi$)
if and only if $\Phi$ is true.

\item There exists a winning $\forall$-strategy for $\Phi$
if and only if $\Phi$ is false.

\end{itemized}
\end{prop}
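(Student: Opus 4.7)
The plan is to prove both bullets by induction on the number $n$ of variables in the quantifier prefix of $\Phi$, reducing the second bullet to the first by duality. Concretely, given $\Phi = \prp:\phi$, I form $\Phi'$ by swapping every $\forall$ with $\exists$ in $\prp$ and replacing $\phi$ by its negation; then $X$ and $Y$ exchange roles, winning $\forall$-strategies for $\Phi$ biject with winning $\exists$-strategies for $\Phi'$ (the dependency sets $X_{<y}$ in $\Phi$ become exactly the sets $Y_{<y}$ in $\Phi'$), and $\Phi$ is false iff $\Phi'$ is true. This reduces the second bullet to the first. For the base case $n = 0$, both $X$ and $Y$ are empty; the unique $\exists$-strategy is the empty sequence, and it is vacuously winning iff $\phi$ evaluates to $1$, which is iff $\Phi$ is true.

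For the inductive step, write $\Phi = Q_1 v_1 \prpp:\phi$ and set $\Phi_b = \prpp:\phi[v_1 := b]$ for each $b \in \{0,1\}$; both $\Phi_b$ have $n-1$ quantified variables, so the inductive hypothesis applies. If $Q_1 = \exists$, then by the semantics $\Phi$ is true iff $\Phi_0$ is true or $\Phi_1$ is true; by IH, iff some $\Phi_b$ has a winning $\exists$-strategy $\sigma'$. Since $Y_{<v_1} = \emptyset$, I extend $\sigma'$ to a strategy $\sigma$ for $\Phi$ by setting $\sigma_{v_1}$ to the constant map with value $b$, and verify that $\sigma$ is winning. Conversely, any winning $\sigma$ for $\Phi$ restricts to a winning strategy for $\Phi_{\sigma_{v_1}(\emptyset)}$. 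If $Q_1 = \forall$, then $\Phi$ is true iff both $\Phi_0$ and $\Phi_1$ are true; note that because $v_1$ is universal and precedes every $x \in X$ (in a strictly earlier quantifier block), $v_1 \in Y_{<x}$ for all such $x$. By IH there are winning strategies $\sigma^0, \sigma^1$ for $\Phi_0, \Phi_1$, which I glue into $\sigma$ for $\Phi$ by $\sigma_x(\tau) := \sigma^{\tau(v_1)}_x(\tau \res (Y_{<x} \setminus \{v_1\}))$; the converse splits a given winning strategy for $\Phi$ by freezing $v_1$ to each Boolean value in turn.

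The only obstacle worth flagging is the bookkeeping in the $\forall$ case of the inductive step: one must verify that, for every $\tau : Y \to \{0,1\}$, the combined assignment $\langle \sigma, \tau \rangle$ on $\Phi$ agrees with $\langle \sigma^{\tau(v_1)}, \tau \res (Y \setminus \{v_1\}) \rangle$ on $\Phi_{\tau(v_1)}$, so that satisfaction of $\phi[v_1 := \tau(v_1)]$ under the latter transfers to satisfaction of $\phi$ under the former. This amounts to routine unfolding of definitions and reindexing of the sets $Y_{<x}$, not a conceptual difficulty; the real content of the proposition is merely the recursive two-player game semantics of QBC evaluation.
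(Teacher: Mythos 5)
Your proof is correct. Note, however, that the paper does not actually prove this proposition: it introduces it with the sentence ``The following are well-known facts that we will treat as basic,'' so there is no argument in the text to compare against. Your induction on the prefix length, combined with the duality reduction (negate the matrix, flip every quantifier, observe that the relation $\precneq$ and hence the dependency sets are preserved), is the standard way to establish this folklore equivalence between the game-theoretic and the recursive truth semantics of QBCs. The two bookkeeping points you flag are exactly the right ones to check: that removing the outermost variable does not change the sets $Y_{<x}$ for the remaining existential variables (true, since a universal $y$ and an existential $x$ never share a quantifier block, so $y \precneq x$ is just prefix order), and that $\langle \sigma, \tau\rangle$ on $\Phi$ restricts to the corresponding combined assignment on the instantiated sentence. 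Both are routine, and your argument goes through.
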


\section{Proof system ensembles}

In this section, we formalize the notion of  
\emph{proof system ensemble} and present some basic
associated notions.

For each $m \geq 1$,
fix $S(m)$ to be 
the QBC problem restricted to QBCs
having a $\Sigma_m$ prefix, which is
a $\Sigmap_m$-complete problem;
for $m = 0$, fix $S(m)$ to be a polynomial-time decidable problem.

Let $O$ be a language;
when discussing an algorithm $A$ that makes oracle calls,
we use $A^{O}$ to denote the instantiation of $A$
where oracle calls are answered according to $O$.

\begin{definition}
\label{def:proof-system-ensemble}
A \emph{proof system ensemble $(A,r)$
for a language $L$}
consists of 
an algorithm $A$ which 
may make oracle calls and
receives inputs of the form 
$(k,(x,\pi))$ where $k \in \nats$ and $x$
and $\pi$ are strings;
and, a computable function $r: \nats \to \nats$
such that: 
\begin{itemized}

\item For each $k \in \nats$,
there exists a polynomial $p_k$
such that (for each pair $(x, \pi)$)
the algorithm $A^{S(r(k))}$
halts on an input $(k, (x, \pi))$
within time $p_k(|(x, \pi)|)$.

\item For each $k \in \nats$,
when $L_k$ is set to 
$\{ (x,\pi) ~|~ (k,(x,\pi)) \textup{ is accepted by $A^{S(r(k))}$} \}$,
it holds that the language
$\{ x ~|~ \exists \pi \textup{ such that $(x,\pi) \in L_k$} \}$
is equal to $L$.

\end{itemized}
\end{definition}

Let us provide an intuitive explanation of 
Definition~\ref{def:proof-system-ensemble}.
For each fixed value of $k$, the algorithm $A$ provides
a proof system for the language $L$;
on inputs of the form $(k, (x, \pi))$,
the algorithm is provided oracle access to $S(r(k))$,
and needs to accept or reject within polynomial time
(in $|(x,\pi)|$).
Acceptance indicates that $\pi$ is judged to be a proof 
that $x \in L$.
The second condition in the definition
states that each such proof system 
is sound and complete,
that is, for each fixed $k$,
an arbitrary string $x$ is in $L$
iff there exists a string $\pi$
such that $(k,(x,\pi))$ is accepted by $A$.

We use the following terminology to present lower bounds
on proof size in proof system ensembles.

\begin{definition}
\label{def:exp-proofs}
Let $Z$ be a set of functions from $\nats$ to $\nats$.
A proof system ensemble $(A, r)$ 
\emph{requires proofs of size $Z$}
on a sequence $\{ \Phi_1, \Phi_2, \ldots \}$ of instances
if for each $k$, there exists $z \in Z$
where
(for all $n \geq 1$ and all strings $\pi$)
it holds that $(k, (\Phi_n, \pi)) \in L_k$ implies
$|\pi| \geq z(n)$.
Here, $|\pi|$ denotes the size of $\pi$.
We also apply this terminology to other measures
defined on proofs.
\end{definition}

We say that a function $f$ mapping strings to strings is
a \emph{polynomial-length function} 
if there exists a polynomial $q$ such that,
for each string $x$, it holds that $|f(x)| \leq q(|x|)$.

\begin{definition}
\label{def:polybounded}
A proof system ensemble $(A, r)$ is
\emph{polynomially bounded} on a language $L$
if there exists $k \in \nats$
and there exists a polynomial-length function $f$ 
(mapping strings to strings)
such that
the following holds: %for each $m \geq k$:
if $x \in L$, then it holds that $(x, f(x)) \in L_k$,
where $L_k$ is defined as in
Definition~\ref{def:proof-system-ensemble}.
\end{definition}

\begin{prop}
\label{prop:polybounded-iff-in-ph}
There exists a polynomially bounded proof system ensemble
for a language $L$ if and only if $L$ is in the polynomial hierarchy.
\end{prop}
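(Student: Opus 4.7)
The plan is to prove both directions by straightforward translation between the syntactic notion of ``polynomially bounded proof system ensemble'' and the standard alternating-quantifier characterization of the polynomial hierarchy, using that $S(m)$ is $\Sigmap_m$-complete.

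For the forward direction, suppose $(A,r)$ is polynomially bounded on $L$. Unpacking Definition~\ref{def:polybounded}, there exist a fixed $k \in \nats$ and a polynomial $q$ such that every $x \in L$ admits a string $\pi = f(x)$ with $|\pi| \leq q(|x|)$ and $(x,\pi) \in L_k$. Combined with the soundness/completeness clause of Definition~\ref{def:proof-system-ensemble} (which says $\{x : \exists \pi, (x,\pi) \in L_k\} = L$), we get the equivalence: $x \in L$ iff there exists $\pi$ with $|\pi| \leq q(|x|)$ such that $A^{S(r(k))}$ accepts $(k,(x,\pi))$ within its polynomial-time bound $p_k(|(x,\pi)|)$. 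Since $S(r(k))$ is $\Sigmap_{r(k)}$-complete, this places $L$ in $\mathrm{NP}^{\Sigmap_{r(k)}} = \Sigmap_{r(k)+1} \subseteq \mathrm{PH}$. (For the degenerate case $r(k) = 0$, the oracle is polynomial-time decidable and we simply land in $\mathrm{NP}$.)

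For the converse, suppose $L \in \mathrm{PH}$, so $L \in \Sigmap_m$ for some $m \geq 1$ (the case $m = 0$ is handled by a trivial ensemble whose $A$ decides $L$ directly and accepts the empty proof). Using the standard characterization $\Sigmap_m = \mathrm{NP}^{\Sigmap_{m-1}}$ together with the $\Sigmap_{m-1}$-completeness of $S(m-1)$, fix a polynomial $p$ and a deterministic polynomial-time oracle algorithm $B$ such that $x \in L$ iff there exists $\pi$ with $|\pi| \leq p(|x|)$ and $B^{S(m-1)}$ accepts $(x,\pi)$. Define the ensemble $(A,r)$ by setting $r$ to be the constant function $k \mapsto m-1$, and letting $A$, on input $(k,(x,\pi))$, first check that $|\pi| \leq p(|x|)$ (rejecting otherwise) and then simulate $B^{S(r(k))}(x,\pi)$. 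The two bullet conditions of Definition~\ref{def:proof-system-ensemble} are immediate: $A$ halts in polynomial time given its oracle, and for every $k$ the projection of $L_k$ equals $L$. Taking $f$ to be a map sending each $x \in L$ to any valid witness $\pi$ shows polynomial boundedness in the sense of Definition~\ref{def:polybounded}.

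There is no real obstacle: the one mildly delicate point, and the step I would be most careful about, is verifying that the forward direction uses both halves of the proof-system-ensemble contract---soundness gives $(\exists \pi)[A \text{ accepts}] \Rightarrow x \in L$, while polynomial boundedness gives $x \in L \Rightarrow (\exists \pi \text{ of bounded length})[A \text{ accepts}]$---and that the resulting predicate on $x$ is literally of the form ``$\exists$ polynomially bounded $\pi$: deterministic poly-time with $\Sigmap_{r(k)}$ oracle,'' which is the textbook definition of $\Sigmap_{r(k)+1}$.
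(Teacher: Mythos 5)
Your proof is correct, and the forward direction matches the paper's argument almost exactly: bound the proof length by the polynomial $q$ from the polynomial-length function, nondeterministically guess $\pi$, and verify with an $S(r(k))$ oracle, landing in $\mathrm{NP}^{\Sigma^p_{r(k)}} \subseteq \mathrm{PH}$ (and using both the soundness/completeness clause and the boundedness clause, as you note).

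The backward direction differs in a small but real way. The paper uses the fact that $L \in \mathrm{PH}$ means $L$ is decidable outright by a deterministic polynomial-time machine $B$ with an $S(k')$ oracle (i.e., $L \in \mathrm{P}^{\Sigma^p_{k'}}$); it takes $A$ to be $B$ applied to $x$ alone, ignoring $\pi$, so that the empty string is a valid proof for every $x \in L$, and polynomial-boundedness is witnessed by the constant function. You instead use the verifier characterization $\Sigma^p_m = \mathrm{NP}^{\Sigma^p_{m-1}}$, so your $A$ runs the verifier on $(x,\pi)$ after checking $|\pi| \leq p(|x|)$, and the proof string is a genuine NP-style witness. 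Both constructions satisfy Definition~\ref{def:proof-system-ensemble} and Definition~\ref{def:polybounded}; the paper's is slightly leaner (no length check, trivial $f$), while yours produces a proof system ensemble whose proofs actually carry information. One minor point to make explicit in your version: the polynomial-length function $f$ must be defined on all strings, so you should set, e.g., $f(x)$ to the empty string when $x \notin L$; this is harmless since Definition~\ref{def:polybounded} only constrains $f(x)$ when $x \in L$.
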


\fullversion{
We next define notions of simulation between proof systems.

\begin{definition}
\label{def:simulates}
Let $(A, r)$ and $(A', r')$ be proof system ensembles
for a language $L$.

We say that $(A', r')$ \emph{simulates} $(A, r)$ 
if there exists a function $f: \nats \to \nats$
%with unbounded image
and a 
sequence of polynomial length functions 
$(g_k)_{k \in \nats}$ from strings to strings
such that, 
for each $k \in \nats$ and each $(x, \pi) \in L_k$,
it holds that $(x, g_k(\pi)) \in L'_{f(k)}$.
Here, $L_k$ and $L'_k$ are defined as
in Definition~\ref{def:proof-system-ensemble},
for $(A, r)$ and $(A', r')$, respectively.

We say that $(A', r')$ \emph{effectively simulates} $(A, r)$
if, in addition, the function $f$ is computable
and there is an algorithm that,
for each $k \in \nats$, computes $g_k(x)$ from $x$ 
within time $p_k(|x|)$,
where $p_k$ is a polynomial.
\end{definition}
}

\fullversion{
Under a mild assumption on proof system ensembles,
namely that (intuitively) they increase in strength 
as the parameter $k$ increases, it can be proved that,
when $(A,r)$ and $(A',r')$ are proof system ensembles
such that $(A', r')$ simulates $(A,r)$
and such that $(A,r)$ is polynomially bounded,
it holds that $(A',r')$ is polynomially bounded.
We will formalize and discuss this in the full version of the article.
%\begin{prop}
%Let $(A, r)$ and $(A', r')$ be proof system ensembles 
%for a language $L$.  
%If $(A', r')$ simulates $(A, r)$ and $(A', r')$ is 
%polynomially bounded,
%then $(A, r)$ is polynomially bounded.
%\end{prop}
}

\fullversion{
Let us remark that variations on 
Definitions~\ref{def:proof-system-ensemble}
and~\ref{def:simulates}
are certainly possible.
For example, one could require that the bounding
polynomials $(p_k)$ in 
Definition~\ref{def:proof-system-ensemble}
be computable, as a function of $k$.
Perhaps more interestingly, 
observe that no assumption is placed
on how these polynomials $(p_k)$ 
behave in aggregate;
one could, for instance, require that their degrees are all
bounded above by a constant, obtaining a definition
reminiscent of that of fixed-parameter tractability.
A similar comment can be offered for the polynomials 
associated to the functions $(g_k)$
from Definition~\ref{def:simulates}.
}

\section{Relaxing QU-resolution}

\subsection{QU-resolution}
\label{subsect:qu-res}

Let $\Phi = \prp:\phi$ be a QBC. 
We define an \emph{axiom set of $\Phi$} 
to be a set $H$ of clauses on variables of $\prp$
such that, for each $C \in H$,
$C$ is an \emph{axiom} of $\Phi$ in the following sense:
each model of $\prp:\phi$
is a model of $\prp:C$.
Let us give  examples.
First, if the QBC $\Phi$ is false,
then the empty clause is an axiom of $\Phi$.
Second, if $C$ is any clause which is entailed by $\phi$,
then $C$ is an axiom of $\Phi$. 
A case of this
is when $a$ is an assignment to all variables of $\Phi$
that falsifies $\phi$; then, $\clause(a)$ is entailed by $\phi$
and is an axiom of $\Phi$.

Relative to a QBC $\Phi = \prp:\phi$,
we say that
a clause $C$ is obtainable from a second clause $D$
by \emph{$\forall$-elimination} if there exists a literal $L \in D$
such that $C = D \setminus \{ L \}$
and the variable $y$ underlying $L$ is a $\forall$-variable and 
has $\vars(C) \preceq y$.

With these notions, we define QU-resolution 
for quantified Boolean circuits in the following way.

\begin{definition}
\label{def:qu-resolution-proof}
A \emph{QU-resolution proof} of a QBC $\Phi = \prp:\phi$
from an axiom set $H$ (of $\Phi$)
is a finite sequence of clauses
where each clause is either in $H$,
is obtainable from a previous clause by $\forall$-elimination,
or is obtainable from two previous clauses
as a resolvent; 
in the last two cases, we assume that the clause is annotated
with the previous clause(s) from which it is derived
(this is to provide a clean correspondence between
proofs and certain graphs to be defined, 
see Section~\ref{subsect:graph-based-view}).
The \emph{size} of such a proof is defined as the number of clauses.
Such a proof is said to be a \emph{falsity proof} if it
ends with the empty clause.
\end{definition}

\fullversion{
Note that in the case that $\Phi$ is a clausal QBF,
when $H$ is the set of clauses appearing in $\Phi$,
Definition~\ref{def:qu-resolution-proof}
essentially coincides with usual definitions of QU-resolution
in the literature 
(see for example~\cite{JanotaGrigoreMarques-Silva13-preprocessing}).
The only difference is that here, 
applying $\forall$-elimination eliminates just one 
universally quantified variable of a clause,
whereas many authors speak of \emph{$\forall$-reduction},
which (when applied to a clause) 
eliminates each universally quantified variable
that come after all existentially quantified variables.
One can simulate an instance of $\forall$-reduction by
applying $\forall$-elimination repeatedly.
}

It is a folklore and readily verified fact that when one has a 
clausal QBF $\Phi = \prp:\phi$
with clause set $H$, and $C$ appears in a QU-resolution
proof of $\Phi$ from $H$, then
any model of $\Phi$ is a model of $\prp:C$.
From this fact and the definition of axiom set,
we immediately obtain the following proposition.

\begin{prop}
\label{prop:soundness-qu-res}
Let $C$ be a clause appearing
in a QU-resolution proof of a QBC $\Phi = \prp : \phi$
from axiom set $H$.
Each model of $\prp : \phi$ is a model of $\prp : C$.
Consequently, if $C$ is the empty clause, then the QBC $\Phi$
is false.
\end{prop}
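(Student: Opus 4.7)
The plan is to prove the first assertion by induction on the position of $C$ in the QU-resolution proof sequence; the second assertion then follows immediately, since the empty clause $\prp : C$ has no model, so the existence of a model of $\prp : \phi$ would contradict the first assertion. There are three cases, corresponding to the three ways a clause in a QU-resolution proof can arise.

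In the base case, $C \in H$, and the conclusion is immediate from the definition of an axiom set of $\Phi$. In the resolution case, $C = (\alpha \setminus \{L\}) \cup (\beta \setminus \{\overline{L}\})$ where $\alpha$, $\beta$ earlier in the proof have, by induction hypothesis, each of their models containing every model of $\prp : \phi$. Fix a model $\sigma$ of $\prp : \phi$ and any $\forall$-assignment $\tau$; the combined assignment $\langle \sigma, \tau \rangle$ satisfies both $\alpha$ and $\beta$. Since $L$ and $\overline{L}$ cannot be simultaneously satisfied, at least one of these clauses is satisfied by a literal in $C$, so $\langle \sigma, \tau \rangle$ satisfies $C$; hence $\sigma$ is a model of $\prp : C$.

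The case I expect to require the most care is $\forall$-elimination, where $C = D \setminus \{L\}$ with $L$ a literal on a $\forall$-variable $y$ satisfying $\vars(C) \preceq y$. Fix a model $\sigma$ of $\prp : D$ (by induction) and any $\tau : Y \to \{0,1\}$; I want to show $\langle \sigma, \tau \rangle$ satisfies $C$. Let $\tau'$ denote the assignment obtained from $\tau$ by flipping the value of $y$. The key observation is that every variable $w \in \vars(C)$ satisfies $w \preceq y$ and $w \neq y$ (the latter because clauses do not repeat variables and the only literal on $y$ in $D$ was $L$). For existential $x \in \vars(C)$, this yields $x \precneq y$, so $y \notin Y_{<x}$ and hence $\sigma_x(\tau \res Y_{<x}) = \sigma_x(\tau' \res Y_{<x})$; for universal $z \in \vars(C)$ with $z \neq y$, we simply have $\tau(z) = \tau'(z)$. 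Thus $\langle \sigma, \tau \rangle$ and $\langle \sigma, \tau' \rangle$ agree on $\vars(C)$, so they agree on whether $C$ is satisfied. Since $\sigma$ is a model of $\prp : D$, both $\langle \sigma, \tau \rangle$ and $\langle \sigma, \tau' \rangle$ satisfy $D$; but they disagree on the value of $y$, so at most one of them can satisfy the literal $L$, forcing the other to satisfy some literal in $D \setminus \{L\} = C$. By the agreement noted above, both satisfy $C$, completing the induction.

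The main obstacle is the $\forall$-elimination step, and specifically the need to exploit the side condition $\vars(C) \preceq y$ precisely enough to conclude that flipping $\tau(y)$ does not alter the values assigned to any variable of $C$. Once this is properly set up via the $\precneq$ relation and the definition of $Y_{<x}$, the rest is essentially bookkeeping.
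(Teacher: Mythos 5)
Your proof is correct, and it supplies precisely the induction-on-the-proof-sequence argument that the paper invokes as a "folklore and readily verified fact" (reducing the general axiom-set case to the clausal case and then to the standard soundness of QU-resolution). The $\forall$-elimination case is handled carefully and correctly: the crucial use of $\vars(C)\preceq y$ to show $y\notin Y_{<x}$ for $\exists$-variables $x\in\vars(C)$, hence that flipping $\tau(y)$ leaves $\langle\sigma,\tau\rangle$ unchanged on $\vars(C)$, is exactly the right exploitation of the side condition.
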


\subsection{Relaxing}
\label{subsect:relaxing}

In order to define a proof system ensemble based on QU-resolution
proofs, we now describe how to obtain a sequence of axiom sets
for a given QBC.  We start by exhibiting a way to 
infer that a partial assignment is an axiom of a QBC.

Let $a$ be a partial assignment
of a QBC  $\Phi = \prp : \phi$.
Define $\prp[a]$ to be the quantifier prefix which is equal to $\prp$
but where the variables in $\dom(a)$ and their corresponding
quantifiers are removed, and where each 
quantifier 
of a variable $v$ with $v \precneq \last(a)$
is changed
(if necessary) to an existential quantifier.  
\fullversion{As examples, when 
$\prp = \forall y_1 \exists x_1 \exists x_2 \forall y_2 \forall y_3 \exists x_3$,
if $a$ is an assignment with $\dom(a) = \{ x_1, y_3 \}$,
it holds that $\prp[a] = \exists y_1 \exists x_2 \forall y_2 \exists x_3$;
if $a$ is an assignment with $\dom(a) = \{ x_1, x_2 \}$,
it holds that $\prp[a] = \exists y_1 \forall y_2 \forall y_3 \exists x_3$.}
Define $\phi[a]$ 
to be the circuit obtained from $\phi$ by replacing
each variable $v \in \dom(a)$ with the constant $a(v)$.
Define $\Phi[a]$ to be $\prp[a] : \phi[a]$.

\begin{prop}
\label{prop:derived-clauses}
Assume that 
$a$ is a partial assignment
of a QBC  $\Phi = \prp : \phi$
such that
$\Phi[a]$ is false.
Then $\clause(a)$ is an axiom of $\Phi$, that is,
each model of $\prp : \phi$ is a model of $\prp : \clause(a)$.
\end{prop}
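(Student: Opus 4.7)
The plan is to prove the contrapositive: assume $\Phi$ has a model $\sigma$ that is \emph{not} a model of $\prp:\clause(a)$, and from this produce a winning $\exists$-strategy for $\Phi[a]$, contradicting the hypothesis that $\Phi[a]$ is false. The hypothesis on $\sigma$ unpacks to: there is a $\forall$-assignment $\tau$ of $\Phi$ such that $\langle\sigma,\tau\rangle$ falsifies $\clause(a)$, equivalently, $\langle\sigma,\tau\rangle$ agrees with $a$ on all of $\dom(a)$. From the pair $(\sigma,\tau)$ and the assignment $a$ I will assemble a model $\sigma'$ of $\Phi[a]$.

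First I would set up the bookkeeping. Let $X,Y$ be the $\exists$- and $\forall$-variables of $\Phi$; by the definition of $\prp[a]$, the $\exists$-variables of $\Phi[a]$ are $X' = (X\setminus\dom(a))\cup\{y\in Y\setminus\dom(a) : y\precneq \last(a)\}$ and the $\forall$-variables are $Y' = \{y\in Y\setminus\dom(a) : y\not\precneq\last(a)\}$. For $x'\in X'$ I define $\sigma'_{x'}$ by cases. If $x'$ was originally existential in $\Phi$, then given $\mu: Y'_{<x'}\to\{0,1\}$ I extend it to $\hat\mu: Y_{<x'}\to\{0,1\}$ by declaring $\hat\mu(y)=a(y)$ for $y\in Y_{<x'}\cap\dom(a)$, $\hat\mu(y)=\mu(y)$ for $y\in Y_{<x'}\cap Y'$, and $\hat\mu(y)=\tau(y)$ for the remaining $y\in Y_{<x'}$ (those that were switched from $\forall$ to $\exists$ in $\Phi[a]$), and I set $\sigma'_{x'}(\mu)=\sigma_{x'}(\hat\mu)$. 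If instead $x'$ is a formerly universal variable (so $x'\precneq\last(a)$), then every $y'\in Y'$ has $y'\not\precneq\last(a)$ and hence $y'\not\prec x'$, so $Y'_{<x'}=\emptyset$; I let $\sigma'_{x'}$ be the constant $\tau(x')$.

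To verify that $\sigma'$ is winning, I would fix an arbitrary $\tau':Y'\to\{0,1\}$ and define $\hat\tau:Y\to\{0,1\}$ by $\hat\tau(y)=a(y)$ on $Y\cap\dom(a)$, $\hat\tau(y)=\tau'(y)$ on $Y'$, and $\hat\tau(y)=\tau(y)$ elsewhere. Let $f=\langle\sigma,\hat\tau\rangle$. By induction on the prefix order one checks that $f$ restricted to the variables of $\Phi[a]$ coincides with $\langle\sigma',\tau'\rangle$, using the defining extension rule in the first case and the constant rule in the second. For $f$ to witness satisfaction of $\phi[a]$ via $\sigma'$, it also needs to agree with $a$ on $\dom(a)$; on $Y\cap\dom(a)$ this is immediate, and on $X\cap\dom(a)$ it follows from the key observation that if $x\in\dom(a)$, then $Y_{<x}\cap Y'=\emptyset$ (since any such $y$ would satisfy $y\precneq x\preceq\last(a)$ and hence $y\precneq\last(a)$, contradicting $y\in Y'$). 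This forces $\hat\tau\res Y_{<x}=\tau\res Y_{<x}$, so $f(x)=\sigma_x(\tau\res Y_{<x})=\langle\sigma,\tau\rangle(x)=a(x)$ by the hypothesis on $\tau$. Since $\sigma$ is a model of $\Phi$ we have $\phi(f)=\mathsf{true}$, and by the agreement with $a$, $\phi[a]$ evaluated at $\langle\sigma',\tau'\rangle$ is also true.

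The main obstacle is not conceptual but careful: making sure that the three regions of $Y_{<x'}$ really partition it and that the ordering in $\prp[a]$ (inherited from $\prp$) renders $\sigma'_{x'}$ a legitimate strategy depending only on $Y'_{<x'}$. The crucial structural fact enabling everything is the emptiness of $Y_{<x}\cap Y'$ whenever $x\in\dom(a)$, which is exactly the property that the redefinition of the prefix in $\prp[a]$ (turning everything left of $\last(a)$ into $\exists$) is designed to enforce.
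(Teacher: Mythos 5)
Your proposal is correct and takes essentially the same route as the paper's own argument: fix a model $\sigma$ of $\Phi$ and a hypothetical $\tau$ with $\langle\sigma,\tau\rangle$ extending $a$, then build a winning $\exists$-strategy for $\Phi[a]$ by hard-wiring $\tau$'s values into $\sigma$ on the removed and switched universal variables and playing $\tau(y)$ constantly on each universal variable that became existential. The paper only sketches the construction of $\sigma'$, whereas you additionally carry out the verification (in particular the key fact that $Y_{<x}\cap Y'=\emptyset$ for $x\in\dom(a)$), which is a welcome but not divergent elaboration.
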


We believe that Proposition~\ref{prop:derived-clauses}
provides a natural way to derive axioms from a QBC.
Consider the case where $\Phi$ is a SAT instance,
that is, $\prp$ is purely existential.
In this case, if $a$ is a partial assignment
such that $\Phi[a]$ is false,
then $\clause(a)$ is an axiom of $\Phi$.
Indeed, in this case $\Phi[a]$ is simply the QBC instance obtained
by instantiating variables according to $a$, and then
removing the instantiated variables from the quantifier prefix.
Note that,
in the context of backtrack search for SAT, it is typical
that, when some variables have been set according
to a partial assignment $a$, 
a solver attempts to detect falsity of $\Phi[a]$
by heuristics such as unit propagations and generalizations thereof.

In the case of general QBCs, it is natural to ask,
when one has a partial assignment $a$
and then instantiates its variables in $\phi$ to obtain $\phi[a]$, 
under what conditions $\clause(a)$ can be inferred as an axiom.
Proposition~\ref{prop:derived-clauses} provides an
answer to this question; let us explain intuitively why the quantifier prefix
is adjusted to $\prp[a]$.
Consider the case where the first quantifier block of
$\prp$ is existential and $a$ is a partial assignment
to variables from this first block; then $\prp[a]$
is simply $\prp$ but with the variables of $a$ removed,
and so this case of the proposition generalizes the
purely existential case just discussed.
In the case where $a$ is arbitrary, $\prp[a]$
can be viewed as the prefix where the lowest number of quantifiers
have been changed from universal to existential such that
the first quantifier block is existential, and
all variables of $a$ fall into this first block.

\fullversion{
Proposition~\ref{prop:derived-clauses}
can be proved in the following way.
Fix a model $\sigma = (\sigma_x)_{x \in X}$ of $\prp:\phi$;
here, $X$ denotes the $\exists$-variables in $\prp$.
Suppose (for a contradiction) that $\tau$ is an assignment to the
$\forall$-variables of $\prp:\phi$
such that the assignment $f = \langle \sigma, \tau \rangle$
falsifies $\clause(a)$, or equivalently,
$f$ extends the assignment $a$.
Then, we define a winning $\exists$-strategy $\sigma'$ 
for $\Phi[a]$
as follows.
Define $\sigma'_x$ to be the function obtained from $\sigma_x$
after fixing each $\forall$-variable
$y \in \dom(a) \cup \{ v ~|~ v \precneq \last(a) \}$
to $\tau(y)$;
and, for each $\forall$-variable $y$ with $y \precneq \last(a)$
(that is, for each $\forall$-variable in $\prp$ that is changed
to an $\exists$-variable in $\prp[a]$),
 define $\sigma'_y$ to be $\tau(y)$.
}

Prima facie, Proposition~\ref{prop:derived-clauses}
may appear to be of limited utility; even if one
has oracle access to a level of the polynomial hierarchy,
it may be that many partial assignments $a$ give rise to
a quantifier prefix $\prp[a]$ which has too many alternations
to be resolved by the oracle.
In order to expand the class of axioms derivable by this proposition
(relative to such an oracle), we introduce now the notion of 
a \emph{relaxation} of a QBC.

A \emph{relaxation} of a quantifier prefix 
$\prp = Q_1 v_1 \ldots Q_n v_n$
is a quantifier prefix which has the form
$\prpp = Q_{\pi(1)} v_{\pi(1)} \ldots Q_{\pi(n)} v_{\pi(n)}$
where $\pi: [n] \to [n]$ is a permutation
and where, for each $\forall$-variable $y$
and for each $\exists$-variable $x$,
it holds that $y \preceq x$ implies $y \preceq' x$;
here, $\preceq$ and $\preceq'$ denote the
binary relations of $\prp$ and $\prp'$, respectively.
As an example, consider the quantifier prefix
 $\prp = \exists x_1 \exists x_2 \forall y \forall y' \exists x_3$;
relaxations thereof include 
$\forall y \forall y' \exists x_1 \exists x_2 \exists x_3$,
$\exists x_1 \forall y' \exists x_2 \forall y \exists x_3$,
and
$\forall y' \exists x_2 \forall y \exists x_1 \exists x_3$.
A \emph{relaxation of a QBC $\prp:\phi$} 
is a QBC of the form $\prpp:\phi$
where $\prpp$ is a relaxation of $\prp$;
such a QBC is said to be a 
\emph{$\Pi_i$-relaxation} if $\prpp$ is $\Pi_i$.

The following is straightforward to verify.

\begin{prop}
\label{prop:relaxation}
If a relaxation of a QBC $\Phi$ is false, 
then the QBC $\Phi$ is false.
\end{prop}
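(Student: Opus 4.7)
The plan is to prove the contrapositive: if $\Phi = \prp : \phi$ is true, then every relaxation $\prpp : \phi$ is true. Since truth of a QBC is characterized by the existence of a winning $\exists$-strategy (a model), the task reduces to lifting a model of $\Phi$ to a model of an arbitrary relaxation $\prpp:\phi$.

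First I would record a monotonicity observation about the sets $Y_{<x}$. Let $X$ and $Y$ denote the $\exists$- and $\forall$-variables respectively; these are the same for $\Phi$ and any relaxation, since the relaxation only permutes the prefix. Write $Y_{<x}$ and $Y'_{<x}$ for the relevant index sets computed with respect to $\prp$ and $\prpp$. Because any $\forall$-variable and any $\exists$-variable have different quantifiers, they never lie in a common quantifier block, so for such a pair one has $y \precneq x \Leftrightarrow y \preceq x$ (and similarly for $\prpp$). Hence $Y_{<x} = \{ y \in Y : y \preceq x \}$ and $Y'_{<x} = \{ y \in Y : y \preceq' x \}$. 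The defining condition on $\prpp$ gives $y \preceq x \Rightarrow y \preceq' x$ for all $\forall$-variables $y$ and $\exists$-variables $x$, so $Y_{<x} \subseteq Y'_{<x}$.

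Next, given a model $\sigma = (\sigma_x)_{x \in X}$ of $\prp:\phi$, I would define a candidate $\exists$-strategy $\sigma' = (\sigma'_x)_{x \in X}$ for $\prpp:\phi$ by
$$\sigma'_x(\rho) = \sigma_x(\rho \res Y_{<x}) \quad \textrm{for each } \rho \in [Y'_{<x} \to \{0,1\}].$$
The containment $Y_{<x} \subseteq Y'_{<x}$ ensures the restriction is well-defined. To verify $\sigma'$ is winning, fix any $\tau : Y \to \{ 0, 1 \}$ and unfold the definitions: in $\langle \sigma', \tau \rangle$ each $y \in Y$ is assigned $\tau(y)$ and each $x \in X$ is assigned $\sigma'_x(\tau \res Y'_{<x}) = \sigma_x(\tau \res Y_{<x})$. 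This is exactly the assignment $\langle \sigma, \tau \rangle$, which satisfies $\phi$ since $\sigma$ is a model of $\prp:\phi$.

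There is no real obstacle here; the argument is essentially bookkeeping about how the permuted prefix interacts with the strategy domains. The only subtlety is checking $Y_{<x} \subseteq Y'_{<x}$, which hinges on the trivial but crucial fact that $\forall$- and $\exists$-variables never share a quantifier block, so that the relaxation condition (phrased in terms of $\preceq$) transfers directly to the strict-precedence sets $Y_{<x}$.
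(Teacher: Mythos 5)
Your proof is correct; the key containment $Y_{<x} \subseteq Y'_{<x}$ and the lifting of a model of $\Phi$ to a model of the relaxation via $\sigma'_x(\rho) = \sigma_x(\rho \res Y_{<x})$ is exactly the standard argument the paper has in mind when it declares the proposition ``straightforward to verify'' (no explicit proof is given in the paper). Your observation that a $\forall$-variable and an $\exists$-variable never share a quantifier block, so that $\precneq$ and $\preceq$ coincide for such pairs, correctly handles the one point where the definitions could trip one up.
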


Note that for any quantifier prefix, a relaxation
may be obtained by simply placing the
universal quantifiers and their variables first, followed
by the existential quantifiers and their variables.
Hence, in this sense, each QBC has a canonical $\Pi_2$-relaxation,
and
in the sequel, we focus the discussion
on relaxations that are $\Pi_k$-relaxations
for values of $k$ greater than or equal to $2$. 

Let $\Phi$ be a QBC;
for $k \geq 2$, 
we define $H(\Phi, \Pi_k)$ to be the set that contains
a clause $C$ if there exists a $\Pi_k$-relaxation of
$\Phi[\assign(C)]$ that is false.
The following fact follows immediately from
Propositions~\ref{prop:derived-clauses}
and~\ref{prop:relaxation}.

\begin{prop}
\label{prop:relaxations-give-axiom-set}
When $\Phi$ is a QBC and $k \geq 2$,
it holds that $H(\Phi, \Pi_k)$ is an axiom set of $\Phi$.
\end{prop}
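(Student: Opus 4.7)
The plan is to verify the axiom-set condition pointwise by chaining together the two propositions that directly precede the statement. Specifically, given an arbitrary $C \in H(\Phi, \Pi_k)$, I need to exhibit that $C$ is an axiom of $\Phi$, meaning that every model of $\prp:\phi$ is a model of $\prp:C$. The two ingredients are Proposition~\ref{prop:relaxation}, which lets me pass from falsity of a relaxation to falsity of the original QBC, and Proposition~\ref{prop:derived-clauses}, which lets me pass from falsity of an instantiated QBC $\Phi[a]$ to the fact that $\clause(a)$ is an axiom.

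First I would fix an arbitrary $C \in H(\Phi, \Pi_k)$ and set $a = \assign(C)$, so that by the tacit identification from the preliminaries, $\clause(a) = C$. By the very definition of $H(\Phi, \Pi_k)$, there exists some $\Pi_k$-relaxation of the QBC $\Phi[a]$ that is false. Applying Proposition~\ref{prop:relaxation} to this relaxation gives that $\Phi[a]$ itself is false. Now Proposition~\ref{prop:derived-clauses} directly applies: since $a$ is a partial assignment of $\Phi$ and $\Phi[a]$ is false, $\clause(a) = C$ is an axiom of $\Phi$.

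Since $C \in H(\Phi, \Pi_k)$ was arbitrary, this shows that every clause in $H(\Phi, \Pi_k)$ is an axiom of $\Phi$, so $H(\Phi, \Pi_k)$ is an axiom set of $\Phi$ in the sense of Section~\ref{subsect:qu-res}. There is no serious obstacle here; all the conceptual work has already been done in setting up Propositions~\ref{prop:derived-clauses} and~\ref{prop:relaxation}, and this proposition is essentially a bookkeeping corollary. The only minor point to double-check is that the definitions align in the obvious way, namely that $\clause(\assign(C)) = C$, which follows from the uniqueness clauses in the definitions of $\assign$ and $\clause$ in the preliminaries.
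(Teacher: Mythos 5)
Your proposal is correct and follows exactly the route the paper intends: the paper states the proposition follows immediately from Propositions~\ref{prop:derived-clauses} and~\ref{prop:relaxation}, and you simply unpack that chain (relaxation false $\Rightarrow$ $\Phi[a]$ false $\Rightarrow$ $\clause(a)$ is an axiom) for an arbitrary $C \in H(\Phi, \Pi_k)$.
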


\fullversion{
Note that when $\Phi = \prp:\phi$ 
is a clausal QBF, $C$ is a clause in $\phi$,
and $a = \assign(C)$,
it holds that $\phi[a]$ is unsatisfiable;
consequently,
for any quantifier prefix $\prpp$ on the variables of $\phi[a]$,
it holds that $\prpp:\phi[a]$ is false,
and thus $C \in H(\Phi, \Pi_2)$.
Hence, the set $H(\Phi, \Pi_2)$ contains each clause of $\phi$.
}

\begin{definition}
\emph{Relaxing QU-res} 
is defined as the pair $(A, r)$
where $r$ is defined by $r(k) = k+3$
and $A$ is an algorithm defined to accept an input
$(k,(\Phi,\pi))$
if $\Phi$ is a QBC and 
$\pi$ is a QU-resolution falsity proof of $\Phi$
from axioms
in
$H(\Phi, \Pi_{k+2})$.
In particular, the algorithm $A$ 
examines each clause in $\pi$ in order;
when a clause $C$ is not derived from previous ones
by resolution or by $\forall$-elimination,
membership of $C$ in $H(\Phi, \Pi_{k+2})$
is checked by the $\Sigma_{k+3}$ oracle.
(Such an oracle can nondeterministically guess
a $\Pi_{k+2}$-relaxation and then check this relaxation for falsity.)
\end{definition}

\begin{prop}
\label{prop:relaxing-qu-res-an-ensemble}
\emph{Relaxing QU-res} is a proof system ensemble
for the language of false QBCs.
\end{prop}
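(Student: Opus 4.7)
The plan is to verify the two clauses of Definition~\ref{def:proof-system-ensemble} for the pair $(A,r)$ with $r(k) = k+3$, taking $L$ to be the language of false QBCs. The function $r$ is clearly computable, so attention focuses on (i) the polynomial-time proof-checking bound and (ii) soundness and completeness of the induced $L_k$.

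For (i), I would describe $A^{S(k+3)}$ as follows: on input $(k, (\Phi, \pi))$, parse $\pi$ as an annotated sequence of clauses $C_1, \ldots, C_m$ and process them in order. For each $C_j$ the annotation indicates one of three justifications, which the algorithm verifies: that $C_j$ is a resolvent of two earlier clauses on a common variable, that $C_j$ is obtained from an earlier clause by $\forall$-elimination (checking the side condition $\vars(C_j) \preceq y$ against the prefix $\prp$), or that $C_j$ is to be an axiom. In the last case the algorithm poses a single query to $S(k+3)$ asking whether $C_j \in H(\Phi, \Pi_{k+2})$; this is a $\Sigma_{k+2}^p$ question, since one can guess a permutation of the variables of $\prp$, verify in polynomial time that the result is a valid $\Pi_{k+2}$-relaxation of $\prp$, and then check falsity of the resulting QBC (whose negation is $\Pi_{k+2}^p$). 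Because $\Sigma_{k+2}^p \subseteq \Sigma_{k+3}^p$, one call to $S(k+3)$ suffices in polynomial time. The algorithm accepts iff every $C_j$ passes its check and $C_m$ is empty; the total running time is polynomial in $|\Phi| + |\pi|$ for each fixed $k$.

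For (ii), soundness is immediate from Propositions~\ref{prop:relaxations-give-axiom-set} and~\ref{prop:soundness-qu-res}: an accepted $(\Phi, \pi)$ presents a QU-resolution derivation of the empty clause from an axiom set of $\Phi$, which forces $\Phi$ to be false. For completeness, suppose $\Phi = \prp : \phi$ is false. For each total assignment $a$ to the variables of $\prp$ with $\phi[a] = 0$, the QBC $\Phi[a]$ has an empty prefix and a constant-false circuit; the empty prefix is vacuously a $\Pi_{k+2}$-relaxation of itself, so $\clause(a) \in H(\Phi, \Pi_2) \subseteq H(\Phi, \Pi_{k+2})$. The clausal QBF $\prp : \bigwedge_a \clause(a)$ (conjunction over such $a$) is logically equivalent to $\Phi$ and hence false, so by the standard refutational completeness of QU-resolution on false clausal QBFs one extracts a QU-resolution refutation using these $\clause(a)$'s as axioms; annotated as in Definition~\ref{def:qu-resolution-proof}, this refutation witnesses $(\Phi, \pi') \in L_k$ for an appropriate string $\pi'$.

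The main point requiring a little care is the completeness step, where one must align the standard QU-resolution completeness statement for clausal QBFs with the precise formulation in Definition~\ref{def:qu-resolution-proof}. As noted in the commentary following that definition, the present version, in which $\forall$-elimination removes a single literal at a time, elementarily simulates the more common $\forall$-reduction presentation (and conversely), so the transfer of completeness is routine once this bridging observation is spelled out.
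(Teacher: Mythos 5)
Your proof is correct and follows essentially the same route as the paper: soundness via Propositions~\ref{prop:relaxations-give-axiom-set} and~\ref{prop:soundness-qu-res}, and completeness by replacing $\phi$ with the conjunction of $\clause(a)$ over falsifying total assignments $a$, noting these clauses lie in $H(\Phi,\Pi_{k+2})$, and invoking refutational completeness of (Q)U-resolution on clausal QBFs. The explicit verification of the polynomial-time oracle-checking condition is left implicit in the paper but is a welcome addition.
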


\fullversion{
Note that for any set $\fancyf$ of false QBCs
having bounded alternation, 
it holds that \emph{relaxing QU-res}
is polynomially bounded on $\fancyf$.
Why?  Let $k$ be a value such that each QBC in $\fancyf$
is $\Pi_{k+2}$.
For each QBC $\Phi \in \fancyf$,
we have that the empty clause is in 
$H(\Phi, \Pi_{k+2})$, since
$\Phi$ itself is a false $\Pi_{k+2}$-relaxation of $\Phi$.
Hence, for each such QBC $\Phi$,
the algorithm $A$ of relaxing QU-res
accepts $(k, (\Phi, \emptyset))$,
where here $\emptyset$ denotes the 
proof consisting just of the empty clause.
}

Let us now introduce some notions which will be used
in our study of \emph{tree-like relaxing QU-res} (defined below).
Let $f$ and $g$ be partial assignments of a QBC $\Phi$.
We say that $g$ is a \emph{semicompletion} of $f$
if $g$ is an extension of $f$ such that
for each universally quantified variable $y$ with 
$\dom(f) \preceq y$ and $y \notin \dom(f)$,
it holds that
$\dom(g) \preceq y$ and $y \notin \dom(g)$.
A set $H$ of partial assignments of $\Phi$ is
\emph{semicompletion-closed}
if, whenever $f \in H$ and $g$ is a semicompletion of $f$,
it holds that $g \in H$.

\fullversion{
\begin{lemma}
\label{lemma:semicompletion-closure}
For each QBC $\Phi$ and for each $m \geq 2$, 
the set of assignments
$H(\Phi, \Pi_m)$ is semicompletion-closed.
\end{lemma}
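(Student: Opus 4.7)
The plan is to start from a witnessing false $\Pi_m$-relaxation $\prpp : \phi[f]$ of $\Phi[f]$ and build a false $\Pi_m$-relaxation of $\Phi[g]$ by simply deleting from $\prpp$ the extra variables $\Delta := \dom(g) \setminus \dom(f)$; call the resulting prefix $\prpp^*$. Since $\phi[g]$ is $\phi[f]$ with the variables of $\Delta$ further instantiated according to $g$, the candidate $\prpp^* : \phi[g]$ is precisely the QBC obtained from $\prpp : \phi[f]$ by substituting $g \res \Delta$ into both prefix and matrix. Three things then need to be checked: (i) $\prpp^*$ is $\Pi_m$; (ii) $\prpp^*$ is a relaxation of $\prp[g]$; (iii) $\prpp^* : \phi[g]$ is false.

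The linchpin is the structural claim that every $v \in \Delta$ appears as an $\exists$-variable in $\prpp$ (equivalently, in $\prp[f]$). This is automatic when $v$ is originally existential in $\prp$; when $v$ is originally universal, the goal is to derive $v \precneq \last(f)$, so that $v$ is flipped to $\exists$ in $\prp[f]$. Since $v \in \Delta$ we have $v \notin \dom(f)$; if $\dom(f) \preceq v$ also held, the semicompletion hypothesis would give $v \notin \dom(g)$, contradicting $v \in \Delta$. Hence some $u \in \dom(f)$ satisfies $u \not\preceq v$, and totality of $\preceq$ together with $u \preceq \last(f)$ yields $v \precneq u \preceq \last(f)$. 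Item (i) follows immediately, since deleting variables from $\exists$-positions of a $\Pi_m$ prefix preserves $\Pi_m$.

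For item (ii), the variable sets of $\prpp^*$ and $\prp[g]$ coincide by construction, so the plan is to verify quantifier match and the $\forall$-before-$\exists$ order condition. For quantifier match, the only danger is an originally-$\forall$ variable $v \in \vars(\prpp^*)$ with $v \precneq \last(g)$ yet $v \not\precneq \last(f)$, which would be $\exists$ in $\prp[g]$ but $\forall$ in $\prpp^*$. In both of the resulting subcases, $v \equiv \last(f)$ and $\last(f) \precneq v$, one derives $\dom(f) \preceq v$ and $v \notin \dom(f)$, and the semicompletion condition then forces $\last(g) \preceq v$, contradicting $v \precneq \last(g)$. For the order condition, I would fix a $\forall$-variable $y$ and an $\exists$-variable $x$ of $\prpp^*$ with $y \preceq_{\prp[g]} x$. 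The ``same-block'' clause in the definition of $\preceq$ is unavailable because $y$ and $x$ carry different quantifiers, so $y$ precedes $x$ in the original order of $\prp$; this yields $y \preceq_{\prp[f]} x$, hence $y \preceq_{\prpp} x$ (by $\prpp$ being a relaxation of $\prp[f]$), and finally $y \preceq_{\prpp^*} x$ because deletion preserves relative order.

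For item (iii), I would invoke the standard fact that fixing $\exists$-variables in a false QBC preserves falsity: a winning $\forall$-strategy for $\prpp : \phi[f]$ wins against every $\exists$-response, in particular the one that plays $g \res \Delta$ on the variables of $\Delta$, and the induced strategy on the remaining $\forall$-variables is winning for $\prpp^* : \phi[g]$. I expect the principal technical obstacle to be the quantifier-match case analysis in item (ii), where the semicompletion hypothesis must be deployed in its full strength to reconcile the slightly different flipping rules governing $\prp[f]$ and $\prp[g]$; the remaining pieces are routine bookkeeping or direct consequences of the relaxation property of $\prpp$.
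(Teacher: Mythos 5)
Your proof is correct and follows essentially the same approach as the paper's: both delete the variables of $\Delta = \dom(g) \setminus \dom(f)$ from the witnessing false $\Pi_m$-relaxation of $\Phi[f]$, substitute the values from $g$, and argue the result is a false $\Pi_m$-relaxation of $\Phi[g]$, with the key structural fact being that the variables of $\Delta$ are existential in $\Phi[f]$. Your derivation of this fact via the contrapositive of the semicompletion condition is a slightly more direct route than the paper's case analysis on $\last(f)$, and your verification of items (i)--(iii) fleshes out details the paper leaves terse.
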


}

\subsection{A graph-based view}
\label{subsect:graph-based-view}

When $\pi = C_1, \ldots, C_n$ is a QU-resolution proof of
a QBC $\prp : \phi$ from axioms $H$,
define $G(\pi)$ to be the directed acyclic graph where
there is a vertex for each clause occurrence $C_i$, 
which vertex has label
$\assign(C_i)$;
and, where 
(for all pairs of clauses $C_i, C_j$)
there is a directed edge from the vertex of $C_j$
to the vertex of $C_i$ if $C_j$ is derived from $C_i$.

\begin{prop} \rm
\label{prop:proof-graph}
Let $\pi$ be a QU-resolution proof of a QBC $\prp : \phi$ from axioms $H$.
The directed acyclic graph $G(\pi)$ has the following properties:

\begin{enumerate}
\setlength{\itemsep}{0pt}
  \setlength{\parskip}{0pt}
  \setlength{\parsep}{0pt}

\item[$(\alpha)$] If a node with label $a$ has no out-edges, 
then $\clause(a)$ is an element of $H$.

\item[$(\beta)$] If a node with label $a$ has $1$ out-edge
to a node with label $a'$,
then $a'$ is an extension of $a$ with 
$\dom(a') = \dom(a) \cup \{ y \}$ where $y$ is
a universally quantified variable with $\dom(a) \preceq y$.

\item[$(\gamma)$]
If a node with label $a$ has $2$ out-edges
to nodes with labels $a_1$ and $a_2$,
then there exists a variable $v$
such that $a_1$ and $a_2$ are defined on $v$ and $a_1(v) \neq a_2(v)$;
$(\dom(a_1) \cup \dom(a_2)) \setminus \{ v \} = \dom(a)$;
$a$ and $a_1$ are equal on the variables where they are both defined;
and,
$a$ and $a_2$ are equal on the variables where they are both defined.

\end{enumerate}

Moreover, a labelled graph with these three properties naturally
induces
a QU-resolution proof: for each node, let $a$ be its label, and
associate to it $\clause(a)$. $\Box$
\end{prop}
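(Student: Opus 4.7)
The plan is to prove both directions of the proposition by case analysis on how each clause in a QU-resolution proof is derived, using the bijection between clauses and their associated partial assignments given by $\assign(\cdot)$ and $\clause(\cdot)$. In the graph $G(\pi)$, the out-edges of the node for a clause $C_i$ point precisely to the clause(s) from which $C_i$ is derived, or to nothing if $C_i$ is an axiom. Thus the three possibilities of having $0$, $1$, or $2$ out-edges correspond respectively to $C_i$ being an element of $H$, being obtained by $\forall$-elimination, or being a resolvent.

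For the forward direction, I would treat each case by unfolding the definitions. For $(\alpha)$, a node with no out-edges corresponds to a clause $C_i$ that is not annotated as derived from prior clauses, so by Definition~\ref{def:qu-resolution-proof} it must lie in $H$; since $\clause(\assign(C_i)) = C_i$, this yields $\clause(a) \in H$. For $(\beta)$, a single out-edge from $C_j$ to $C_i$ encodes a $\forall$-elimination step, so $C_i = C_j \cup \{L\}$ for a literal $L$ on a $\forall$-variable $y$ with $\vars(C_j) \preceq y$; translating to assignments, $a' = \assign(C_i)$ extends $a = \assign(C_j)$ by assigning $y$ the unique value that falsifies $L$, which is precisely the condition stated in $(\beta)$. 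For $(\gamma)$, two out-edges correspond to $C_j$ being a resolvent of $C_{i_1}$ and $C_{i_2}$ on some variable $v$; the stated conditions on domains and on agreement of $a$ with $a_1$ and $a_2$ follow directly from $C_j = (C_{i_1} \setminus \{L\}) \cup (C_{i_2} \setminus \{M\})$, using the fact that the resolvent is a clause (so any shared non-pivot variable must carry the same literal in both premises).

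For the reverse direction, given a labeled DAG satisfying $(\alpha)$--$(\gamma)$, I would topologically sort its vertices so that every vertex appears after all of its out-neighbors, and read off the sequence of clauses $\clause(a)$ in this order. I would then verify clause by clause that each step is justified: a leaf (no out-edges) contributes an axiom by $(\alpha)$; a node with one out-edge contributes a $\forall$-elimination step whose eliminated literal is the one on the unique variable $y \in \dom(a') \setminus \dom(a)$, which $(\beta)$ guarantees is a $\forall$-variable with $\dom(a) \preceq y$; a node with two out-edges contributes a resolvent on the pivot $v$ provided by $(\gamma)$, where the domain and agreement conditions ensure that the literal-level union $(\clause(a_1) \setminus \{L\}) \cup (\clause(a_2) \setminus \{M\})$ is a well-defined clause equal to $\clause(a)$.

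The work is essentially a dictionary between derivation rules and local graph conditions. The main bookkeeping challenge is keeping the edge-orientation convention straight (out-edges point from a derived clause toward its premise(s), while $\forall$-elimination and resolution shrink the literal set and hence the assignment's domain), and in the resolvent case verifying that the agreement conditions of $(\gamma)$ are not merely necessary but also sufficient for the union of literal sets to form a non-tautological clause. I do not expect any deeper obstacle: the proposition is a direct reformulation of Definition~\ref{def:qu-resolution-proof}, and the verification should be routine.
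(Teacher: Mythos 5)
Your proposal is correct and is exactly the routine unfolding of Definition~\ref{def:qu-resolution-proof} that the paper has in mind — indeed the paper states this proposition without proof (the $\Box$ appears in the statement itself), treating it as immediate. You handle the one subtlety worth noting (that the agreement conditions in $(\gamma)$ are equivalent to the resolvent's literal-set union being a well-formed clause, via the paper's convention that a clause has at most one literal per variable) correctly in both directions.
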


\begin{definition}
We say that a QU-resolution proof $\pi$
is \emph{tree-like} if the graph $G(\pi)$ is a tree.
We define \emph{tree-like relaxing QU-res}
to be the proof system ensemble $(A', r)$
described as follows.
Let $(A, r)$ denote relaxing QU-res.
Then, the algorithm $A'$ accepts an input
$(k,(x,\pi))$ if $A$ accepts it and $\pi$ is tree-like.
\end{definition}

\section{A prover-delayer game for \emph{tree-like relaxing QU-res}}
\label{sect:prover-delayer-game}

In this section, we present a game that can be used to 
exhibit lower bounds on the size of tree-like QU-resolution proofs;
this game can be viewed as a generalization of a game
for studying tree-like resolution,
which game was presented by 
Pudl{\'{a}}k and Impagliazzo~\cite{PudlakImpagliazzo00-lowerbound}.

We first give an intuitive description of the game.
Note, however, that this description is meant only to be suggestive.
For a precise description, we urge the reader to consult the formal definition, which follows 
\confversion{(Definition~\ref{def:delayer-strategy}).}
\fullversion{(Definition~\ref{def:delayer-strategy}); 
in this formal definition,
the game is formulated in a positional fashion:
a state of the game is formalized as a partial assignment.}
%in that the score associated with a partial assignment
%is independent of the particular history used to reach the assignment.

Relative to a QBC $\Phi$ and a set $H$ of axioms,
the game is played between two players, \emph{Prover}
and \emph{Delayer}, which maintain a partial assignment.
Prover's goal is to reach a partial assignment in $H$,
while Delayer tries to slow down Prover, scoring points in the process.
Prover starts by announcing the empty assignment, 
and Delayer responds with a semicompletion thereof.
After this, the play proceeds in a sequence of rounds.
In each round, Prover may perform one of three actions to 
the current assignment $f$:
select a restriction of $f$;
assign a value to a $\forall$-variable $y \notin \dom(f)$
having $\dom(f) \preceq y$; or,
select a variable $v \notin \dom(f)$.
In the first two cases, Delayer responds with a semicompletion
of the resulting assignment.
In the third case, Delayer may 
give a choice to the Prover.  When a choice is given,
the Prover sets the value of $v$, and Delayer may elect to claim
a point which is then associated with $v$.
When no choice is given, Delayer sets the value of $v$.
After $v$ is set, Delayer responds (as in the first two cases)
with a semicompletion of the resulting assignment.
Delayer is said to have a $p$-point strategy if,
he has a strategy where, by the time that Prover
achieves her goal, 
there are $p$ variables on which the final assignment is defined 
such that Delayer has claimed points on these variables.  \fullversion{\\}
In what follows, we assume $p \geq 1$.

\begin{definition}
\label{def:delayer-strategy}
Let $\Phi$ be a QBC.
Relative to a set $H$ of axioms,
a \emph{$p$-point delayer strategy}
consists of a set $F$ of partial assignments of $\Phi$
and a function $s: F \to \nats$
called the \emph{score function}
such that the following properties hold:
\begin{itemized}

\item 
\emph{(semicompletion-of-empty)}
There exists a semicompletion $g \in F$
of the empty assignment such that $s(g) = 0$.

\item 
\emph{(all-points)}
If $f \in F \cap H$, then $s(f) \geq p$.

\item 
\emph{(monotonicity)} 
If $g \in F$, then each restriction of $g$
has a semicompletion $f \in F$ such that $s(f) \leq s(g)$.

\item
\emph{($\forall$-branching)}
If $f \in F$ and $y \notin \dom(f)$ is
a universally quantified variable with $\dom(f) \preceq y$,
then, for each $b \in \{ 0, 1 \}$, 
the assignment $f[y \to b]$ has a semicompletion $g \in F$
with $s(g) = s(f)$.

\item
\emph{(double-branching)}
If $f \in F$ and $v \notin \dom(f)$,
there exists a value $b \in \{ 0, 1 \}$ 
such that
$f[v \to b]$ has a semicompletion $g \in F$ 
where 
(1)  $s(g) \leq s(f) + 1$
and 
(2) if $s(g) = s(f) + 1$,
the assignment $f[v \to \neg b]$ has a semicompletion $g' \in F$
with $s(g') \leq s(f) + 1$.
\end{itemized}
\end{definition}

\begin{theorem}
\label{thm:delayer-many-leaves}
Assume
that there exists a $p$-point delayer strategy for a 
QBC $\Phi$ with respect to a semicompletion-closed axiom set $H$, 
and 
that $\pi$ is a tree-like QU-resolution proof 
ending with the empty clause,
from axioms $H$.
%where every leaf of $G(\pi)$ has label in $H$.
Then, the tree $G(\pi)$ has at least $2^p$ leaves.
\end{theorem}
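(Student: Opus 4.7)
The plan is to construct, from the $p$-point Delayer strategy $(F,s)$ and the proof tree $G := G(\pi)$, a probability distribution on the leaves of $G$ giving mass at most $2^{-p}$ to each leaf; since these masses sum to $1$, this forces $G$ to have at least $2^p$ leaves. The distribution arises from a random walk from the root of $G$ (labeled by $\emptyset$, as $\pi$ ends in the empty clause) down to a leaf, while maintaining an auxiliary Delayer state $f \in F$ that is a semicompletion of the current node's label. I initialize with the $f_0 \in F$ supplied by \emph{semicompletion-of-empty}, so $s(f_0) = 0$.

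At a $\forall$-elimination child (case $(\beta)$), whose label is $a' = a[y \to b]$ with $y$ a $\forall$-variable and $\dom(a) \preceq y$, the semicompletion invariant delivers $y \notin \dom(f)$ and $\dom(f) \preceq y$, so \emph{$\forall$-branching} yields a state $g \in F$ of the same score, which a short domain check confirms is a semicompletion of $a'$. At a resolution node (case $(\gamma)$) on variable $v$ with children $a_1, a_2$, I split on whether $v \in \dom(f)$. If $v \in \dom(f)$, I move deterministically to the child $a_i$ with $a_i(v) = f(v)$; then $f$ extends $a_i$, so \emph{monotonicity} applied to the restriction $a_i$ of $f$ gives a semicompletion $h \in F$ of $a_i$ with $s(h) \leq s(f)$. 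If $v \notin \dom(f)$, \emph{double-branching} supplies a direction $b$ with a semicompletion $g$ of $f[v \to b]$ of score at most $s(f)+1$; if the score is at most $s(f)$ the walk proceeds deterministically to the child matching $b$, and otherwise both sides are licensed and the walk flips a fair coin (a \emph{random-branching step}). In each resolution subcase I then apply \emph{monotonicity} using that $a_i$ is a restriction of the state $g$, obtaining a semicompletion $h \in F$ of $a_i$ whose score exceeds $s(f)$ only in the random-branching subcase, and only by at most $1$.

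To finish, along any root-to-leaf path with $k$ random-branching steps, the score $s$ of the maintained Delayer state satisfies $s \leq k$ throughout (by the per-step bounds just listed), so the final state $h$ has $s(h) \leq k$. On the other hand, the leaf's label $a$ lies in $H$ by $(\alpha)$, the invariant gives $h$ as a semicompletion of $a$, and semicompletion-closedness of $H$ places $h$ in $F \cap H$, so \emph{all-points} forces $s(h) \geq p$. Hence $k \geq p$ on every root-to-leaf path, so the probability of reaching any specific leaf is at most $2^{-k} \leq 2^{-p}$, and summing to $1$ gives the claim.

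The main obstacle I anticipate is maintenance of the semicompletion invariant at a resolution step: the state $g$ returned by \emph{double-branching} is a semicompletion of $f[v \to b]$, not of the child label $a_i$, and in general their domains can be positioned incomparably in the quantifier prefix because $\dom(a)$ may contain variables that appear in the other premise $a_{3-i}$ but not in $a_i$. This forces the use of \emph{monotonicity} to restrict down to $a_i$ and re-semicompletionize inside $F$; the same device lets the score drop, but the final score bound $s \leq k$ still follows cleanly since the score only ever rises at random-branching steps.
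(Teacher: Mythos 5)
Your proposal is correct and follows essentially the same route as the paper: your per-node case analysis (\emph{$\forall$-branching} at unary nodes, \emph{monotonicity} when the resolved variable is already set in the maintained state, and \emph{double-branching} followed by \emph{monotonicity} otherwise, with semicompletion-closedness of $H$ plus \emph{(all-points)} forcing score at least $p$ at the leaves) is exactly the content of the paper's Lemma~\ref{lemma:child}. The only difference is in packaging the final count: you run a random walk that gives each leaf probability at most $2^{-p}$, whereas the paper inducts on the score to show that any node admitting a semicompletion of score $p-i$ has at least $2^i$ leaf descendants; the two counting arguments are equivalent.
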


\fullversion{
In order to prove Theorem~\ref{thm:delayer-many-leaves},
we introduce the following lemma.

\begin{lemma}
\label{lemma:child}
Assume
that there exists a $p$-point delayer strategy for a 
QBC $\Phi$ with respect to a semicompletion-closed axiom set $H$, 
and that $\pi$ is a tree-like QU-resolution proof
from axioms $H$.
Let $u$ be a node of $G(\pi)$ with label $a$.
If $a$ has a semicompletion $f \in F$
with $s(f) < p$, 
then $u$ has a child $v'$ 
with label $a'$ such that $a'$ has a semicompletion
$g' \in F$ with $s(g') \leq s(f) + 1$;
moreover, when $g' \in F$ has $s(g') = s(f) + 1$,
the node $u$ has a second child $v''$ 
whose label $a''$ has a semicompletion $g''$
with $s(g'') \leq s(f) + 1$.
\end{lemma}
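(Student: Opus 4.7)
The proof would proceed by case analysis on the out-degree of $u$ in $G(\pi)$, which Proposition~\ref{prop:proof-graph} constrains to be $0$, $1$, or $2$.

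If $u$ has no out-edges, then $\clause(a) \in H$; since $H$ is semicompletion-closed and $f$ is a semicompletion of $a$, we get $f \in F \cap H$, so (all-points) forces $s(f) \geq p$, contradicting $s(f) < p$.  Hence this case does not arise.  If $u$ has one out-edge, property~$(\beta)$ writes the child's label as $a' = a[y \to b]$ for a universal variable $y$ with $\dom(a) \preceq y$ and $y \notin \dom(a)$.  Since $f$ is a semicompletion of $a$, we have $\dom(f) \preceq y$ and $y \notin \dom(f)$, so ($\forall$-branching) applies and delivers $g' \in F$, a semicompletion of $f[y \to b]$, with $s(g') = s(f)$.  A direct check---each universal $y^*$ triggering the semicompletion condition for $a'$ also triggers it for $a$ (and then for $f[y \to b]$, using $y \preceq y^*$)---confirms that $g'$ is itself a semicompletion of $a'$; because $s(g') < s(f) + 1$, the ``moreover'' clause is vacuous.

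In the two-out-edge case, $(\gamma)$ supplies the resolved variable $v$ together with $\dom(a_i) \subseteq \dom(a) \cup \{v\}$ and $a_1(v) \neq a_2(v)$.  I split according to whether $v \in \dom(f)$.  If $v \in \dom(f)$, let $i$ satisfy $a_i(v) = f(v)$; then $a_i$ is a restriction of $f$ (on $\dom(a_i) \setminus \{v\} \subseteq \dom(a)$ both agree with $a$, and on $v$ both equal $f(v)$), and (monotonicity) produces $g' \in F$ semicompletion of $a_i$ with $s(g') \leq s(f)$, obviating any second child.  If $v \notin \dom(f)$, apply (double-branching) to obtain $b \in \{0,1\}$ and $g \in F$ semicompletion of $f[v \to b]$ with $s(g) \leq s(f) + 1$; letting $i$ satisfy $a_i(v) = b$, one checks that $g$ extends $a_i$, and (monotonicity) produces $g' \in F$ semicompletion of $a_i$ with $s(g') \leq s(g) \leq s(f) + 1$.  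If $s(g') = s(f) + 1$, then necessarily $s(g) = s(f) + 1$, so condition~(2) of (double-branching) provides a semicompletion $h \in F$ of $f[v \to \neg b]$ with $s(h) \leq s(f) + 1$; since $h$ extends $a_{3-i}$, a second invocation of (monotonicity) yields the required $g'' \in F$ semicompletion of $a_{3-i}$ with $s(g'') \leq s(f) + 1$.

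The most delicate point is the resolution sub-case with $v \in \dom(f)$: the temptation is to reuse $f$ itself as a semicompletion of $a_i$, but this can fail because $\dom(a) \setminus \dom(a_i)$ may contain a universal variable $y^*$ with $\dom(a_i) \preceq y^*$, whose membership in $\dom(f)$ (forced by $y^* \in \dom(a)$) violates the semicompletion condition for $a_i$.  The fix is to invoke (monotonicity), which yields a fresh element of $F$ tailored to $a_i$ at no score cost; the same restriction-plus-monotonicity step converts $g$ to $g'$ (and $h$ to $g''$) in the other sub-case.
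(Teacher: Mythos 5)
Your proof is correct and follows essentially the same case analysis and use of the delayer-strategy conditions as the paper's own proof; the only differences are cosmetic, namely that you spell out more explicitly the verifications that the constructed assignments really are semicompletions of the relevant labels (which the paper asserts tersely), and your closing remark about why (monotonicity) is needed rather than reusing $f$ directly is a valid observation that the paper leaves implicit.
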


\begin{proof}
Since $s(f) < p$, by the \emph{(all-points)} condition,
we have that $f \notin H$.  Since $H$ is assumed to be
semicompletion-closed, we have that $a \notin H$,
and hence that the node $u$ is not a leaf
of the tree $G(\pi)$.

If the node $u$ is of type $(\beta)$
from Proposition~\ref{prop:proof-graph},
then let $a'$ be the label of the child $v'$ of $u$;
$a'$ has the form $a[y \to b]$.
By the ($\forall$-branching) condition, 
the assignment $f[y \to a'(y)]$ 
has a semicompletion $f'$ with $s(f') \leq s(f)$.
We have that $f'$ is a semicompletion of $a'$, giving the lemma.

If the node $u$ is of type $(\gamma)$
from Proposition~\ref{prop:proof-graph},
let $x$ denote the variable described in the
proposition statement.
We consider two cases.
First, if $x \in \dom(f)$, then pick the child of $u$
with label $a'$ having $a'(x) = f(x)$.
By \emph{(monotonicity)}, the restriction
$f \res \dom(a')$ has a semicompletion $g' \in F$ 
such that $s(g') \leq s(f)$, giving the lemma.
When $x \notin \dom(f)$, we argue as follows.
By the \emph{(double-branching)} condition,
there exists a value $b \in \{ 0, 1 \}$
such that $f[x \to b]$ has a semicompletion $f'$
satisfying the properties (1) and (2) 
given in Definition~\ref{def:delayer-strategy};
in particular, we have $s(f') \leq s(f) + 1$.
Let $v'$ be the child of $u$ whose label $a'$ has $a'(x) = b$.
The assignment $a'$ restricts $a[x \to b]$ 
which restricts $f[x \to b]$, so $f'$ extends $a'$.
By the \emph{(monotonicity)} condition,
the restriction $f' \res \dom(a')$ has a semicompletion
$g'$ with $s(g') \leq s(f')$.
If $s(g') \leq s(f)$, the lemma is proved.
Otherwise, we have that $s(g') = s(f) + 1$,
and by property (2) of \emph{(double-branching)},
the assignment $f[x \to \neg b]$ has 
a semicompletion $f''$ with $s(f'') \leq s(f) + 1$.
Let $v''$ be the child of $u$ whose label $a''$ has $a''(x) = \neg b$.
We have that $a''$ restricts $a[x \to \neg b]$ 
which restricts $f[x \to \neg b]$;
by \emph{(monotonicity)}, the restriction
$f'' \res \dom(a'')$ has a semicompletion $g'' \in F$
such that $s(g'') \leq s(f'')$, giving the lemma.
\end{proof}

\begin{proof} (Theorem~\ref{thm:delayer-many-leaves})
We refer to a semicompletion of the label of a node 
simply as a semicompletion of the node.
We prove by induction on $i = 0, \ldots, p$ 
that, 
for any node $v$ with semicompletion $f$
having $s(f) = p-i$,
the node $v$ has $2^i$ leaf descendents.
This suffices by the property 
\emph{(semicompletion-of-empty)}
of Definition~\ref{def:delayer-strategy}.

The claim is obvious for $i = 0$, 
so suppose that it is true for $i < p$;
we will prove it true for $i+1$.
We have, by assumption, a semicompletion $f$ of $v$
with $s(f) = p-(i+1) = p-i-1$.
Repeatedly invoke Lemma~\ref{lemma:child}
to obtain a path from $v$ to a leaf
where each vertex has a semicompletion associated with it.
Notice that, in walking along this path starting from $v$
and looking at the semicompletions,
whenever the score increases, it increases by at most $1$.
Since any semicompletion of a leaf must have
score $p$ or higher 
(by the reasoning 
at the beginning of the proof of Lemma~\ref{lemma:child}),
there must be some descendent $u$ of $v$
having semicompletion with score $p-i-1$
such that the the child $u_1$ of $u$ provided 
by Lemma~\ref{lemma:child} has semicompletion with score $p-i$.
By that lemma, the other child $u_2$ of $u$
has semicompletion with score less than or equal to $p-i$.
By repeatedly invoking Lemma~\ref{lemma:child}
to obtain a path from $u_2$ to a leaf,
one finds a descendent $v_2$ of $u_2$ 
having a semicompletion with score $p-i$.
By induction, there are at least $2^{i}$ leaves
below $u_1$, and at least $2^{i}$ leaves
below $v_2$ (and hence below $u_2$).
Therefore, there are at least $2^{i+1}$ leaves below $u$,
and hence below $v$.
\end{proof}
}

%%%%%%%%%%%%%%%%

\section{Separation of the tree-like and general versions of \emph{relaxing QU-res}}
\label{sect:separation-tree-like}

The family of sentences to be studied in this section
is defined as follows.
For each $i \in \{ 0 \} \cup [n]$, define 
$X_i$ to be 
the variable set $\{ x_{i,j,k} ~|~ j, k \in \{ 0, 1 \} \}$,
and for each $i \in [n]$,
define $X'_i$ analogously to be 
the variable set
$\{ x'_{i,j,k} ~|~ j, k \in \{ 0, 1 \} \}$.
Define $\prp_n$ to be the prefix
$\exists X_0 
\exists X'_1 \forall y_1 \exists X_1
\exists X'_2 \forall y_2 \exists X_2
\ldots
\exists X'_n \forall y_n \exists X_n
$.  
Note that, for a set of variables $X$, we use the notation $\exists X$
to represent the existential quantification of the variables in $X$,
in any order (our discussion will always be independent of
any particular order chosen).
For $i \in [n]$, we refer to the variables in
$X'_i \cup \{ y_i \} \cup X_i$ as 
the \emph{level $i$ variables}.

\begin{itemized}

\item Define $B = \{ \neg x_{0,j,k} ~|~ j, k \in \{ 0, 1 \} \}
\cup \{ x_{n,j,0} \vee x_{n,j,1} ~|~ j \in \{ 0, 1 \} \}$.

\item For each $i \in [n]$ and each $j \in \{ 0, 1 \}$
define 
$H_{i,j} = \{ 
\neg x'_{i,0,k} \vee \neg x'_{i,1,l} \vee x_{i-1,j,0} \vee x_{i-1,j,1} 
~|~ k, l \in \{ 0, 1 \} \}$.

Observe that the clause
$\neg x'_{i,0,k} \vee \neg x'_{i,1,l} \vee x_{i-1,j,0} \vee x_{i-1,j,1}$
is logically equivalent to
$(x'_{i,0,k} \wedge x'_{i,1,l}) \rightarrow (x_{i-1,j,0} \vee x_{i-1,j,1})$.

\item For each $i \in [n]$,
define
$T_i = 
\{ \neg x_{i,0,k} \vee y_i \vee x'_{i,0,k} ~|~ k \in \{ 0, 1 \} \} 
\cup
\{ \neg x_{i,1,k} \vee \neg y_i \vee x'_{i,1,k} ~|~ k \in \{ 0, 1 \} \} 
$.

\end{itemized}

Define $\phi_{n}$
to be the
conjunction of the clauses contained in the just-defined sets.
Define $\Phi_n$ as $\prp_n : \phi_n$.
This definition of this family of sentences was inspired partially
by the separating formulas of~\cite{BEGJ00-relative,Ben-SassonImpagliazzoWigderson04-separation}.

Let us explain intuitively what the clauses mandate
and why the sentences $\Phi_n$ are false.
By the clauses in $B$, all of the variables $x_{0,j,k}$
must be set to $0$.  By the clauses in 
the sets $H_{1,j}$, 
either both variables $x'_{1,0,k}$ or both variables
$x'_{1,1,k}$ must be set to $0$.  Once this occurs,
the universal player can set the variable $y_1$
to $0$ or $1$ to force
either both variables $x_{1,0,k}$ or both variables
$x_{1,1,k}$ to $0$ (respectively), via the clauses in $T_1$.
This reasoning can then be repeated;
for instance, at the next level,
either both variables $x'_{2,0,k}$ or both variables
$x'_{2,1,k}$ must be set to $0$,
%(by the clauses in $H_{2,j}$), 
and then after
universal player assigning $y_2$ appropriately,
either both variables $x_{1,0,k}$ or both variables
$x_{1,1,k}$ are forced to $0$.  
In the end, the existential player must violate
one of the two clauses in $B$ concerning level $n$.

\begin{prop}
\label{prop:linear-size-qu-res-proofs}
The sentences $\{ \Phi_n \}_{n \geq 1}$ 
have QU-resolution proofs of size linear in $n$.
\end{prop}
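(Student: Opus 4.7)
The plan is to use the set of clauses appearing in $\phi_n$ as the axiom set (each such clause is trivially an axiom of $\Phi_n$) and to derive the empty clause via a ``level-by-level'' derivation that follows exactly the intuition sketched in the text preceding the proposition. The key invariant is that, after processing level $i$, the two clauses $x_{i-1,0,0} \vee x_{i-1,0,1}$ and $x_{i-1,1,0} \vee x_{i-1,1,1}$ have been derived.

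The core building block is a constant-size subderivation which, given the pair of clauses $\{ x_{i,j,0} \vee x_{i,j,1} : j \in \{0,1\} \}$ together with the axioms in $T_i$ and $H_{i,0} \cup H_{i,1}$, produces the pair $\{ x_{i-1,j,0} \vee x_{i-1,j,1} : j \in \{0,1\} \}$. This subderivation proceeds in two phases. In the first phase, for each $j \in \{0,1\}$, I would resolve $x_{i,j,0} \vee x_{i,j,1}$ successively on $x_{i,j,0}$ and $x_{i,j,1}$ with the two clauses of $T_i$ of the appropriate polarity (the ones containing the literal $y_i$ when $j=0$, and $\neg y_i$ when $j=1$) to obtain a clause of the form $(\pm y_i) \vee x'_{i,j,0} \vee x'_{i,j,1}$; then I apply $\forall$-elimination on $y_i$ to obtain $x'_{i,j,0} \vee x'_{i,j,1}$. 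In the second phase, for each $j \in \{0,1\}$, I resolve the two derived clauses $x'_{i,0,0} \vee x'_{i,0,1}$ and $x'_{i,1,0} \vee x'_{i,1,1}$ against the four axioms in $H_{i,j}$ (one resolution per $x'_{i,*,*}$-literal), leaving the target clause $x_{i-1,j,0} \vee x_{i-1,j,1}$.

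To assemble the full proof, I start from the two axioms $x_{n,0,0} \vee x_{n,0,1}$ and $x_{n,1,0} \vee x_{n,1,1}$ of $B$, apply the subderivation $n$ times to descend from level $n$ to level $0$, and finish by resolving $x_{0,0,0} \vee x_{0,0,1}$ with the unit axioms $\neg x_{0,0,0}$ and $\neg x_{0,0,1}$ from $B$ to obtain the empty clause. Each level contributes a constant number of clauses, so the total proof size is $O(n)$.

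The one nontrivial verification is the legality of the $\forall$-elimination on $y_i$ in the first phase: this requires that the remaining clause's variables all satisfy $\vars(C) \preceq y_i$ in $\prp_n$. This is where I expect the only real care is needed, but it is immediate from the definition of $\prp_n$: after the two resolutions with $T_i$-clauses, the remaining literals lie on the variables $x'_{i,j,0}, x'_{i,j,1} \in X'_i$, and $X'_i$ is quantified immediately before $y_i$ in $\prp_n$, so indeed $x'_{i,j,k} \precneq y_i$. All other derivation steps are plain propositional resolutions whose correctness is routine.
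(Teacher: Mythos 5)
Your proposal is correct and follows essentially the same approach as the paper's proof: an induction from level $n$ down to level $0$ maintaining the pair of clauses $x_{i,j,0} \vee x_{i,j,1}$ as the invariant, with a constant-size per-level subderivation that first resolves against $T_i$ and applies $\forall$-elimination on $y_i$, then resolves against $H_{i,j}$. (Your identification of the polarities of the $T_i$ clauses is also consistent with the definitions, and you correctly verify the $\forall$-elimination side condition $X'_i \preceq y_i$.)
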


Let $n \geq 1$;
we will use the following terminology to discuss $\Phi_n$.

We say that $r$ is a \emph{normal realization} of level $i \in [n]$
if it is an assignment defined on the level $i$ variables
such that, when $b$ is set to $r(y_i)$,
the following hold:
\begin{itemized}

\item 
$0 = r(x_{i,b,0}) = r(x'_{i,b,0}) = r(x_{i,b,1}) = r(x'_{i,b,1})$

\item
$r(x_{i,\neg b,0}) = r(x'_{i,\neg b,0}) \neq 
r(x_{i,\neg b,1}) = r(x'_{i,\neg b,1})$

\end{itemized}

We say that $r$ is a \emph{funny realization} of level $i \in [n]$
if it is an assignment defined on the level $i$ variables
such that, when $b$ is set to $r(y_i)$,
the following hold:
\begin{itemized}

\item
$r(x_{i,b,0}) = r(x'_{i,b,0}) \neq r(x_{i,b,1}) = r(x'_{i,b,1})$

\item 
$0 = r(x'_{i, \neg b, 0}) = r(x'_{i, \neg b, 1})$

\item 
$r(x_{i, \neg b, 0}) \neq r(x_{i, \neg b, 1})$

\end{itemized}

We state two key and straightforwardly verified properties
of realizations in the following proposition.

\begin{prop}
No assignment defined on the level $i$ variables
is both a normal realization and a funny realization.
Also, each normal realization and each funny realization
(of level $i$) satisfies all clauses in $T_i$.
\end{prop}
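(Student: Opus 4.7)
The plan is to unpack the definitions and argue by direct case analysis on $b := r(y_i)$. For both parts, the key observation is that the four literals $y_i$ appearing (implicitly) in the clauses of $T_i$ split these clauses into two groups according to their polarity on $y_i$, and that in each realization exactly one of these groups is automatically satisfied because the $y_i$-literal evaluates to true.

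For the disjointness claim, I would suppose for contradiction that some $r$ is simultaneously a normal and a funny realization of level $i$, and set $b = r(y_i)$. The normal-realization condition forces $r(x_{i,b,0}) = r(x_{i,b,1}) = 0$, whereas the funny-realization condition forces $r(x_{i,b,0}) \neq r(x_{i,b,1})$. These two statements are contradictory, so no such $r$ can exist.

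For the satisfaction claim, I would fix $b = r(y_i)$ and note that the two clauses in $T_i$ whose $y_i$-literal is true under $r$ (namely those with the literal $y_i$ if $b=1$, or those with $\neg y_i$ if $b=0$) are satisfied for free. The remaining two clauses each have the form $\neg x_{i,b,k} \vee x'_{i,b,k}$ (for $k \in \{0,1\}$) once the false $y_i$-literal is removed; equivalently, they require the implication $x_{i,b,k} \to x'_{i,b,k}$. In the normal case the defining equation $r(x_{i,b,k}) = r(x'_{i,b,k}) = 0$ trivially satisfies this implication, and in the funny case the defining equation $r(x_{i,b,k}) = r(x'_{i,b,k})$ again satisfies it (regardless of the common value). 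Hence every clause of $T_i$ is satisfied in both cases.

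There is no real obstacle here: both parts reduce to reading off the definitions of normal and funny realization, splitting on the value of $y_i$, and checking the two surviving implications. The only mild point requiring care is to make sure the polarity conventions in $T_i$ line up with the cases for $b$, but this is handled by the observation above that the group of clauses with a satisfied $y_i$-literal disappears, leaving precisely the constraints $x_{i,b,k} \to x'_{i,b,k}$ to verify.
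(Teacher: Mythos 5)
Your proof is correct: the paper omits any argument for this proposition (it is stated as "straightforwardly verified"), and your direct case analysis on $b = r(y_i)$ — deriving the contradiction $r(x_{i,b,0}) = r(x_{i,b,1}) = 0$ versus $r(x_{i,b,0}) \neq r(x_{i,b,1})$ for disjointness, and reducing the surviving clauses of $T_i$ to the implications $x_{i,b,k} \to x'_{i,b,k}$, which hold since both realizations set $r(x_{i,b,k}) = r(x'_{i,b,k})$ — is exactly the intended verification.
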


We define the set of assignments $F_n$ to be the set
containing all \emph{normal assignments} 
and all \emph{funny assignments}, which we now turn to define.
Let $f$ be a partial assignment of $\Phi_n$.
Let $\ell \geq 0$ denote the maximum level $\ell$
such that $f$ is defined on an $\exists$-variable
in level $\ell$.

We say that $f$ is a \emph{normal assignment}
if the following hold:
\begin{itemized}

\item $f$ is defined on the variables in 
$\{ x_{0,j,k} ~|~ j, k \in \{ 0, 1 \} \}$ 
and equal to $0$ on them.

\item For each $i \in [\ell - 1]$, 
the restriction of $f$ to the level $i$ variables
is a normal realization of level $i$.

\item If $\ell \geq 1$, either the restriction 
of $f$ to the level $\ell$ variables 
is a normal realization of level $\ell$;
or, $f$ is \emph{half-defined} on level $\ell$,
by which is meant that $f$ is not defined on any variables
in $\{ x_{\ell, j, k} ~|~ j, k \in \{ 0, 1 \} \}$,
but is defined on all variables
in $\{ x'_{\ell, j, k} ~|~ j, k \in \{ 0, 1 \} \}$
and has $\sum_{j, k \in \{ 0, 1 \}} x'_{\ell, j, k} = 1$.
\end{itemized}
For each normal assignment $f$, we define $s_n(f) = \ell$.

We say that $f$ is a \emph{funny assignment}
if there exists $m \in [\ell]$
such that the following hold:
\begin{itemized}

\item $f$ is defined on the variables in 
$\{ x_{0,j,k} ~|~ j, k \in \{ 0, 1 \} \}$ 
and equal to $0$ on them.

\item For each $i \in [m-1]$, 
the restriction of $f$ to the level $i$ variables
is a normal realization of level $i$.

\item The restriction of $f$ to the level $m$ variables
is a funny realization of level $m$.

\item For each $i$ with $m < i \leq \ell$ 
and for each $j \in \{ 0, 1 \}$, 
if $f$ is defined on one of the four variables 
in $\{ x_{i,j,k}, x'_{i,j,k} ~|~ k \in \{ 0, 1 \} \}$,
then it is defined on all of them
and $f(x_{i,j,0}) = f(x'_{i,j,0}) \neq f(x_{i,j,1}) = f(x'_{i,j,1})$.
\end{itemized}

\fullversion{
It is straightforward to verify that an assignment
cannot be both normal and funny, and also that,
if an assignment is funny, there exists a unique $m \in [\ell]$
witnessing this.
For each funny assignment $f$, we define $s_n(f) = m$.
We also identify the following properties of funny assignments which will be used.

\begin{prop}
\label{prop:funny-assignment}
Each funny assignment $f$ with $s_n(f) = m$ can be
extended to a funny assignment $f'$ with $s_n(f') = m$
which is defined on all $\exists$-variables.
Moreover, let $g$ be any assignment 
defined on all variables (of $\Phi_n$)
which extends a funny assignment $f'$
defined on all $\exists$-variables;
then, $g$ satisfies all clauses in $\phi_n$.
\end{prop}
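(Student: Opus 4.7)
The proposition has two parts; the first is a straightforward combinatorial extension, and the second is a case analysis over the three clause families of $\phi_n$.

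For the first part, I would extend $f$ by assigning values to every currently undefined $\exists$-variable. Observe that $f$ is already fully defined on every level-$i$ variable for $0 \leq i \leq m$, since level $0$ is pinned to $0$ and a normal or funny realization of level $i$ covers all level-$i$ variables. So the only $\exists$-variables to be set lie at levels $m < i \leq n$. The funny-assignment definition forces an ``all four or none'' behaviour on each block $\{x_{i,j,k}, x'_{i,j,k} : k \in \{0,1\}\}$, and for every block not yet touched I would install $x_{i,j,0} = x'_{i,j,0} = 0$ and $x_{i,j,1} = x'_{i,j,1} = 1$. Checking the four bullets of the definition shows the resulting $f'$ is a funny assignment witnessed by the same $m$, so $s_n(f') = m$.

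For the second part, let $g$ extend $f'$ to all variables of $\Phi_n$. I would verify each family of clauses separately. For $B$: the clauses $\neg x_{0,j,k}$ hold since level $0$ is zero; the clauses $x_{n,j,0} \vee x_{n,j,1}$ hold because the level-$n$ variables form either a funny realization (if $n = m$, which yields $x_{n,j,0} \neq x_{n,j,1}$ for both $j$) or a paired block (if $n > m$, yielding the same). For $T_i$ with $i \leq m$: by the proposition stated immediately after the definitions of the two realizations, any normal or funny realization of level $i$ satisfies every clause in $T_i$. For $T_i$ with $i > m$: the pairing forces $x_{i,j,k} = x'_{i,j,k}$, so each $T_i$ clause is a syntactic tautology.

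For the clauses $\neg x'_{i,0,k} \vee \neg x'_{i,1,l} \vee x_{i-1,j,0} \vee x_{i-1,j,1}$ in $H_{i,j}$, I would split on whether $i \leq m$ or $i > m$. If $i \leq m$, the realization at level $i$ pins an entire pair $\{x'_{i,j^*,0}, x'_{i,j^*,1}\}$ to zero, with $j^* = y_i$ in the normal case and $j^* = \neg y_m$ in the funny case $i = m$; in both cases one of the first two literals of the clause is satisfied for every choice of $(k,l)$. If $i > m$, level $i-1$ is itself either funny (if $i-1 = m$) or paired (if $i-1 > m$), and each of these structures yields $x_{i-1,j,0} \neq x_{i-1,j,1}$, so one of the last two literals is true. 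The main obstacle is bookkeeping the case $i = m$: the zeroed side of the $x'_{m,\cdot,\cdot}$ variables in the funny realization is opposite to the zeroed side in a normal realization, so the argument must verify both possibilities; once this sign flip is tracked, everything else is a routine unfolding of the definitions.
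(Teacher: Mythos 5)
The paper states Proposition~\ref{prop:funny-assignment} without any proof, treating it as a routine verification, so there is no written argument to compare against; what you have supplied is the natural and correct fleshing-out. The extension step is right: $f$ is fully defined on levels $0$ through $m$, the remaining untouched blocks can be filled with the paired pattern $x_{i,j,0} = x'_{i,j,0} \neq x_{i,j,1} = x'_{i,j,1}$, and the witnessing index $m$ is preserved (levels $1,\ldots,m-1$ remain normal realizations, level $m$ remains the unique funny realization, and every level $i > m$ satisfies the fourth bullet with $\ell' = n$). The clause-by-clause verification is also correct, and you handle the genuine subtlety: in a normal realization the zeroed block is $x'_{i,y_i,\cdot}$ while in a funny realization it is $x'_{m,\neg y_m,\cdot}$, but for the $H_{i,j}$ clauses only the existence of some zeroed side is needed, so the sign flip does not matter. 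The argument for $i>m$ via the last two literals of $H_{i,j}$, noting that $x_{i-1,j,0}\neq x_{i-1,j,1}$ for both $j$ whether $i-1=m$ (funny realization, bullets one and three) or $i-1>m$ (pairing), is correct.

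One small wording quibble: for $T_i$ with $i>m$ the clauses are not \emph{syntactic} tautologies; rather, the pairing $x_{i,j,k}=x'_{i,j,k}$ forces the subclause $\neg x_{i,j,k}\vee x'_{i,j,k}$ to be satisfied. Your intent is clear and the reasoning is sound, but stating it as a consequence of the pairing rather than as a tautology would be more precise.
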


\begin{theorem}
\label{thm:pair-satisfies-conditions}
For each $n \geq 1$, the pair
$(F_n, s_n)$ defined above satisfies the conditions
\emph{(semicompletion-of-empty)},
\emph{(monotonicity)},
\emph{($\forall$-branching)}, and
\emph{(double-branching)}
from Definition~\ref{def:delayer-strategy}.
\end{theorem}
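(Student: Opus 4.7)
The plan is to verify each of the four conditions in Definition~\ref{def:delayer-strategy} in turn, treating (semicompletion-of-empty) and ($\forall$-branching) as the routine cases, and devoting the bulk of the work to (monotonicity) and (double-branching).

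\textbf{Condition (semicompletion-of-empty).} I would take $g$ to be the partial assignment with domain $\{x_{0,j,k} : j,k \in \{0,1\}\}$ sending each variable to $0$. This is a normal assignment with $\ell = 0$, so $s_n(g) = 0$, and $\dom(g) \subseteq X_0 \preceq y_1$ with $y_1,\dots,y_n \notin \dom(g)$, so $g$ is a semicompletion of the empty assignment.

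\textbf{Condition (monotonicity).} Given $g \in F_n$ and a restriction $f$ of $g$, I would construct $h$ by truncating $g$ at the earliest $\forall$-frontier variable of $f$: let $y_{i^*}$ be the earliest $y_i$ with $\dom(f) \preceq y_i$ and $y_i \notin \dom(f)$ (or take $h = g$ if none exists), and define $h = g \res \{ v \in \dom(g) : v \preceq y_{i^*}, v \neq y_{i^*}\}$. Verifying $h \in F_n$ splits into two main cases. When $g$ is normal with score $\ell_g$, if $i^* > \ell_g$ then $h = g$; if $i^* \leq \ell_g$, then $h$ inherits the normal realizations of $g$ on levels $1,\ldots,i^*-1$ and restricts level $i^*$ to $X'_{i^*}$. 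The crucial observation is that because $g$'s normal realization at level $i^*$ has $x'_{i^*,b,0}=x'_{i^*,b,1}=0$ for $b=g(y_{i^*})$ and $x'_{i^*,\neg b,0}\neq x'_{i^*,\neg b,1}$, the sum $\sum_{j,k} x'_{i^*,j,k}$ equals exactly $1$, which is precisely the half-defined condition. When $g$ is funny, a parallel argument works with subcases depending on whether $i^*$ is below, at, or above the funny level $m$; for $i^* \leq m$ the truncation is still normal, while for $i^* > m$ it remains funny with the same $m$ (and hence the same score).

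\textbf{Condition ($\forall$-branching).} For $f \in F_n$ and a $\forall$-variable $y_i$ with $\dom(f) \preceq y_i$ and $y_i \notin \dom(f)$, I would show that for each $b \in \{0,1\}$, the assignment $f[y_i \to b]$ has a semicompletion in $F_n$ with the same score. The critical subcase is when $f$ is normal and half-defined at level $\ell = i$: letting $j^*$ denote the unique $j$ with $\sum_k f(x'_{\ell,j,k}) = 1$, if $b = \neg j^*$ then I extend $f[y_\ell\to b]$ to a normal realization at level $\ell$ by setting $x_{\ell,j,k} = f(x'_{\ell,j,k})$, obtaining a normal assignment with $s_n = \ell$; if $b = j^*$, the existing values of $X'_\ell$ satisfy $x'_{\ell,b,0} \neq x'_{\ell,b,1}$ and $x'_{\ell,\neg b,0} = x'_{\ell,\neg b,1} = 0$, which match the funny-realization pattern for $b$, so I extend to a funny assignment with $m = \ell$ and $s_n = \ell$. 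All other subcases are easier and reduce to taking the extension itself (possibly completing a level that was already on the cusp of being fully defined).

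\textbf{Condition (double-branching).} This is the main obstacle. The plan is a case analysis based on the relationship between the variable $v$, the current structure of $f$, and the quantifier prefix. The guiding idea is that a normal $f$ can be extended with score unchanged when setting $v$ to a ``compatible'' value, and otherwise switched to a funny assignment at cost $1$. Concretely, when $v$ lies at the frontier of $f$, I identify the ``preferred'' value $b$: for $v \in X'_{\ell_f + 1}$ the value $b = 0$ preserves the possibility of continuing with a normal half-defined extension (sum $= 1$ achievable either way), and both choices yield a half-defined normal assignment with $s_n = s_n(f)+1$, giving (1) and (2). When $v$ is in $X_i$ or $y_i$ within a region where the funny pair structure can absorb it, I extend $f[v\to b]$ into an already-funny assignment without increasing $m$, so the score is preserved. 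When $f$ is already funny with funny level $m$ and $v$ lands in a yet-undefined pair at some level $i > m$, the Delayer picks $b$ so that the pair can be completed with the pattern $x_{i,j_0,0} = x'_{i,j_0,0} \neq x_{i,j_0,1} = x'_{i,j_0,1}$, leaving $s_n$ at $m$. The delicate cases are where a single probe of a critical $x'_{\ell, j, k}$ forces $f$ to leave its normal pattern; here the Delayer's strategy is to switch to a funny realization at level $\ell$ (using the same structural compatibility exploited in ($\forall$-branching)), at a cost of exactly $1$ to the score, with the symmetric construction working for $\neg b$.

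The hardest technical step will be the bookkeeping in (double-branching): showing that for \emph{every} choice of $v$ the Delayer can identify a value $b$ and a specific normal-or-funny template into which $f[v \to b]$ embeds with the semicompletion constraint respected. I expect this to rest on the two structural propositions about realizations (disjointness and clause-satisfaction) together with the arithmetic condition $\sum_k x'_{\ell,j,k} \in \{0,1\}$ that glues the half-defined case to both the normal and funny realizations.
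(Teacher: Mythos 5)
Your proof follows essentially the same route as the paper's: verify each condition in turn, with (semicompletion-of-empty) and ($\forall$-branching) as routine, (monotonicity) handled by truncating $g$ at a boundary, and (double-branching) as the bulk of the work, split by whether $f$ is normal (full or half-defined at level $\ell$) or funny, and by where $v$ falls. The structural facts you invoke — disjointness of normal and funny realizations, and the arithmetic observation that a normal or funny realization yields $\sum_k x'_{\ell,j,k}=1$, which is what lets a truncated assignment re-enter $F_n$ as half-defined — are exactly what the paper uses.

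Two cosmetic departures: for (monotonicity), the paper cuts at the last variable of $W$ and branches on whether that variable has form $x$, $x'$, or $y$; you cut at the first missing universal variable $y_{i^*}$, which is a somewhat larger but still valid semicompletion. For ($\forall$-branching), you extend $f[y_\ell \to b]$ to a full realization (normal if $b=\neg j^*$, funny if $b=j^*$); the paper just observes that $f[y_i\to b]$ is \emph{already} in $F_n$ with unchanged score, since the half-defined condition places no constraint on $y_\ell$, so no extension is necessary.

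The substantive gap is (double-branching), which is where the theorem's real content lies. Your account of it is explicitly a plan rather than a proof: the final paragraph concedes ``the hardest technical step will be the bookkeeping'' and ``I expect this to rest on\dots''. The paper carries out the bookkeeping in full, in particular giving, for $f$ half-defined at level $\ell$, three disjoint subcases according to whether $v$ lies in $X_\ell$, in $X'_{\ell+1}$, or in $X_{\ell+1}$ and beyond, each with its own completion of level $\ell$ (normal vs.\ funny) followed by an extension into later levels; and separately treating the case where $f$ is a full normal realization at level $\ell$ with $v$ at level $\ell+1$ versus level $\ell+2$ or beyond. Your proposal gestures at these but does not verify that, for every placement of $v$, both $f[v\to b]$ and (when the score rises) $f[v\to\neg b]$ admit semicompletions in $F_n$ with score at most $s(f)+1$. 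To constitute a proof, that case analysis would need to be written out; the remainder of your argument is correct and tracks the paper.
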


\begin{proof}
We verify each of the conditions.

\emph{(semicompletion-of-empty)}:
The normal assignment $f$ defined only on 
$\{ x_{0,j,k} ~|~ j, k \in \{ 0, 1 \} \}$ 
is a semicompletion of the empty assignment
with $s(f) = 0$.

\emph{(monotonicity)}:
Suppose that $g \in F$.  Let $W \subseteq \dom(g)$.
We show that $g \res W$ has a semicompletion $f \in F$
with $s(f) \leq s(g)$. 

If the last variable $v$ in $W$  (according to $\prp_n$)
has the form $x'_{i,j,k}$ or $y_i$,
then set $f = g \res \{ z ~|~ z \preceq v \}$.
Otherwise, the last variable $v$ in $W$
has the form $x_{i,j,k}$,
and we set $f$ to be the restriction of $g$
to the variables in levels $0, \ldots, i$.

It is straightforward to verify that $f \in F$
and that  $s(f) \leq s(g)$.
We briefly indicate how to do so, as follows.
When $g$ is a normal assignment, then $f$ will also be 
a normal assignment.
When $g$ is a funny assignment with $s(g) = m$,
then consider two cases.
If the last variable $v$ in $W$ comes before 
or is equal to $y_m$, the assignment $f$ will be normal.
Otherwise, the assignment $f$ will be a funny assignment
with $s(f) = m$.

\emph{($\forall$-branching)}:
This property is straightforward to verify 
by examining the structure of the definitions
of \emph{normal assignment} and \emph{funny assignment}.
We perform the verification as follows.
Let $f \in F$ and let $y_i \notin \dom(f)$ 
be a universally quantified variable.
We claim that, for each $b \in \{ 0, 1 \}$,
the assignment $f[y_i \to b]$ is a semicompletion of $f$
having the same score as $f$.

When $f$ is a normal assignment,
we have $i \geq \ell$.  
We argue that $f[y_i \to b]$ is a normal assignment.
The assignments $f[y_i \to b]$ and $f$
are equal on variables in levels strictly before level $\ell$.
Also, 
if $f$ on level $\ell$ is a normal realization,
then $i > \ell$ and $f[y_i \to b]$ on level $\ell$
is the same normal realization.
If $f$ is half-defined on level $\ell$,
then $f[y_i \to b]$ is also half-defined on level $\ell$.
Thus, we have that $f[y_i \to b]$ is a normal assignment.
Clearly, 
$s(f[y_i \to b]) = s(f)$.

When $f$ is a funny assignment with $s(f) = m$, we have $i > m$.
In looking at the definition of a funny assignment
with a funny realization at level $m$,
the requirements imposed on the variables coming strictly 
after level $m$ concern only the existentially quantified variables.
Hence $f[y_i \to b]$ is also a funny assignment
with $s(f[y_i \to b]) = m$.

\emph{(double-branching)}:
Suppose that $f \in F$ and that $v \notin \dom(f)$.
We consider two cases.

When $f$ is a funny assignment with $s(f) = m$,
then the variable $v$ must occur in level $m+1$ or later.
If $v$ is a $\forall$-variable,
then set $b$ arbitrarily; we then have that
$g = f[v \to b]$ is a funny assignment with $s(g) = s(f) = m$.
If $v$ is an $\exists$-variable,
then it is of the form $x_{i,j,\ell}$ or $x'_{i,j,\ell}$;
take $g$ to be the either of the two extensions of $f$
defined on 
$\dom(f) \cup \{ x_{i,j,k}, x'_{i,j,k} ~|~ k \in \{ 0, 1 \} \}$
with 
$f(x_{i,j,0}) = f(x'_{i,j,0}) \neq f(x_{i,j,1}) = f(x'_{i,j,1})$.
We have that $g$ is a funny assignment with $s(g) = s(f) = m$.

When $f$ is a normal assignment with $s(f) = \ell$,
the variable $v$ must come after all variables in $\dom(f)$.
We may assume that $v$ is an $\exists$-variable
(otherwise, one may reason as in the case of
the condition \emph{($\forall$-branching)}  
to obtain a semicompletion $g$ with $s(g) = s(f)$.)

First, suppose that 
the restriction of $f$ to level $\ell$ is
a normal realization.
If $v$ is in level $\ell + 1$, 
then both $f[v \to b]$ and $f[v \to \neg b]$
have semicompletions $g$ and $g'$ (respectively)
which are defined on levels $0$ through $\ell + 1$ inclusive
and are equal to funny realizations on level $\ell + 1$.
In this situation, we have $s(g) = s(g') = \ell + 1$.
If $v$ is in level $\ell + 2$ or a later level,
then we can obtain semicompletions $g$ and $g'$
of $f[v \to b]$ and $f[v \to \neg b]$ (respectively)
as follows.
First, extend $f$ to obtain an assignment that is
 equal to an arbitrary 
funny realization on level $\ell + 1$;
then, we may extend the result 
by reasoning as in the previous case
(where $f$ is a funny assignment) to obtain
the desired semicompletions $g$ and $g'$, which 
are funny assignments with $s(g) = s(g') = \ell + 1$.

Next, suppose that $f$ is half-defined on level $\ell$.
\begin{itemize}
\item[(a)] If the variable $v$ has the form $x_{\ell, j, k}$,
then we extend $f$ as follows: set $y_{\ell}$ 
arbitrarily if it is not already defined,
and then extend the result
so that it is either a normal realization
or a funny realization at level $\ell$.
The resulting assignment $g$ is a semicompletion of
$f[v \to g(v)]$ where $s(g) = s(f)$.

\item[(b)] If the variable $v$ has the form
$x'_{\ell + 1, j', k'}$, then take the assignment
from the previous item (a); for each value of $b \in \{ 0, 1 \}$,
this assignment can be extended 
to be defined on the variables $x'_{\ell+1,j,k}$
so that 
it is equal to $b$ on $x'_{\ell + 1, j', k'}$
and $\sum_{j, k \in \{ 0, 1 \}} x'_{\ell+1,j,k} = 1$.
The resulting extensions are semicompletions
of $f[v \to b]$ and $f[v \to \neg b]$ with score $\ell + 1$.

\item[(c)] Otherwise,
 the variable $v$ has the form $x_{\ell+1,j',k'}$
 or occurs at level $\ell+2$ or later.
 We first take the extension $h$ of $f$ that 
is described in item (a); $h$ is defined on
 all variables in level $\ell$.
 If $h$ is equal to a funny realization at level $\ell$,
 then pick $b$ arbitrarily; 
 it is straightforwardly verified that
 there is a funny assignment $g$ that is a semicompletion
 of $h[v \to b]$; this assignment $g$ has $s(g) = \ell = s(f)$.
 If $h$ is equal to a normal realization at level $\ell$,
 then one can straightforwardly verified that
 both $h[v \to b]$ and $h[v \to \neg b]$ have
 semicompletions $g$ and $g'$ (respectively)
 which are funny assignments having
 $s(g) = s(g') = \ell + 1$.

\end{itemize}

\end{proof}

\begin{lemma}
\label{lemma:many-points}
Let $d, n \in \nats$ be such that $2 \leq d \leq 2n$.
Each assignment $f \in F_n \cap H(\Phi_n,d)$
has $s(f) > n - \lceil d/2 \rceil$.
\end{lemma}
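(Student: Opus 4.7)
The plan is to prove the contrapositive. Assume $f \in F_n$ with $s_n(f) = \ell$; I will show that whenever $\ell \leq n - \lceil d/2 \rceil$, every $\Pi_d$-relaxation of $\Phi_n[f]$ is true, so that $f \notin H(\Phi_n, \Pi_d)$. The case that $f$ is funny is immediate: by Proposition~\ref{prop:funny-assignment} a funny $f$ extends to an $\exists$-assignment that satisfies $\phi_n$ on every $\forall$-extension, so $\Phi_n[f]$ is true and every relaxation is true by Proposition~\ref{prop:relaxation}. From here on I assume $f$ is normal.

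Fix a $\Pi_d$-relaxation $\Psi$ and let $p = \lceil d/2 \rceil$ be its number of $\forall$-blocks. For each remaining $\forall$-variable $y_i$, let $c(i) \in \{1, \ldots, p\}$ be the index of the $\forall$-block containing it; for each remaining $X'_i$, let $c'(i)$ be the smallest $\exists$-block index of any variable in $X'_i$. The key claim is: if some $i^* \in \{\ell+1, \ldots, n\}$ satisfies $c'(i^*) \geq c(i^*)$, then $\Psi$ has an $\exists$-model. I construct one explicitly: assign $0$ to every variable of $X'_i, X_i$ for $i < i^*$; at level $i^*$ (where every component of $X'_{i^*}$ sits in a block containing or following $y_{i^*}$, so $y_{i^*}$ is already known) set $x'_{i^*, r(y_{i^*}), 0} = 1$ with other $x'_{i^*, j, k} = 0$, and set $x_{i^*, 0, 0} = x_{i^*, 1, 0} = 1$ with other $x_{i^*, j, k} = 0$; for $i > i^*$ set $x'_{i, 0, 0} = x'_{i, 1, 0} = 1$ and $x_{i, 0, 0} = x_{i, 1, 0} = 1$. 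Clause verification is routine: $B$ at level $n$ holds because $x_{n, j, 0} = 1$ for both $j$; $T_i$ holds in each range either because the $x$-literals are all $0$, or because $x'$ matches the needed $x$, or because the $y_i$-literal is satisfied; and $H_{i, j}$ is trivial below $i^*$ (all $x'$ are $0$), at $i^*$ (only one column of $x'_{i^*}$ is nonzero), and above $i^*$ (both $x_{i-1, j, *}$ contain a $1$).

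If $\Psi$ is false, no such breakable $i^*$ exists, so $c'(i) < c(i)$ for every remaining $y_i$. Since $y_{i-1} \preceq X'_i$ in $\prp_n$ forces $c(i-1) \leq c'(i)$ whenever $y_{i-1}$ remains in $\Psi$, we obtain $c(i-1) < c(i)$ for the corresponding indices, and hence the sequence of $c(i)$'s is strictly increasing. For a normal realization at level $\ell$, $y_\ell$ has been removed by $f$, but $c'(\ell+1) \geq 1$ combined with $c'(\ell+1) < c(\ell+1)$ forces $c(\ell+1) \geq 2$, so the $n - \ell$ values $c(\ell+1) < \cdots < c(n)$ lie in $\{2, \ldots, p\}$, giving $n - \ell \leq p - 1$. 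For a half-defined $f$ at level $\ell$, $y_\ell$ remains in $\Psi$, and the $n - \ell + 1$ strictly increasing values $c(\ell) < c(\ell+1) < \cdots < c(n)$ lie in $\{1, \ldots, p\}$, also giving $n - \ell + 1 \leq p$. Either way $\ell \geq n - \lceil d/2 \rceil + 1$, as desired.

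The hard part will be the clause-by-clause verification of the $\exists$-strategy in the second paragraph, together with checking that its Skolem-function dependencies are legal: the condition $c'(i^*) \geq c(i^*)$ places every variable of $X'_{i^*}$ in a block at or past $c(i^*)$, so branching each such variable's value on $r(y_{i^*})$ is legitimate, but this needs to be made precise at the level of each individual existential variable. A secondary point to check is that the base behavior forced by $f$ (in particular, $x_{\ell, r(y_\ell), 0} = x_{\ell, r(y_\ell), 1} = 0$ from the normal realization at level $\ell$) renders exactly one instance of $H_{\ell+1, j}$ nontrivial, which is compatible with the constant-$0$ behavior of the strategy on levels below $i^*$.
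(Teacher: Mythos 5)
Your proposal is correct and follows essentially the same route as the paper: dismiss the funny case via Proposition~\ref{prop:funny-assignment}, then use a pigeonhole count on the $\Pi_d$ block structure to find a level $i^*$ where $y_{i^*}$ precedes all of $X'_{i^*}$ in the relaxation, and exploit that inversion with a winning $\exists$-strategy that ``goes funny'' at level $i^*$. The only difference is cosmetic: you verify a concrete assignment clause-by-clause (with all-zero filler levels) where the paper routes the verification through normal/funny realizations, and your counting step is more explicit than the paper's.
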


\begin{proof}
Suppose that $f \in F_n \cap H(\Phi_n, d)$.
It cannot be that $f$ is a funny assignment,
as for any funny assignment $f$, 
the QBF $\Phi_n[f]$ is true as a consequence of  
Proposition~\ref{prop:funny-assignment}.
Thus $f$ is a normal assignment.
Suppose, for a contradiction, that
$s(f) \leq n - \lceil d/2 \rceil$.
In this case, $f$ is not defined
on any of the $\exists$-variables in 
the last $\lceil d/2 \rceil$ levels,
that is, $f$ is not defined
on any of the $\exists$-variables in
levels $n-(\lceil d/2 \rceil + 1), \ldots, n-1, n$.
As a consequence, the prefix of $\Phi_n[f]$
is not $\Pi_d$.
Now, consider the relaxation $\Phi' = \prpp:\phi'$
of $\Phi_n[f]$ witnessing
that $f \in H(\Phi_n,d)$.
Since $\prpp$ is $\Pi_d$,
it must hold that, in $\prpp$,
there exists a level 
$m \in \{ n-(\lceil d/2 \rceil + 1), \ldots, n-1, n \}$
such that the variable $y_i$ comes before the variables
in $X'_i$.

We prove that $\Phi'$ is true
(this suffices, as it contradicts $f \in H(\Phi_n,d)$).
We describe an $\exists$-winning strategy for
$\Phi'$, as follows.
After each level is set, the resulting assignment
is in $F_n$.
When it is time to set an $\exists$-variable in level $i$,
first check if it holds that $i = m$ and
no previous level is set to a funny realization.
If these two conditions hold, then level $i = m$
is set to a funny realization.
Otherwise, the variables at level $i$ are set as follows.
\begin{itemize}

\item 
If a previous level 
is set to a funny realization, then the variables in
$X'_i \cup X_i$ are set so that the resulting assignment
remains funny (this can in fact be done without looking
at the value of $y_i$).

\item Otherwise, proceed as follows.  
The variables in $X'_i$ are set so that the sum of their values
is equal to $1$.
The variables in $X_i$ are set so that, at level $i$,
one obtains either a normal or funny realization.

\end{itemize}
This $\exists$-strategy is winning, 
as no matter how the universal player plays,
the end assignment will be a funny assignment
(defined on all variables), 
which satisfies all clauses 
(Proposition~\ref{prop:funny-assignment}).
\end{proof}

By Theorem~\ref{thm:pair-satisfies-conditions}
and Lemma~\ref{lemma:many-points}, in conjunction with
Theorem~\ref{thm:delayer-many-leaves}
and Lemma~\ref{lemma:semicompletion-closure},
we obtain the following result.
}

\confversion{The following result is obtained by
applying the main theorem of the previous section
to the strategies $(F_n, s_n)$.
}

\begin{theorem}
\label{thm:lower-bound-tree-like-relaxing-qu-res}
Tree-like relaxing QU-res
requires proofs of size $\Omega(2^n)$
on the sentences $\{ \Phi_n \}_{n \geq 1}$.
\end{theorem}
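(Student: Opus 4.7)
The plan is to instantiate the prover-delayer machinery of Section~4 using the pair $(F_n, s_n)$ built from the normal and funny assignments defined just before the theorem statement. Concretely, for each parameter value $k$ of relaxing QU-res, I set $d = k+2$ and aim to exhibit a $p$-point delayer strategy against the axiom set $H(\Phi_n, \Pi_d)$ with $p$ linear in $n$; Theorem~\ref{thm:delayer-many-leaves} then forces any tree-like proof to have $2^{p}$ leaves, which yields the $\Omega(2^n)$ bound once $k$ is fixed.

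First I would verify that $(F_n, s_n)$ satisfies the four clauses of Definition~\ref{def:delayer-strategy}. The \emph{(semicompletion-of-empty)} condition is immediate by taking the normal assignment that forces the $x_{0,j,k}$ to $0$. For \emph{(monotonicity)}, given $g \in F_n$ and a set $W \subseteq \dom(g)$, I restrict $g$ to the variables up to the last level of $W$ (or to the prefix of complete levels if the last variable of $W$ is of the form $x_{i,j,k}$), and argue by cases on whether $g$ was normal or funny that the result is again in $F_n$ with score at most $s_n(g)$. For \emph{($\forall$-branching)}, setting a $\forall$-variable $y_i$ that is not yet in $\dom(f)$ preserves both normality and funniness without altering the score, because the definitions only constrain existential variables in later levels. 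For \emph{(double-branching)}, I split on whether $f$ is funny or normal; in the funny case either value of $v$ extends to a funny assignment with the same score, while in the normal case setting $v$ may push the assignment into the next level where both branches admit semicompletions as funny assignments with score at most $s_n(f)+1$, with the half-defined case handled by first completing the current level to either a normal or funny realization.

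Next I would prove the analog of Lemma~\ref{lemma:many-points}: any $f \in F_n \cap H(\Phi_n, \Pi_d)$ has $s_n(f) > n - \lceil d/2 \rceil$. Funny assignments are ruled out because Proposition~\ref{prop:funny-assignment} (applied to its extension to all existential variables) shows $\Phi_n[f]$ is true for any funny $f$, so no false relaxation exists. For a normal $f$ with $s_n(f) \leq n - \lceil d/2 \rceil$, the prefix of $\Phi_n[f]$ still contains at least $\lceil d/2 \rceil$ alternating $y_i$-blocks, so in any $\Pi_d$-relaxation $\Phi'$ at least one $y_m$ must have been moved to the left of its companion $X'_m$. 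Against this $\Phi'$ I build a winning $\exists$-strategy: at each level $i$ play normally while tracking whether any previous level was forced to a funny realization; at level $m$, if no funniness has yet occurred, commit to a funny realization (which is possible since $y_m$ is already fixed by the universal player); after funniness is in place, simply propagate $x_{i,j,0} = x'_{i,j,0} \neq x_{i,j,1} = x'_{i,j,1}$ through the remaining levels. The terminal assignment is funny and hence satisfies $\phi_n$ by Proposition~\ref{prop:funny-assignment}, contradicting $f \in H(\Phi_n, \Pi_d)$.

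With those two ingredients, together with Lemma~\ref{lemma:semicompletion-closure} to confirm that $H(\Phi_n,\Pi_d)$ is semicompletion-closed, Theorem~\ref{thm:delayer-many-leaves} applied with $p = n - \lceil d/2 \rceil + 1$ gives a $2^p$ lower bound on the number of leaves of $G(\pi)$, hence on the size of any tree-like relaxing QU-res refutation. Fixing $k$ makes $d = k+2$ constant and yields the claimed $\Omega(2^n)$ bound. I expect the main obstacle to be the \emph{(double-branching)} verification in the half-defined case: the interaction between the forced half-definition on $X'_\ell$ and the position of $v$ relative to $y_\ell$, $X_\ell$, and the next level produces several subcases where one must carefully argue that some valid score-at-most-$s_n(f)+1$ semicompletion (normal or funny) always exists for both branches $v = 0$ and $v = 1$.
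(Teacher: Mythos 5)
Your proposal follows the paper's own argument essentially step for step: verify that $(F_n,s_n)$ satisfies the four structural conditions of Definition~\ref{def:delayer-strategy} (the paper's Theorem~\ref{thm:pair-satisfies-conditions}), establish the \emph{(all-points)} bound $s_n(f) > n - \lceil d/2 \rceil$ for $f \in F_n \cap H(\Phi_n,\Pi_d)$ by ruling out funny assignments via Proposition~\ref{prop:funny-assignment} and constructing a winning $\exists$-strategy that goes funny at the level where some $y_m$ precedes $X'_m$ in the relaxation (the paper's Lemma~\ref{lemma:many-points}), and then invoke Theorem~\ref{thm:delayer-many-leaves} together with Lemma~\ref{lemma:semicompletion-closure} with $p = n - \lceil d/2 \rceil + 1$. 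The approach and all key ingredients coincide with the paper's proof, so the proposal is correct as outlined.
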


%%%%%%%%%%%%%%%%%%%%%

\section{Lower bound for \emph{relaxing QU-res}}
\label{sect:lower-bound-relaxing-qu-res}

We define a family of QBCs, to be studied in this section,
as follows.
Let $n \geq 1$.
Define $\prp_n$ to be the quantifier prefix
$\exists x_1 \forall y_1 \ldots \exists x_n \forall y_n$.
Define $\phi_{n,j}$
to be
true if and only if
$j + \sum_{i=1}^n (x_i + y_i) \not\equiv n \pmod 3$.
Define $\Phi_{n}$ to be the sentence
$\prp_n : \phi_{n,0}$; these are the sentences that will be used
to prove the lower bound.
It is straightforward to verify that $\phi_n$
can be represented as a circuit of size polynomial in $n$,
and we assume that $\phi_n$ is so represented.
\fullversion{We will also make use of the QBCs defined by
$\Phi_{n,j} = \prp_n : \phi_{n,j}$.}

\begin{prop}
For each $n \geq 1$, the sentence $\Phi_n$ is false.
\end{prop}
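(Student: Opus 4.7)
The plan is to exhibit a winning $\forall$-strategy $\tau$ for $\Phi_n$ directly; by the proposition stated in the preliminaries, this suffices to conclude that $\Phi_n$ is false. The strategy will be the simplest possible: for each $i \in [n]$, define $\tau_{y_i} : [\{x_i\} \to \{0,1\}] \to \{0,1\}$ to send an assignment with value $b$ on $x_i$ to $1 - b$. Intuitively, the universal player plays $y_i$ as the complement of $x_i$.

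The verification is immediate. Fix any $\sigma : \{x_1,\ldots,x_n\} \to \{0,1\}$ played by the existential player, and let $f = \langle \tau, \sigma \rangle$. By construction, $f(x_i) + f(y_i) = \sigma(x_i) + (1 - \sigma(x_i)) = 1$ for every $i$, so as integers
\[
0 + \sum_{i=1}^n (f(x_i) + f(y_i)) \;=\; n \;\equiv\; n \pmod{3}.
\]
Hence the condition $0 + \sum_{i=1}^n (x_i + y_i) \not\equiv n \pmod 3$ defining $\phi_{n,0}$ fails under $f$, so $f$ falsifies $\phi_{n,0}$. Since this holds for every $\sigma$, the sequence $\tau = (\tau_{y_i})_{i=1}^n$ is a winning $\forall$-strategy, and therefore $\Phi_n$ is false.

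There is no serious obstacle here; the only thing to check is that $1 - \sigma(x_i) \in \{0,1\}$, which is trivial. The content of the construction lies not in the falsity of $\Phi_n = \Phi_{n,0}$ (which is easy, as shown) but in the behaviour of the shifted variants $\Phi_{n,j}$ for $j \not\equiv 0 \pmod 3$, which presumably will be exploited in the subsequent lower bound argument.
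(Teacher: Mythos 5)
Your proof is correct and matches the paper's approach exactly: the paper also notes that the winning $\forall$-strategy is to set $y_i$ to $\neg x_i$, and you have simply spelled out the arithmetic verification that this forces $\sum_{i=1}^n (x_i + y_i) = n \equiv n \pmod 3$. (One pedantic remark: per the paper's formalization, $\tau_{y_i}$ should have domain $[\{x_1,\ldots,x_i\} \to \{0,1\}]$ rather than $[\{x_i\} \to \{0,1\}]$, though of course your strategy simply ignores $x_1,\ldots,x_{i-1}$.)
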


It is straightforward to verify that a winning $\forall$-strategy
is to set the variable $y_i$ to the value $\neg x_i$.

To obtain the lower bound, we show that
for any proof $\pi$, the graph $G(\pi)$
must have exponentially many sinks.
We begin by showing that any assignment
to an initial segment of the $\exists$-variables
can be mapped naturally to a sink.

\begin{lemma}
\label{lemma:mod3game-leaf-agrees}
Let $\pi$ be a relaxing QU-res proof of $\Phi_n$
from an axiom set, and
suppose $t \geq 1$.
Let 
$f: \{ x_1, \ldots, x_{n - \lceil t/2 \rceil} \}
\to \{ 0, 1 \}$ 
be an assignment.
There exists a sink of $G(\pi)$ whose label agrees with $f$.
\end{lemma}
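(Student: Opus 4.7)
The plan is to construct the desired sink by descending through the DAG $G(\pi)$ from the root (which is labelled by the empty clause $\emptyset$). I maintain the invariant that at the current node with label $a$, the assignment $a$ \emph{agrees with $f$} in the sense that $a(v) = f(v)$ for every $v \in \dom(a) \cap \dom(f)$. The empty label at the root trivially satisfies this.

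Given an internal node with label $a$ satisfying the invariant, I use Proposition~\ref{prop:proof-graph} to classify its out-edges and choose how to descend. If the node has a single out-edge of type $(\beta)$, then the child's label is $a[y \to b]$ for some universally quantified variable $y$; since $f$ is defined only on existentially quantified variables, $y \notin \dom(f)$ and the invariant is preserved no matter which value $b$ takes, so I simply follow that edge. If the node has two out-edges of type $(\gamma)$ arising from resolution on a variable $v$, then property $(\gamma)$ guarantees that the two children labels $a_1, a_2$ disagree at $v$. When $v \in \dom(f)$, I follow the unique edge to the child with $a_i(v) = f(v)$; when $v \notin \dom(f)$, either edge suffices. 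In both sub-cases the invariant is preserved, since $a$ and the chosen $a_i$ agree on their common domain (which is contained in $\dom(a)$, where $a$ already agrees with $f$), and at $v$ we have $a_i(v) = f(v)$ by construction whenever $v \in \dom(f)$.

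Because $G(\pi)$ is a finite DAG, iterating this descent must terminate at some vertex with no out-edges, i.e., at a sink. The label of that sink agrees with $f$, which proves the lemma.

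The specific bound $|\dom(f)| = n - \lceil t/2 \rceil$ plays no role in this particular lemma; it will be needed subsequently to argue that distinct choices of $f$ force distinct sinks of $G(\pi)$, which together with a count of the possible $f$'s will yield the desired exponential lower bound. For this reason, I expect no substantive obstacle in the present lemma—the descent is routine and entirely forced once $f$ has been fixed. The real technical effort in the lower bound will be in the companion argument showing injectivity from assignments $f$ to sinks, which must exploit the fact that axioms of $H(\Phi_n, \Pi_t)$ correspond to $\Pi_t$-relaxations and thus cannot ``absorb'' assignments to the last $\lceil t/2 \rceil$ existential levels.
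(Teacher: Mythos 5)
Your proof is correct and takes essentially the same approach as the paper's: descend from the root (labelled by the empty assignment) while maintaining the invariant that the current node's label agrees with $f$, using Proposition~\ref{prop:proof-graph} to pick an out-edge that preserves the invariant, until a sink is reached. You are in fact slightly more careful than the paper's terse argument in explicitly covering the case of a resolution variable $v \notin \dom(f)$, but the underlying reasoning is the same.
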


We next show that each sink must be defined on a
variable that occurs \emph{towards the end} of
the quantifier prefix, made precise as follows.

\begin{lemma}
\label{lemma:sink-label-defined-at-end}
Let $\pi$ be a relaxing QU-res proof of $\Phi_n$
from axiom set $H(\Phi, \Pi_t)$,
where $t \geq 2$ and $n \geq \lceil t/2 \rceil$.
Each sink of $G(\pi)$
has a label $a$ that is defined on 
one of the following variables:
\begin{center}
$x_{n-(\lceil t/2 \rceil - 1)}, y_{n-(\lceil t/2 \rceil - 1)},
\ldots, x_{n-1}, y_{n-1}, x_n, y_n$.
\end{center}
\end{lemma}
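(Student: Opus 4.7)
I would approach this by contradiction. Suppose some sink of $G(\pi)$ has label $a$ not defined on any variable in $\{x_{n-m+1}, y_{n-m+1}, \ldots, x_n, y_n\}$, where $m = \lceil t/2 \rceil$. By property $(\alpha)$ of Proposition~\ref{prop:proof-graph}, the fact that the node is a sink gives $\clause(a) \in H(\Phi_n, \Pi_t)$, so some $\Pi_t$-relaxation $\Psi$ of $\Phi_n[a]$ is false. I will derive a contradiction by constructing an $\exists$-winning strategy for $\Psi$, which forces $\Psi$ to be true.

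The first step is structural: since $\dom(a) \subseteq \{x_1, y_1, \ldots, x_{n-m}, y_{n-m}\}$, all $2m$ ``core'' variables $x_{n-m+1}, y_{n-m+1}, \ldots, x_n, y_n$ appear in $\prp_n[a]$ with their original quantifiers, and in any relaxation $\Psi$ the relaxation constraint preserves the precedence of $y_i$ before $x_j$ whenever $i < j$ among core variables. Let $E^{*}$ denote the last nonempty $\exists$-block of $\Psi$. I claim that any $\forall$-variable placed in a block strictly after $E^{*}$ must be $y_n$: indeed $x_n$ lies in some nonempty $\exists$-block and hence in $E^{*}$ or earlier, so any $y_i$ placed after $E^{*}$ comes after $x_n$ in $\Psi$, and the relaxation constraint then forces $i \geq n$, i.e.\ $i = n$.

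To construct the $\exists$-strategy, set $r = (n - \sum_{v \in \dom(a)} a(v)) \bmod 3$, so that $\phi_n[a]$ is true exactly when the sum of the remaining variables' values is $\not\equiv r \pmod{3}$. The $\exists$-player sets every non-core $\exists$-variable to $0$ and chooses core $\exists$-variables adaptively to steer the cumulative sum modulo $3$. If no $\forall$-variable follows $E^{*}$, then any single $\exists$-variable in $E^{*}$ suffices, since two consecutive integers always cover two distinct residues mod $3$. If $y_n$ follows $E^{*}$, the $\exists$-player aims to make the cumulative sum after $E^{*}$ equal to $r+1 \pmod{3}$, so that neither value of $y_n$ yields a total $\equiv r$; this is immediate when $|E^{*}| \geq 2$ (three possible contribution residues cover any target), and when $|E^{*}| = 1$ the target is propagated backward through the earlier $\exists$-blocks of $\Psi$, using the same ``maximum-index'' argument to bound the sizes of intervening $\forall$-blocks and a pigeonhole argument to locate an earlier $\exists$-block with sufficient flexibility.

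The main obstacle is verifying the backward induction in the case $|E^{*}| = 1$: one must check, block by block, that the cumulative sum can be steered into a specific residue class despite the $\forall$-player's adversarial choices in the intervening blocks. The key quantitative ingredients are that the $m$ core $\exists$-variables are distributed across only $\lfloor t/2 \rfloor \leq m$ $\exists$-blocks of $\Psi$, and that each intermediate $\forall$-block is analogously constrained in which core $y$-variables it can contain by the max-index bound applied iteratively. These bounds suffice to maintain the ``safe residue'' invariant at each step, producing a winning $\exists$-strategy for $\Psi$ and thus contradicting the assumption that $\Psi$ is false.
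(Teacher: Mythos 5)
Your overall plan is the right one (assume a sink label $a$ avoids the tail variables, take the allegedly false $\Pi_t$-relaxation $\Psi$ of $\Phi_n[a]$, and refute it by exhibiting a winning $\exists$-strategy that controls the sum mod $3$), and your easy cases ($|E^{*}|\geq 2$, or no $\forall$-variable after $E^{*}$) are fine. But the proof has a genuine gap exactly where you flag ``the main obstacle'': the case $|E^{*}|=1$ with $y_n$ after $E^{*}$ is never actually proved, and the quantitative ingredient you cite does not suffice as stated. The pigeonhole ``$m$ core $\exists$-variables in at most $\lfloor t/2\rfloor\leq m$ $\exists$-blocks'' locates a block with two $\exists$-variables only when $t$ is odd; for even $t$ one has $\lfloor t/2\rfloor=m$ and the count gives nothing. (The case is still salvageable: when $y_n$ sits after the last $\exists$-block, the block pattern $\exists\forall\cdots\exists\forall$ forces one extra alternation group, so a $\Pi_t$ prefix can hold at most $\lfloor (t-1)/2\rfloor<m$ singleton $\exists$-blocks; and the backward induction then needs the further structural fact that each singleton $\exists$-block $\{x_j\}$ in the chain forces the $\forall$-block immediately preceding it to be exactly $\{y_{j-1}\}$, so that a single $\exists$-bit can always meet the propagated target. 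None of this is in your write-up, and without it the ``safe residue invariant'' is unverified.)

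For comparison, the paper avoids the block-by-block induction entirely. It first reduces to the tail formula $\Phi_{\lceil t/2\rceil,j}$ and observes that since this prefix has $2\lceil t/2\rceil\geq t$ strictly alternating quantifiers starting with $\exists$, any $\Pi_t$-relaxation must place some $y_k$ before $x_k$. The $\exists$-player then sets every other $x_i$ to $\neg y_{i-1}$ (so each such pair contributes exactly $1$) and uses the \emph{pair} $x_k,x_{k+1}$ --- both of which come after $y_{k-1}$ and $y_k$ in the relaxation --- to steer the running sum into a residue class that neither value of the one unpaired variable $y_n$ can fix. Since $x_k+x_{k+1}$ ranges over $\{0,1,2\}$, every target residue is reachable in one shot, which is precisely the flexibility your singleton-block recursion struggles to certify. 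I'd recommend either completing your induction with the two structural facts above or switching to the single-inversion argument.
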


\fullversion{
\begin{proof}
Suppose that there exists a sink of $G(\pi)$ with label $a$
that is not defined on one of the specified variables.
We show that any $\Pi_t$-relaxation 
of $\Phi_{n}[a]$ is true, to obtain a contradiction.
It suffices to prove that, for \emph{any} assignment
$f: \{ x_1, y_1, \ldots, 
x_{n- \lceil t/2 \rceil}, y_{n- \lceil t/2 \rceil} \} \to \{ 0, 1 \}$,
any $\Pi_t$-relaxation of $\Phi_{n}[f]$ is true.
The sentence $\Phi_{n}[f]$ is truth-equivalent to a sentence of the form
$\Phi_{\lceil t/2 \rceil, j}$.
This latter sentence has an even number of variables
which number is
greater than or equal to $t$, and is not $\Pi_t$.
Now consider a $\Pi_t$-relaxation 
$\prp:\phi_{\lceil t/2 \rceil, j}$ of
$\Phi_{\lceil t/2 \rceil, j}$.
We claim that this relaxation 
$\prp:\phi_{\lceil t/2 \rceil, j}$ is true.
Since $\prp$ is a $\Pi_t$-relaxation of the prefix of 
$\Phi_{\lceil t/2 \rceil, j}$,
there exists a variable $x_k$
such that $y_k$ appears to its left in $\prp$.
We describe a winning $\exists$-strategy that witnesses
the truth of the relaxation.
First, consider the case that $k = 1$.
In this case, the variables $x_1$ and $x_2$ can be set
so that $y_1 + x_1 + x_2  \equiv -j \pmod{3}$,
and each other variable $x_i$ can be set to be not equal to
$y_{i-1}$.  Then, no matter how the universal variables are set,
the sum of all of the variables excluding $y_n$
will be $y_1 + x_1 + x_2 + (y_2 + x_3) + \cdots + (y_{n-1} + x_n)
\equiv y_1 + x_1 + x_2 + (n-2) \equiv n+1-j \pmod{3}$.
Then, no matter how $y_n$ is set, the final sum $S$ of the variables
will have $S \not\equiv n-j \pmod{3}$.
In the case that $k \neq 1$,
the variable $x_1$ is set arbitrarily,
and the variables $x_k$ and $x_{k+1}$
are set so that 
$x_1 + y_{k-1} + y_k + x_k + x_{k+1} 
\equiv 1-j \pmod{3}$.
Each other variable $x_i$ is set to be not equal to 
$y_{i-1}$.  
No matter how the universal variables are set,
the sum of all of the variables excluding $y_n$
will be
$x_1 + y_{k-1} + y_k + x_k + x_{k+1} + (n-3) 
\equiv n+1-j \pmod{3}$,
which is sufficient as in the previous case.
%Consider any assignment $h$ to the variables 
%which appear strictly to the left of $x_k$
%in $\prp$.
%
%Let $g$ be the restriction of $h$ to 
%$\{ x_1, y_1, \ldots, x_{k-1}, y_{k-1} \} \cup \{ y_k \}$.
%There exists an extension $g'$ of $g$
%with domain $\{ x_1, y_1, \ldots, x_k, y_k \}$
%that is $j$-bad.  
%By making use of Proposition~\ref{prop:bad},
%it can be verified that 
\end{proof}
}

When $f$ is a partial assignment of $\Phi_n$,
we refer to the elements of
$\{ v ~|~ v \preceq \last(f) \} \setminus \dom(f)$
as \emph{holes}.

\begin{lemma}
\label{lemma:at-most-one-hole}
Let $\pi$ be a relaxing QU-res proof of $\Phi_n$
from an axiom set of the form $H(\Phi_n, \Pi_t)$.
Each sink of $G(\pi)$ has a label $f$ having at most one hole.
\end{lemma}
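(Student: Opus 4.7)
The plan is to argue by contradiction: suppose $f$ labels some sink and has at least two holes, and show that $\Phi_n[f]$ itself is true. Then by the contrapositive of Proposition~\ref{prop:relaxation}, every $\Pi_t$-relaxation of $\Phi_n[f]$ is true, contradicting the fact that $\clause(f) \in H(\Phi_n, \Pi_t)$ (which holds by Proposition~\ref{prop:proof-graph}$(\alpha)$) forces at least one $\Pi_t$-relaxation to be false.

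Write $v^* = \last(f)$, let $H$ be the set of holes of $f$ (so $|H| \geq 2$), and set $c = \sum_{v \in \dom(f)} f(v)$. From the definition of $\prp_n[f]$, the holes appear first as $\exists$-variables with no intervening universal quantifier, followed by the variables strictly after $v^*$ in their original order and original quantifiers. Hence in the game on $\Phi_n[f]$ the $\exists$-player can commit in advance to any residue $\sigma_H \in \{0,1,2\}$ for the sum of the hole values, which is achievable precisely because $|H| \geq 2$. Since $\phi_n[f]$ is satisfied iff the sum of all unfixed variables is $\not\equiv n - c \pmod{3}$, the $\exists$-player's remaining task is to win the subgame on the variables after $v^*$ with forbidden residue $\rho := n - c - \sigma_H \pmod{3}$.

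The key auxiliary fact I would establish is that in the ``$K$-pair'' QBF $\exists x_1 \forall y_1 \cdots \exists x_K \forall y_K$, the $\forall$-player can force $\sum_i (x_i + y_i) \pmod{3}$ to the unique residue $K$ (by the strategy $y_i = \neg x_i$) and to no other residue; equivalently, the $\exists$-player wins this subgame with forbidden residue $\rho$ iff $\rho \not\equiv K \pmod{3}$. This is proved by a backward induction on $k$, tracking the set $F(k,s)$ of residues that $\forall$ can guarantee starting from state (after $k$ pairs, running sum $s$): the recursion
$F(k,s) = \bigl(F(k+1,s) \cup F(k+1,s+1)\bigr) \cap \bigl(F(k+1,s+1) \cup F(k+1,s+2)\bigr)$
collapses, from the terminal $F(K,s) = \{s \bmod 3\}$, to the singleton $F(k,s) = \{s + (K-k) \bmod 3\}$, giving $F(0,0) = \{K \bmod 3\}$ as claimed.

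I would then split on the quantifier of $v^*$. If $v^* = y_\ell$, the subgame is exactly the $K$-pair game for $K = n - \ell$, and the $\exists$-player wins by any choice $\sigma_H \not\equiv \ell - c \pmod{3}$ (two valid residues). If $v^* = x_\ell$, the subgame has an extra leading $\forall y_\ell$ which the $\forall$-player uses to shift the effective forbidden residue of the ensuing $K$-pair game ($K = n - \ell$) by $0$ or $1$, so the $\forall$-player wins iff $\rho \in \{K, K+1\} \pmod{3}$; hence the $\exists$-player wins by the single choice $\sigma_H \equiv \ell - c + 1 \pmod{3}$. In both cases $|H| \geq 2$ guarantees that the required residue of $\sigma_H$ is realizable, so $\Phi_n[f]$ is true. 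I expect the main technical step to be the backward-induction collapse of $F(k,s)$; once that is in hand, the rest is a clean two-case analysis on the quantifier of $v^*$.
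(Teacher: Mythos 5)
Your proof is correct and follows the same strategy as the paper: use the $\geq 2$ holes to realize any desired residue for the hole sum, then observe that for some choice of that residue the residual quantified game is won by the $\exists$-player, hence $\Phi_n[f]$ is true and (via the contrapositive of Proposition~\ref{prop:relaxation}) has no false $\Pi_t$-relaxation, contradicting $\clause(f) \in H(\Phi_n,\Pi_t)$. The paper compresses the residual-game analysis into a single ``straightforward to verify'' clause; you have supplied exactly the backward-induction computation of the $\forall$-forceable residues (the collapse $F(k,s)=\{s+(K-k)\bmod 3\}$) and the two-case split on the quantifier of $\last(f)$ that the paper leaves implicit.
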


\fullversion{
\begin{proof}
Suppose that $f$ has $2$ or more holes.
There exist extensions $f_0$, $f_1$, and $f_2$
defined on $V = \{ v ~|~ v \preceq \last(f) \}$
such that $\sum_{v \in V} f_i(v) \equiv i \pmod{3}$
for each $i = 0, 1, 2$.
It is straightforward to verify that one of
the QBCs $\Phi_n[f_0]$, $\Phi_n[f_1]$, $\Phi_n[f_2]$ 
is true, implying the truth of $\Phi_n[f]$
and contradicting that $\clause(f)$ is an axiom
in $H(\Phi_n, \Pi_t)$.
\end{proof}
}

\begin{theorem}
Suppose that $t \geq 2$ and that $n \geq \lceil t/2 \rceil$.
Let $\pi$ be a QU resolution proof of $\Phi_n$
from the axiom set $H(\Phi_n, \Pi_t)$.
The graph $G(\pi)$ has at least $2^{n - \lceil t/2 \rceil - 1}$
sinks.
\end{theorem}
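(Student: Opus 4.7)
The plan is to use the three preceding lemmas as a counting engine: produce a surjection from a large set of assignments to the sinks of $G(\pi)$, and show that the map is at most $2$-to-$1$.

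First, I would consider the set $\mathcal{A}$ of all assignments $f: \{x_1, \ldots, x_{n-\lceil t/2\rceil}\} \to \{0,1\}$, which has cardinality $2^{n-\lceil t/2\rceil}$. By Lemma~\ref{lemma:mod3game-leaf-agrees}, for each $f \in \mathcal{A}$ there exists at least one sink $s(f)$ of $G(\pi)$ whose label agrees with $f$. This gives a (not necessarily injective) map $s: \mathcal{A} \to \mathrm{Sinks}(G(\pi))$, and it suffices to bound the multiplicity of any sink in the image.

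Next I would analyze how many $f \in \mathcal{A}$ can agree with a single sink label $a$. By Lemma~\ref{lemma:sink-label-defined-at-end}, $a$ is defined on some variable at position $\geq 2(n-\lceil t/2\rceil)+1$ in $\prp_n$, so $\last(a)$ lies at or past $x_{n-\lceil t/2\rceil+1}$. By Lemma~\ref{lemma:at-most-one-hole}, the set of variables $V = \{v : v \preceq \last(a)\} \setminus \dom(a)$ contains at most one element. Since every $x_i$ with $i \leq n-\lceil t/2\rceil$ satisfies $x_i \preceq \last(a)$, the label $a$ is defined on all but at most one variable in $\{x_1, \ldots, x_{n-\lceil t/2\rceil}\}$. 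Consequently, the number of assignments $f \in \mathcal{A}$ that agree with $a$ is at most $2$ (it is $1$ if $a$ is defined on all of these variables, and $2$ if the single hole happens to be one of them).

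Combining these observations, $|\mathrm{Sinks}(G(\pi))| \geq |\mathcal{A}|/2 = 2^{n-\lceil t/2\rceil - 1}$, as desired. The only step that requires care is making sure that the hole guaranteed by Lemma~\ref{lemma:at-most-one-hole} is correctly located relative to the initial $x$-variables; this is essentially arithmetic on positions in the prefix $\prp_n$ and presents no real obstacle given that the three lemmas have already been established.
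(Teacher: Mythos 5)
Your proposal is correct and follows essentially the same argument as the paper: map each of the $2^{n-\lceil t/2\rceil}$ assignments to a sink via Lemma~\ref{lemma:mod3game-leaf-agrees}, use Lemma~\ref{lemma:sink-label-defined-at-end} to conclude that any undefined variable among $x_1,\ldots,x_{n-\lceil t/2\rceil}$ in a sink label is a hole, and then invoke Lemma~\ref{lemma:at-most-one-hole} to bound the multiplicity of the map by $2$. No gaps; your explicit positional check on $\last(a)$ is exactly the point the paper handles implicitly.
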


\begin{proof}
By Lemma~\ref{lemma:mod3game-leaf-agrees},
for each assignment
$f: \{ x_1, \ldots, x_{n - \lceil t/2 \rceil} \}
\to \{ 0, 1 \}$,
there exists a sink $v_f$ of $G(\pi)$ whose 
label agrees with $f$.
For each label $g$ of each sink, any variable
in $\{ x_1, \ldots, x_{n - \lceil t/2 \rceil} \}$
on which $g$ is not defined must be a hole of $g$,
by Lemma~\ref{lemma:sink-label-defined-at-end}.

Fix a mapping taking each such assignment $f$
to such a sink $v_f$.
Since the label of each sink has at most $1$ hole
(by Lemma~\ref{lemma:at-most-one-hole}),
each sink is mapped to by at most two assignments.
Hence the number of sinks must be at least the
number of assignments of the form
$f: \{ x_1, \ldots, x_{n - \lceil t/2 \rceil} \}
\to \{ 0, 1 \}$ divided by two.
%that is, at least
%$2^{n - \lceil t/2 \rceil} / 2$.
\end{proof}

From the previous theorem, we immediately obtain the following.

\begin{theorem}
\label{thm:lower-bound-relaxing-qu-res}
Relaxing QU-res
requires proofs of size $\Omega(2^n)$
on the sentences $\{ \Phi_n \}_{n \geq 1}$.
\end{theorem}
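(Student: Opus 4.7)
The plan is to derive Theorem~\ref{thm:lower-bound-relaxing-qu-res} as an essentially immediate corollary of the previous theorem, by unpacking what the size lower bound means for a proof system ensemble and matching up the parameters.

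First I would recall the parameterization of relaxing QU-res: at level $k$ of the ensemble, a proof of $\Phi$ is a QU-resolution falsity proof from the axiom set $H(\Phi, \Pi_{k+2})$. To apply the previous theorem, I fix an arbitrary $k \in \nats$ and set $t = k+2$. Then for each $n \geq \lceil t/2 \rceil$, every QU-resolution falsity proof $\pi$ of $\Phi_n$ from $H(\Phi_n, \Pi_t)$ has the property that $G(\pi)$ contains at least $2^{n - \lceil t/2 \rceil - 1}$ sinks.

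Next I would observe that the size of a QU-resolution proof (as defined in Definition~\ref{def:qu-resolution-proof}, namely the number of clauses) equals the number of vertices in $G(\pi)$, and is therefore at least the number of sinks of $G(\pi)$. Hence $|\pi| \geq 2^{n - \lceil (k+2)/2 \rceil - 1}$ for all sufficiently large $n$. Since $\lceil (k+2)/2 \rceil$ is a constant depending only on $k$, the function $z_k(n) = 2^{n - \lceil (k+2)/2 \rceil - 1}$ is in $\Omega(2^n)$. Invoking Definition~\ref{def:exp-proofs}, this is precisely the statement that relaxing QU-res requires proofs of size $\Omega(2^n)$ on the family $\{\Phi_n\}_{n \geq 1}$.

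There is no real obstacle here; the theorem is a packaging step. The only thing to be careful about is that the quantifier structure of Definition~\ref{def:exp-proofs} requires a separate witness $z \in \Omega(2^n)$ for each value of $k$, which is exactly what we produce (the constant in the exponent depends on $k$, but that is allowed). I would include a brief remark that the bound on $|\pi|$ holds vacuously for the finitely many $n < \lceil (k+2)/2 \rceil$ after possibly adjusting the multiplicative constant, so the $\Omega(2^n)$ bound applies uniformly in $n$.
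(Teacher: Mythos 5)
Your proposal is correct and matches the paper's own (implicit) derivation: the paper states the theorem as an immediate consequence of the preceding sink-count bound, and your unpacking of the parameter correspondence $t = k+2$, the observation that proof size dominates the number of sinks of $G(\pi)$, and the per-$k$ quantification in Definition~\ref{def:exp-proofs} is exactly the routine packaging step intended.
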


\fullversion{
\section{Discussion}

Beyond the proof systems discussed already in the paper,
another natural way to certify the falsity of a QBF
is by explicitly representing a winning $\forall$-strategy.
Sometimes, the QBF literature refers to 
methods for extracting strategies
from falsity proofs
or by outfitting a solver; this notion is often called
\emph{strategy extraction}.

We can formalize a proof system ensemble based on 
explicit representation of $\forall$-strategies, as follows.
We use the notation in Section~\ref{sect:preliminaries}.
Let $\Phi$ be a QBC, and let $H$ be an axiom set of $\Phi$.
Let us define a \emph{circuit $\forall$-strategy}
to be a sequence of circuits $(C_y)_{y \in Y}$
where each $C_y$ has $|X_{<y}|$ input gates,
which are labelled with the elements of $X_{<y}$.
Such a sequence $(C_y)_{y \in Y}$
 naturally induces a $\forall$-strategy
$(\tau_y)_{y \in Y}$ for $\Phi$.
We say that  $(C_y)_{y \in Y}$
is
a \emph{winning circuit $\forall$-strategy
with respect to $H$}
if for every assignment $\sigma: X \to \{ 0, 1 \}$,
it holds that $\langle \tau, \sigma \rangle$
falsifies a clause in $H$.
This naturally yields a proof system ensemble $(A, r)$,
where $r(k) = k+4$
and $A$ accepts $(k,(\Phi,\pi))$
when the following condition holds:
$\pi$ is a winning circuit $\forall$-strategy for $\Phi$
with respect to $H(\Phi, \Pi_{k+2})$, that is, 
if for each assignment $\sigma: X \to \{ 0, 1 \}$,
there exists a clause $C \in H(\Phi,\Pi_{k+2})$
such that $\langle \tau, \sigma \rangle$ falsifies $C$.
The latter formulation of the condition can be checked
with access to a $\Pi^p_{k+4}$ oracle (equivalently, 
a $\Sigma^p_{k+4}$ oracle).
We call this proof system ensemble \emph{relaxing stratex}.

From a result appearing in previous work~\cite[Section 3.1]{GoultiaevaVanGelderBacchus11-uniform}, 
it can be shown that
winning circuit $\forall$-strategies 
can be efficiently computed from QU-resolution proofs.
This implies the following.
\begin{prop}
(derivable from~\cite[Section 3.1]{GoultiaevaVanGelderBacchus11-uniform})
Relaxing stratex effectively simulates relaxing QU-res.
\end{prop}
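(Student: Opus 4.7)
The plan is to set $f(k) = k$ and to define $g_k(\pi)$ as the result of a strategy extraction procedure applied to the QU-resolution falsity proof $\pi$. Since \emph{relaxing QU-res} at level $k$ and \emph{relaxing stratex} at level $f(k) = k$ both use the axiom set $H(\Phi, \Pi_{k+2})$, matching the axiom sets is automatic, and what is needed is a procedure that, given a QU-resolution proof from an arbitrary axiom set $H$, produces in polynomial time a polynomial-size circuit $\forall$-strategy that, for every $\sigma: X \to \{0,1\}$, forces $\langle \tau, \sigma \rangle$ to falsify some clause in $H$.

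First I would invoke the cited construction of Goultiaeva, Van Gelder, and Bacchus, adapted from Q-resolution to QU-resolution. The circuits $(C_y)_{y \in Y}$ are built by traversing the proof DAG $G(\pi)$: conceptually, given an assignment to the existential variables consumed so far, one maintains a pointer to a vertex of $G(\pi)$ (beginning at the sink labelled by the empty clause) and walks backward toward a source. When the current vertex was produced by $\forall$-elimination of a literal on a variable $y$, the circuit $C_y$ outputs the value that falsifies the eliminated literal. When the current vertex was produced by resolution on a variable $v$, the walk follows the parent whose corresponding label is still consistent with the current assignment; for $\exists$-variables this uses the value assigned by $\sigma$, and for $\forall$-variables it uses a value of $v$ that has already been produced by an earlier $C_{v}$ in the prefix order. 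Each $C_y$ thus depends only on the $\exists$-variables in $X_{<y}$ (after eliminating the internal dependencies on prior $\forall$-variables by substituting their defining subcircuits), and has size polynomial in $|\pi|$; the whole construction is carried out in polynomial time in $|\pi|$.

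Next I would verify correctness: for any $\sigma$, the backward walk starting from the empty-clause sink ends at a source of $G(\pi)$, whose label by Proposition~\ref{prop:proof-graph} corresponds to a clause $C \in H$; by construction of the circuits, the induced $\langle \tau, \sigma \rangle$ agrees with $\assign(C)$ on $\vars(C)$ and so falsifies $C$. Since $H = H(\Phi, \Pi_{k+2})$ in our setting, this is precisely the condition that \emph{relaxing stratex} at level $k$ checks (which, as noted, is decidable with a $\Sigma^p_{k+4}$ oracle). Computability of $f$ is trivial and polynomial-time computability of $g_k$ is immediate from the construction, so the two conditions of Definition~\ref{def:simulates} for effective simulation are met.

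The main obstacle is the QU-specific case of resolution on a $\forall$-variable $v$: in plain Q-resolution this case does not arise, and the natural extraction selects the branch by reading $\sigma(v)$. For QU-resolution one must instead feed in the value of $v$ that was earlier computed by $C_v$, and also argue that this branch is ``active'', i.e.\ its label is still consistent with the play so far. Showing that this choice is well-defined and produces a circuit of polynomial size (avoiding blowups from inlining $C_v$ at many sites) requires sharing the subcircuits $C_v$ across vertices of $G(\pi)$, which is the standard and principal technical point of the Goultiaeva--Van Gelder--Bacchus construction.
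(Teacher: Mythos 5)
Your proposal is correct and follows the same route as the paper, which simply delegates this proposition to the strategy-extraction construction of Goultiaeva, Van Gelder, and Bacchus; your reconstruction of that construction, the observation that both ensembles use the axiom set $H(\Phi,\Pi_{k+2})$ at index $k$ (so $f(k)=k$ suffices), and the remark that the walk terminates at an axiom node of $G(\pi)$ whose label lies in $H$ are exactly the points the paper relies on. One cosmetic note: in the paper's orientation of $G(\pi)$ the axiom nodes are the \emph{sinks} (nodes with no out-edges) and the empty clause is where the walk begins, so your use of ``source'' and ``sink'' is reversed relative to Proposition~\ref{prop:proof-graph}.
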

The QBC family studied in the previous section
had very simple winning $\forall$-strategies
which can clearly be represented by polynomial-size circuits.
We can thus conclude from Theorem~\ref{thm:lower-bound-relaxing-qu-res}
that relaxing QU-res does not simulate relaxing stratex.
The separation between tree-like relaxing QU-res and 
(general)  relaxing QU-res
(Proposition~\ref{prop:linear-size-qu-res-proofs}
and Theorem~\ref{thm:lower-bound-tree-like-relaxing-qu-res})
implies that tree-like relaxing QU-res 
does not simulate relaxing QU-res,
while it is clear that relaxing QU-res simulates 
tree-like relaxing QU-res.
The technical results under discussion can thus be summarized
via a small hierarchy of proof system ensembles:
tree-like relaxing QU-res is simulable by relaxing QU-res, 
but not the other way around;
and, relaxing QU-res is simulable by relaxing stratex,
but not the other way around.
}

%\fullversion{
%\paragraph{Acknowledgements.}
%This work has been partially supported by the Spanish Project
%TIN2013-46181-C2-2-R and the Basque Project GIU12/26 and grant UFI11/45.
%}

\newpage

\bibliographystyle{plain}
%\bibliography{icalp09bib}

%%%\bibliography{/Users/hubiechen/Dropbox/active/writing/hubiebib}
%\bibliography{/Users/hubiec/Dropbox/active/writing/hubiebib}
\bibliography{../../hubiebib}

\begin{thebibliography}{10}

\bibitem{AtseriasFichteThurley11-learning}
Albert Atserias, Johannes~Klaus Fichte, and Marc Thurley.
\newblock Clause-learning algorithms with many restarts and bounded-width
  resolution.
\newblock {\em J. Artif. Intell. Res. {(JAIR)}}, 40:353--373, 2011.

\bibitem{BalabanovJiang11-resolutionskolem}
Valeriy Balabanov and Jie{-}Hong~R. Jiang.
\newblock Resolution proofs and skolem functions in {QBF} evaluation and
  applications.
\newblock In {\em Computer Aided Verification - 23rd International Conference,
  {CAV} 2011, Snowbird, UT, USA, July 14-20, 2011. Proceedings}, pages
  149--164, 2011.

\bibitem{BalabanovWidlJiang14-qbfcomplexities}
Valeriy Balabanov, Magdalena Widl, and Jie{-}Hong~R. Jiang.
\newblock {QBF} resolution systems and their proof complexities.
\newblock In {\em Theory and Applications of Satisfiability Testing - {SAT}
  2014 - 17th International Conference, Held as Part of the Vienna Summer of
  Logic, {VSL} 2014, Vienna, Austria, July 14-17, 2014. Proceedings}, pages
  154--169, 2014.

\bibitem{BeameKautzSabharwal04-clauselearning}
Paul Beame, Henry~A. Kautz, and Ashish Sabharwal.
\newblock Towards understanding and harnessing the potential of clause
  learning.
\newblock {\em J. Artif. Intell. Res. {(JAIR)}}, 22:319--351, 2004.

\bibitem{BeamePitassi98-survey}
Paul Beame and Toniann Pitassi.
\newblock Propositional proof complexity: Past, present and future.
\newblock {\em Bulletin of the EATCS}, 65:66--89, 1998.

\bibitem{Ben-SassonImpagliazzoWigderson04-separation}
Eli Ben{-}Sasson, Russell Impagliazzo, and Avi Wigderson.
\newblock Near optimal separation of tree-like and general resolution.
\newblock {\em Combinatorica}, 24(4):585--603, 2004.

\bibitem{Ben-SassonWigderson01-narrow}
Eli Ben{-}Sasson and Avi Wigderson.
\newblock Short proofs are narrow - resolution made simple.
\newblock {\em J. {ACM}}, 48(2):149--169, 2001.

\bibitem{Benedetti05-skizzo}
Marco Benedetti.
\newblock skizzo: {A} suite to evaluate and certify {QBF}s.
\newblock In {\em Automated Deduction - CADE-20, 20th International Conference
  on Automated Deduction, Tallinn, Estonia, July 22-27, 2005, Proceedings},
  pages 369--376, 2005.

\bibitem{BeyersdorffChewJanota14-unification}
Olaf Beyersdorff, Leroy Chew, and Mikolas Janota.
\newblock On unification of {QBF} resolution-based calculi.
\newblock In {\em Mathematical Foundations of Computer Science 2014 - 39th
  International Symposium, {MFCS} 2014, Budapest, Hungary, August 25-29, 2014.
  Proceedings, Part {II}}, pages 81--93, 2014.

\bibitem{BeyersdorffChewJanota15-qbf-calculi}
Olaf Beyersdorff, Leroy Chew, and Mikol{\'{a}}s Janota.
\newblock Proof complexity of resolution-based {QBF} calculi.
\newblock In {\em 32nd International Symposium on Theoretical Aspects of
  Computer Science, {STACS} 2015, March 4-7, 2015, Garching, Germany}, pages
  76--89, 2015.

\bibitem{BeyersdorffChewSreenivasaiah14-game}
Olaf Beyersdorff, Leroy Chew, and Karteek Sreenivasaiah.
\newblock A game characterisation of tree-like q-resolution size.
\newblock {\em Electronic Colloquium on Computational Complexity {(ECCC)}},
  2014.

\bibitem{BEGJ00-relative}
Maria~Luisa Bonet, Juan~Luis Esteban, Nicola Galesi, and Jan Johannsen.
\newblock On the relative complexity of resolution refinements and cutting
  planes proof systems.
\newblock {\em {SIAM} J. Comput.}, 30(5):1462--1484, 2000.

\bibitem{BuningKarpinskiFlogel95-resolution}
Hans~Kleine B{\"{u}}ning, Marek Karpinski, and Andreas Fl{\"{o}}gel.
\newblock {Resolution for quantified Boolean formulas}.
\newblock {\em Information and Computation}, 117(1):12--18, 1995.

\bibitem{Chen14-beyond-qresolution}
Hubie Chen.
\newblock Beyond q-resolution and prenex form: {A} proof system for quantified
  constraint satisfaction.
\newblock {\em CoRR}, abs/1403.0222, 2014.

\bibitem{Chen14-pc-corr}
Hubie Chen.
\newblock Proof complexity modulo the polynomial hierarchy: Understanding
  alternation as a source of hardness.
\newblock {\em CoRR}, abs/1410.5369, 2014.

\bibitem{CookReckhow74-lengthsofproofs}
Stephen~A. Cook and Robert~A. Reckhow.
\newblock On the lengths of proofs in the propositional calculus (preliminary
  version).
\newblock In {\em Proceedings of the 6th Annual {ACM} Symposium on Theory of
  Computing, April 30 - May 2, 1974, Seattle, Washington, {USA}}, pages
  135--148, 1974.

\bibitem{EglyLonsingWidl13-ldresolution}
Uwe Egly, Florian Lonsing, and Magdalena Widl.
\newblock Long-distance resolution: Proof generation and strategy extraction in
  search-based {QBF} solving.
\newblock In {\em Logic for Programming, Artificial Intelligence, and Reasoning
  - 19th International Conference, LPAR-19, Stellenbosch, South Africa,
  December 14-19, 2013. Proceedings}, pages 291--308, 2013.

\bibitem{VanGelder12-contributions}
Allen~Van Gelder.
\newblock Contributions to the theory of practical quantified boolean formula
  solving.
\newblock In {\em Principles and Practice of Constraint Programming - 18th
  International Conference, {CP} 2012, Qu{\'{e}}bec City, QC, Canada, October
  8-12, 2012. Proceedings}, pages 647--663, 2012.

\bibitem{GoultiaevaVanGelderBacchus11-uniform}
Alexandra Goultiaeva, Allen~Van Gelder, and Fahiem Bacchus.
\newblock A uniform approach for generating proofs and strategies for both true
  and false {QBF} formulas.
\newblock In {\em {IJCAI} 2011, Proceedings of the 22nd International Joint
  Conference on Artificial Intelligence, Barcelona, Catalonia, Spain, July
  16-22, 2011}, pages 546--553, 2011.

\bibitem{Haken85-resolution}
Armin Haken.
\newblock The intractability of resolution.
\newblock {\em Theor. Comput. Sci.}, 39:297--308, 1985.

\bibitem{HeuleSeidlBiere14-unified}
Marijn Heule, Martina Seidl, and Armin Biere.
\newblock A unified proof system for {QBF} preprocessing.
\newblock In {\em Automated Reasoning - 7th International Joint Conference,
  {IJCAR} 2014, Held as Part of the Vienna Summer of Logic, {VSL} 2014, Vienna,
  Austria, July 19-22, 2014. Proceedings}, pages 91--106, 2014.

\bibitem{JanotaGrigoreMarques-Silva13-preprocessing}
Mikol{\'{a}}s Janota, Radu Grigore, and Jo{\~{a}}o Marques{-}Silva.
\newblock On {QBF} proofs and preprocessing.
\newblock In {\em Logic for Programming, Artificial Intelligence, and Reasoning
  - 19th International Conference, LPAR-19, Stellenbosch, South Africa,
  December 14-19, 2013. Proceedings}, pages 473--489, 2013.

\bibitem{JanotaMarquesSilva13-expansions}
Mikol{\'a}s Janota and Joao Marques-Silva.
\newblock On propositional {QBF} expansions and {Q}-resolution.
\newblock In {\em SAT}, pages 67--82, 2013.

\bibitem{PudlakImpagliazzo00-lowerbound}
Pavel Pudl{\'{a}}k and Russell Impagliazzo.
\newblock A lower bound for {DLL} algorithms for \emph{k}-sat (preliminary
  version).
\newblock In {\em Proceedings of the Eleventh Annual {ACM-SIAM} Symposium on
  Discrete Algorithms, January 9-11, 2000, San Francisco, CA, {USA.}}, pages
  128--136, 2000.

\bibitem{SamulowitzBacchus05-usingsat}
Horst Samulowitz and Fahiem Bacchus.
\newblock Using {SAT} in {QBF}.
\newblock In {\em Principles and Practice of Constraint Programming - {CP}
  2005, 11th International Conference, {CP} 2005, Sitges, Spain, October 1-5,
  2005, Proceedings}, pages 578--592, 2005.

\bibitem{Segerlind07-survey}
N.~Segerlind.
\newblock The complexity of propositional proofs.
\newblock {\em Bull. Symbolic Logic}, 13:417--626, 2007.

\bibitem{YuMalik05-validating}
Yinlei Yu and Sharad Malik.
\newblock Validating the result of a quantified boolean formula {(QBF)} solver:
  theory and practice.
\newblock In {\em Proceedings of the 2005 Conference on Asia South Pacific
  Design Automation, {ASP-DAC} 2005, Shanghai, China, January 18-21, 2005},
  pages 1047--1051, 2005.

\end{thebibliography}

\fullversion{
\newpage

\appendix

\section{Proof of Proposition~\ref{prop:polybounded-iff-in-ph}}

\begin{proof}
For the forward direction, let $(A, r)$ be
a polynomially bounded proof system ensemble for $L$.
Let $k$ and $f$ be as in Definition~\ref{def:polybounded}.
Fix $k' = r(k)$.
Let $q$ be a polynomially such that for each string $x$,
it holds that $|f(x)| \leq q(|x|)$.
Membership of a given string $x$ in $L$ 
can be decided by nondeterministically guessing a string $\pi$
of length less than or equal to $q(|x|)$,
and then checking if $A^{S(k')}$ accepts; 
this places $L$ in the polynomial hierarchy.

For the backward direction, suppose that $L$ is in the PH.
Then, there exists a $k' \in \nats$
and a polynomial time algorithm $B$ which may make
oracle calls to $S(k')$ such that
$B^{S(k')}$ accepts a string $(x, \pi)$ 
if and only if $x \in L$.
Define $r$ to map each $n \in \nats$ to $k'$.
The pair $(B, r)$ is readily verified to be a proof system
ensemble which is polynomially bounded
(indeed, 
with respect to the constant polynomial equal everywhere to $1$).
\end{proof}

\section{Proof of Proposition~\ref{prop:relaxing-qu-res-an-ensemble}}
\begin{proof}
Suppose first that $(k, (\Phi, \pi))$
is accepted by $A$.
Then, $\pi$ is a QU-resolution falsity proof of
$\Phi$ from axioms in $H(\Phi, \Pi_{k+2})$;
by Proposition~\ref{prop:relaxations-give-axiom-set},
$H(\Phi, \Pi_{k+2})$ is an axiom set of $\Phi$,
so it follows from Proposition~\ref{prop:soundness-qu-res}
that the QBC $\Phi$ is false.

Suppose that $\Phi = \prp : \phi$
is a false QBC; let $V$ denote its variables.
Let $F$ be the set that contains each assignment
$f: V \to \{ 0, 1 \}$ that falsifies $\phi$.
We have that $\phi$ has the same
satisfying assignments as $\phi' = \bigwedge_{f \in F} \clause(f)$.
Hence the QBC $\Phi' = \prp : \phi'$
is also false.
It is known that there exists a QU-resolution proof $\pi$
of $\Phi'$ ending with the empty clause, 
from axiom set $H_F = \{ \clause(f) ~|~ f \in F \}$;
this follows from the \emph{completeness} of Q-resolution
on clausal QBF.
%, see 
%for example the proof of Theorem 2.1 in~\cite{}.
Since $H_F \subseteq H(\Phi, \Pi_{k+2})$ for each $k$,
it holds that $A$ accepts $(k, (\Phi, \pi))$.
\end{proof}

\section{Proof of Lemma~\ref{lemma:semicompletion-closure}}

\begin{proof}
Suppose that $f \in H(\Phi, \Pi_m)$ and that
$g$ is a semicompletion of $f$.
Suppose that $v \in \dom(g) \setminus \dom(f)$.
Assume that $\dom(f)$ is non-empty.
If $\last(f)$ is a $\forall$-variable,
then by definition of semicompletion,
either $v \precneq \last(f)$;
or, each variable of the quantifier block of 
$\last(f)$ is in $\dom(f)$ and $v$ occurs
in the quantifier block (of $\exists$-variables)
immediately following the block of $\last(f)$.
If $\last(f)$ is an $\exists$-variable,
then by definition of semicompletion,
it holds that $v \preceq \last(f)$.
In each of these cases and also when $\dom(f) = \emptyset$, 
it holds that $v$ is in the first 
quantifier block of $\Phi[f]$,
which block is existentially quantified.

We have thus established that each variable
in $\dom(g) \setminus \dom(f)$ is existentially quantified
in $\Phi[f]$.
Let $\Phi' = \prpp:\phi'$
be a false $\Pi_m$-relaxation of $\Phi[f]$.
Let $\Phi''$ be the sentence obtained from $\Phi$
by replacing each variable $v \in \dom(g) \setminus \dom(f)$
with the constant $g(v)$ in $\phi'$,
and removing each such $v$ (and its accompanying quantifier)
from $\prpp$.
We have that $\Pi''$ is a $\Pi_m$-relaxation of
$\Phi[g]$,
and that the falsity of $\Phi'$ implies the falsity of $\Phi''$.
\end{proof}

\section{Proof of Proposition~\ref{prop:linear-size-qu-res-proofs}}

\begin{proof}
We prove, by induction, that for $c = 0, \ldots, n$,
it holds that, for each $j \in \{ 0, 1 \}$,
the clause $x_{n-c,j,0} \vee x_{n-c,j,1}$
is derivable from $\Phi_n$ by QU-resolution.
For $c = 0$, we have that the two clauses of concern
are contained in $B$.
Suppose that $c \in [n]$ and that the claim is true for $c-1$.
By induction, we have that the two clauses
$D_0 = x_{n-(c-1),0,0} \vee x_{n-(c-1),0,1}$ and
$D_1 = x_{n-(c-1),1,0} \vee x_{n-(c-1),1,1}$
are derivable by QU-resolution.
By resolving the clause $D_0$ with the two clauses
in 
$$
\{ \neg x_{n-(c-1),0,k} \vee \neg y_i \vee x'_{n-(c-1),0,k} ~|~ k \in \{ 0, 1 \} \} \subseteq T_{n-(c-1)}
$$
we derive the clause 
$\neg y_i \vee 
x'_{n-(c-1),0,0} \vee
x'_{n-(c-1),0,1}$;
by applying $\forall$-elimination,
we derive the clause
$E_0 = x'_{n-(c-1),0,0} \vee  x'_{n-(c-1),0,1}$.
Similarly, 
by resolving the clause $D_1$ with the two clauses
in
$$
\{ \neg x_{n-(c-1),1,k} \vee y_i \vee x'_{n-(c-1),1,k} ~|~ k \in \{ 0, 1 \} \} 
\subseteq T_{n-(c-1)}$$
we derive the clause
$y_i \vee x'_{n-(c-1),1,0} \vee x'_{n-(c-1),1,1}$;
by applying $\forall$-elimination,
we derive the clause 
$E_1 = x'_{n-(c-1),1,0} \vee x'_{n-(c-1),1,1}$.
By resolving $E_0$ and $E_1$ with the clauses in
$H_{n-(c-1),0}$, we derive the clause
$x_{n-c,0,0} \vee x_{n-c,0,1}$.
Similarly,
by resolving $E_0$ and $E_1$ with the clauses in
$H_{n-(c-1),1}$, we derive the clause
$x_{n-c,1,0} \vee x_{n-c,1,1}$.  This concludes the proof of the claim.

The empty clause is obtained by resolving the unit clauses
$\{ \neg x_{0,j,k} ~|~ j, k \in \{ 0, 1 \} \} \subseteq B$
with the clause $x_{0,0,0} \vee x_{0,0,1}$,
or with the clause $x_{0,1,0} \vee x_{0,1,1}$.
The resulting proof has linear size, since each step of the induction
requires a constant amount of size.
\end{proof}

\section{Proof of Lemma~\ref{lemma:mod3game-leaf-agrees}}
\begin{proof}
Fix such an assignment $f$.
Since the empty assignment, the label of the root,
agrees with $f$, it suffices to show the following:
each non-leaf node $u$ whose label agrees with $f$,
has an edge to a node which agrees with $f$.
If $u$ has one outgoing edge, then this is clear
by the description in Proposition~\ref{prop:proof-graph}.
If $u$ has two outgoing edges, let $v$ be the variable
described in Proposition~\ref{prop:proof-graph}.
If $v$
is universally quantified, then both of its children agree with $f$;
if $v$ is existentially quantified, then one of the children
must have label $a$ where $a(v) = f(v)$; 
this child's label $a$ then agrees with $f$.
\end{proof}
}

\end{document}